\newif\ifdraft\drafttrue % draft = comments
\makeatletter \@input{texdirectives.tex} \makeatother
\colorlet{darkblue}{RoyalBlue}
\colorlet{darkgreen}{green!45!black}
\definecolor{darkred}{rgb}{0.55, 0.0, 0.0}
\definecolor{darkviolet}{rgb}{0.58, 0.0, 0.83}
\definecolor{lightblue}{rgb}{0.68, 0.85, 0.9}
\newcommand{\xMapsto}[2][]{\ext@arrow 0599{\Mapstofill@}{#1}{#2}}
\def\Mapstofill@{\arrowfill@{\Mapstochar\Relbar}\Relbar\Rightarrow}
\definecolor{propcolor}{HTML}{3F7D31}
\definecolor{mypurple}{HTML}{5B069D}
\definecolor{mistyrose}{rgb}{1.0, 0.89, 0.88}
\definecolor{pastelgray}{rgb}{0.81, 0.81, 0.77}
\newtheoremstyle{theoremthin}% name of the style to be used
{\smallskipamount}% measure of space to leave above the theorem. E.g.: 3pt
{\smallskipamount}% measure of space to leave below the theorem. E.g.: 3pt
{\itshape}% name of font to use in the body of the theorem
{0pt}% measure of space to indent
{\scshape}% name of head font
{.}% punctuation between head and body
{ }% space after theorem head; " " = normal interword space
{}
\theoremstyle{theoremthin}
\newtheorem{theorem}{Theorem}
\newtheorem{lemma}[theorem]{Lemma}
\newtheorem{proposition}[theorem]{Proposition}
\newtheorem{corollary}[theorem]{Corollary}
\newtheorem{definition}{Definition}
\theoremstyle{remark}
\renewenvironment{proof}[1][\proofname]{\par
  %\vspace{-\smallskipamount}% remove the space after the theorem
  \pushQED{\qed}%
  \normalfont
  \topsep0pt \partopsep0pt % no space before
  \trivlist
\item[\hskip\labelsep
  \itshape
  #1\@addpunct{.}]\ignorespaces
}{%
  \popQED\endtrivlist\@endpefalse
  \addvspace{2pt plus 2pt} % some space after
}
\newcommand{\?}{{\ensuremath{\operatorname{\boldsymbol{?}}}}} %unknown type
\newcommand{\pgrad}{\ensuremath{\mathcal{G}}\xspace}
\newcommand{\pnorm}{\ensuremath{\mathcal{N}}\xspace}
\newcommand{\CC}{\ensuremath{\mathrm{CC}_{\omega}}\xspace}
\newcommand{\Coq}{\ensuremath{\mathrm{Coq}}\xspace}
\newcommand{\CIC}{\ensuremath{\mathsf{CIC}}\xspace}
\newcommand{\MLTT}{\ensuremath{\mathsf{MLTT}}\xspace}
\newcommand{\CICs}{\ensuremath{\mathsf{CIC}^{\uparrow}}\xspace}
\newcommand{\ExTT}{{\ensuremath{\mathsf{ExTT}}}\xspace}
\newcommand{\RETT}{\ensuremath{\mathsf{RETT}}\xspace}
\newcommand{\TTOBS}{\ensuremath{\mathsf{TT^{obs}}}\xspace}
\newcommand{\GCIC}{\ensuremath{\mathsf{GCIC}}\xspace}
\newcommand{\GCICG}{\ensuremath{\GCIC^{\pgrad}}\xspace}
\newcommand{\GCICN}{\ensuremath{\GCIC^{\pnorm}}\xspace}
\newcommand{\GCICs}{\ensuremath{\GCIC^{\uparrow}}\xspace}
\newcommand{\CCIC}{\ensuremath{\mathsf{CastCIC}}\xspace}
\newcommand{\CCICG}{\ensuremath{\CCIC^{\pgrad}}\xspace}
\newcommand{\CCICPrec}{\ensuremath{\mathsf{GRIP}}\xspace}
\newcommand{\CCICN}{\ensuremath{\CCIC^{\pnorm}}\xspace}
\newcommand{\CCICs}{\ensuremath{\CCIC^{\uparrow}}\xspace}
\newcommand{\CCICPrecs}{\ensuremath{\mathsf{GRIP}^{\uparrow}}\xspace}
\newcommand{\Agda}[0]{Agda\xspace}
\newcommand{\ie}{\emph{i.e.,}\xspace}
\newcommand{\eg}{\emph{e.g.,}\xspace}
\newcommand{\aka}{a.k.a.\xspace}
\newcommand{\types}{\operatorname{Whnf}_{[]}}
\renewcommand{\H}[0]{\operatorname{Head}}
\renewcommand{\P}{\operatorname{\Pi}}
\renewcommand{\l}{\operatorname{\lambda}}
\newcommand{\conv}{\equiv}
\newcommand{\cum}{\mathop{\iota}}
\newcommand{\coe}{\mathop{\uparrow}\! }
\newcommand{\coeinv}{\mathop{\downarrow} }
\newcommand{\redCCIC}[0]{\leadsto}
\newcommand{\equiprecision}[1]{\sqsupseteq\!\sqsubseteq_{#1}}
\newcommand{\ind}{\mathtt{ind}}
\newcommand{\catch}{\mathtt{catch}}
\newcommand\catchat[1]{\catch_{#1}}
\newcommand{\bool}{\mathbb{B}}
\newcommand{\btrue}{\mathtt{true}}
\newcommand{\bfalse}{\mathtt{false}}
\newcommand{\nat}{\mathbb{N}}
\newcommand{\unit}{\mathtt{unit}}
\newcommand{\unitK}{\ensuremath{\mathtt{()}}}
\newcommand\zeroType[0]{\mathbb{0}}
\newcommand\oneType[0]{\mathbb{1}}
\newcommand\twoType[0]{\mathbb{2}}
\newcommand{\listT}{\mathbb{L}}
\DeclareDocumentCommand\nilK{o}{\mathtt{nil}\IfValueT{#1}{_{#1}}}
\DeclareDocumentCommand\consK{o}{\mathtt{cons}\IfValueT{#1}{_{#1}}}
\newcommand{\append}{\mathop{+ \! +}}
\newcommand{\ascdom}[3]{\langle {#3} {\ \Leftarrow \ } {#2} \rangle\,{#1}}%{\cast{#2}{#3}{#1}}
\newcommand{\cast}[3]{\ascdom{#3}{#1}{#2}}
\newcommand{\castmid}[4]{\ascdom{\!\ascdom{#4}{#3}{#2}}{#2}{#1}}%{\ascdom{#4}{#1}{#3 \Leftarrow #2}}
\newcommand{\err}{\operatorname{\mathtt{err}}}
\newcommand\ccicprecsty{\vdash_{\CCICPrecs}}
\newcommand\ccicprecty{\vdash_{\CCICPrec}}
\newcommand{\ccicty}{\operatorname{:}}
\newcommand{\Utsl}[0]{\square \!\!\!\! \square}%{\mathbb{U}}
\newcommand{\El}[0]{\mathrm{El}}
\newcommand{\univ}[1]{\square_{#1}}
\newcommand{\Elcode}[0]{\text{code}}
\newcommand{\hd}[1]{\mathrm{head}\,#1}
\colorlet{darkred}{red!80!gray}
\def\mathcolor#1#{\@mathcolor{#1}}
\def\@mathcolor#1#2#3{%
	\protect\leavevmode
	\begingroup\color#1{#2}#3%
\endgroup
}
\definecolor{grey}{rgb}{0.5, 0.5, 0.5}
\def\slashedarrowfill@#1#2#3#4#5{%
  $\m@th\thickmuskip0mu\medmuskip\thickmuskip\thinmuskip\thickmuskip
  \relax#5#1\mkern-7mu%
  \cleaders\hbox{$#5\mkern-2mu#2\mkern-2mu$}\hfill
  \mathclap{#3}\mathclap{#2}%
  \cleaders\hbox{$#5\mkern-2mu#2\mkern-2mu$}\hfill
  \mkern-7mu#4$%
}
\def\rightslashedarrowfill@{%
  \slashedarrowfill@\relbar\relbar\mapstochar\rightarrow}
\newcommand\xslashedrightarrow[2][]{%
  \ext@arrow 0055{\rightslashedarrowfill@}{#1}{#2}}
\newcommand{\boxedrule}[1]{\flushleft \fbox{#1} \vspace{-1em}}
\newcommand{\eppair}[0]{ep-pair\xspace}
\newcommand{\eppairs}[0]{\eppair{}s\xspace}
\newcommand\preceppair{{\sqsubseteq}\text{-}\texttt{\eppair}}
\newcommand{\upcast}[0]{\Uparrow}
\newcommand{\downcast}[0]{\Downarrow}
\newif\ifappendix\appendixtrue
\newif\ifleftovers\leftoversfalse
\DeclareDocumentCommand \inferrule { s O{} m m}{%
	\IfBooleanTF{#1}%
	{%
		\mpr@inferstar[#2]{#3}{#4}%
	}{%
		\mpr@inferrule[#2]{#3}{#4}%
	}%
	\IfValueT{#2}%
	{%
		%\textsc{#5}%
		\my@name@inferrule{#2}%
	}%
}
\NewDocumentCommand \my@name@inferrule { m }{%
	\def\@currentlabelname{\textsc{#1}}%
}
\DeclareDocumentCommand \redruleAux {m m m}{%
	\textsc{#3}: \;\inferrule{}{#1 \redCCIC #2} \my@name@inferrule{#3}
}
\DeclareDocumentCommand \redrule {m m o}{%
	\IfValueTF{#3}
	{\redruleAux{#1}{#2}{#3}\hfill}
	{\inferrule{}{#1 \redCCIC #2}\hfill}
}
\DeclareDocumentCommand \Longredrule {m m o}{%
	\IfValueTF{#3}
	{\textsc{#3 :} \hfill \vspace{-0.5em} \\ \redrule{#1}{#2}}
	{\redrule{#1}{#2}}
}
\newcommand*{\ilabel}[1]{%
  \edef\@currentlabel{#1}% Set target label
  \phantomsection% Correct hyper reference link
  \label{#1}% Print and store label
}
\newcommand{\prop}{\mathbb{P}}
\newcommand{\botProp}{\bot}% _{\prop}}
\newcommand{\topProp}{\top}%_{\prop}}
\newcommand{\BoxProp}{\ensuremath{\mathbb{B}\mathtt{ox}}}
\newcommand{\boxK}{\mathtt{box}}
\newcommand{\exfalso}[2]{\operatorname{\mathrm{exfalso}}_{#2} #1}
\DeclareDocumentCommand\sort{o}{\mathbb{s}\IfValueT{#1}{_{#1}}}
\newcommand{\precision}[2]{\tensor*[_{#1}]{\sqsubseteq}{_{#2}}}% {\prescript{}{#1}{\sqsubseteq}}_{#2}}}
\newcommand{\precisionDiag}[1]{\precision{#1}{#1}}
\newcommand{\precisionType}[1]{\sqsubseteq_{#1}}
\newcommand{\selfprecision}[2]{#1^{\precision{}{#2}}}
\newcommand{\selfprecisionType}[2]{#1^{\precisionType{#2}}}
\newcommand{\monarrow}{\rightarrowtriangle}%\xrightarrow{\precision{}}}
\newcommand{\notprecision}[2]{\tensor*[_{#1}]{\not\sqsubseteq}{_{#2}}}% {\prescript{}{#1}{\sqsubseteq}}_{#2}}}
\newcommand{\notprecisionDiag}[1]{\notprecision{#1}{#1}}
\newcommand{\congTm}[1]{{#1}_{\sqsubseteq}}
\newcommand{\unkUpTm}[1]{{#1}_{\?{}}}
\newcommand{\reflPrec}[1]{{\sqsubseteq}\texttt{-refl}_{#1}}
\newcommand\errMin[0]{\err{}\texttt{-min}}
\newcommand\unkMax[0]{\?\texttt{-max}}
\newcommand\propIrr[0]{\prop\texttt{-irr}}
\newcommand\boxPrecIrr[0]{\texttt{box}^{\sqsubseteq}_{\texttt{irr}}}
\newcommand{\bcatch}{\mathop{\catch_\bool^{[]}}}
\newcommand{\bcatchProp}{\mathop{\catch_\bool^\prop}}
\newcommand{\ncatch}{\mathop{\catch_\nat^{[]}}}
\newcommand{\lcatch}[1]{\mathop{\catch_{\listT~#1}^{[]}}}
\newcommand{\lind}[1]{\mathop{\ind_{\listT~#1}^{[]}}}
\newcommand{\lcatchProp}[1]{\mathop{\catch_{\listT~#1}^\prop}}
\newcommand{\boxcatch}[1]{\mathop{\catch_{\BoxProp~#1}}}
\newcommand{\lrefl}[1]{\left \lfloor{} {#1} \right \rfloor}
\newcommand{\urefl}[1]{\left \lceil{} {#1} \right \rceil}
\newcommand{\spType}[2]{\selfprecisionType{#1}{#2}}
\renewcommand{\sp}[2]{\selfprecision{#1}{#2}}
\newcommand{\prectrans}{\cdot}
\newcommand\upperdecompl[0]{\texttt{upper-decomp}}
\newcommand\ifte[3]{\text{if}~#1~\text{then}~#2~\text{else}~#3}
\newcommand{\transport}[4]{\mathrm{transport}~#1~#2~#3~#4}
\newcommand\len{\mathtt{len}}
\newcommand\sizedlist{\mathtt{Sized}\listT}
\newcommand\sizedlistprec{{\sizedlist_{\precision{}{}}}}
\newcommand\shiftTl[1]{[#1]}
\newcommand\wit[0]{\star}
\newcommand{\Ctx}{\mathcal{C}}
  \newcommand{\shepherd}[2]{#2}
  \newcommand{\shnew}[1]{#1}
  \newcommand{\mynote}[2]{}
  \newcommand{\shepherd}[2]{%{\color{gray}[#1]}
  {\color{cyan!50!black}#2}}
  \newcommand{\shnew}[1]{{\color{cyan!50!black}#1}}
  \newcommand{\mynote}[2]
  {\ifdraft\textcolor{RedOrange}{\fbox{\bfseries\sffamily\scriptsize#1}
      {\small$\blacktriangleright$\textsf{\emph{#2}}$\blacktriangleleft$}}~\fi} 
\newcommand{\et}[1]{\mynote{ET}{#1}}
\definecolor{shadecolor}{gray}{0.93}
\newcommand{\subs}[2]{[#1 / #2]}
\definecolor{defgreen}{rgb}{0,0.6,0}
\crefname{section}{\textsection\!}{\textsection\!}
\crefname{proposition}{Prop.}{Props.}
\begin{document}

\renewcommand{\footnotemark}{\mbox{}}

\title{A Reasonably Gradual  Type Theory}

\author{Kenji Maillard}
\affiliation{%
  \institution{Gallinette Project-Team, Inria}
  \city{Nantes}
  \country{France}
}
\orcid{0000-0001-5554-3203}
\email{kenji.maillard@inria.fr}
\author{Meven Lennon-Bertrand}
\affiliation{%
  \institution{Gallinette Project-Team, Inria}
  \city{Nantes}
  \country{France}
}
\orcid{0000-0002-7079-8826}
\email{meven.lennon-bertrand@inria.fr}
\author{Nicolas Tabareau}
\affiliation{%
  \institution{Gallinette Project-Team, Inria}
  \city{Nantes}
  \country{France}
}
\orcid{0000-0003-3366-2273}
\email{nicolas.tabareau@inria.fr}
\author{\'Eric Tanter}
\affiliation{%
  \institution{PLEIAD Lab, Computer Science Department (DCC), University of Chile}
  \city{Santiago}
  \country{Chile}
}
\orcid{0000-0002-7359-890X}
\email{etanter@dcc.uchile.cl}

\titlenote{This work is partially funded by CONICYT FONDECYT Regular
Project 1190058 and Inria {\'E}quipe Associ{\'e}e GECO.}

\begin{CCSXML}
<ccs2012>
<concept>
<concept_id>10003752.10003790.10011740</concept_id>
<concept_desc>Theory of computation~Type theory</concept_desc>
<concept_significance>500</concept_significance>
</concept>
<concept>
<concept_id>10003752.10010124.10010125.10010130</concept_id>
<concept_desc>Theory of computation~Type structures</concept_desc>
<concept_significance>500</concept_significance>
</concept>
<concept>
<concept_id>10003752.10010124.10010138</concept_id>
<concept_desc>Theory of computation~Program reasoning</concept_desc>
<concept_significance>500</concept_significance>
</concept>
</ccs2012>
\end{CCSXML}

\ccsdesc[500]{Theory of computation~Type theory}
\ccsdesc[500]{Theory of computation~Type structures}
\ccsdesc[500]{Theory of computation~Program reasoning}

\keywords{Gradual typing, proof assistants, dependent types}

\begin{abstract}
Gradualizing the Calculus of Inductive Constructions (CIC) involves dealing with subtle tensions between normalization, graduality, and conservativity with respect to CIC. Recently, GCIC has been proposed as a parametrized gradual type theory that admits three variants, each sacrificing one of these properties. 
For devising a gradual proof assistant based on CIC, normalization and conservativity with respect to CIC are key, but the tension with graduality needs to be addressed. 
Additionally, several challenges remain:
(1)~The presence of two wildcard terms at any type---the error and unknown terms---enables trivial proofs of any theorem, jeopardizing the use of a gradual type theory in a proof assistant;
(2)~Supporting general indexed inductive families, most prominently equality, is an open problem;
(3)~Theoretical accounts of gradual typing and graduality so far do not support handling type mismatches detected during reduction;
(4)~Precision and graduality are external notions not amenable to reasoning within a gradual type theory.
All these issues manifest primally in \CCIC, the cast calculus used to define \GCIC. 
In this work, we present an extension of \CCIC called \CCICPrec. \CCICPrec is a reasonably gradual type theory that addresses the issues above, featuring internal precision and general exception handling.
\shnew{\CCICPrec features an impure (gradual) sort
of types inhabited by errors and unknown terms, and a pure (non-gradual) sort of
strict propositions for consistent reasoning about gradual terms.}
By adopting a novel interpretation of the unknown term that carefully accounts for universe levels, 
\CCICPrec satisfies graduality for a large and well-defined class of terms, 
in addition to being normalizing and a conservative extension of CIC.
Internal precision supports reasoning about graduality within \CCICPrec itself, for instance to characterize gradual exception-handling terms, and supports gradual subset types.
We develop the metatheory of \CCICPrec using a model formalized in \Coq,
and provide a prototype implementation of \CCICPrec in \Agda.
\end{abstract}

\maketitle

\section{Introduction}
\label{sec:introduction}

Extending gradual typing~\cite{siekTaha:sfp2006,siekAl:snapl2015} to dependent types is a challenging endeavor due to the intricacies of type checking and conversion in the presence of imprecision at both the type and term levels.
Early efforts looked at gradualizing specific aspects of a
dependent type system (\eg subset types and
refinements~\cite{lehmannTanter:popl2017,tanterTabareau:dls2015}, or the
fragment without inductive types~\cite{eremondiAl:icfp2019}). Recently, \citet{lennonAl:toplas2022} studied gradual typing in the context of the Calculus of Inductive Constructions (\CIC), the theory at the core of many proof assistants such as Coq~\cite{Coq:manual}.
 % and Agda~\cite{norell:thesis,norell:afp2008} \shnew{(but for the presence of an impredicative sort of propositions)}.
%

\shnew{
\paragraph{Gradual \CIC}
\citet{lennonAl:toplas2022} develop a gradualization of \CIC, called \GCIC. 
For instance, as in simply-typed gradual typing, one can use the unknown type $\?$ to defer some checks to runtime: $(\lambda x:\?.\ x+1)\;v$ is well-typed for any $v$, and may reduce to a runtime error if $v$ is not a natural number.}
\GCIC is a source language, whose semantics is given by elaboration to a dependently-typed cast calculus, called \CCIC. 
\CCIC is an extension of Martin-Löf type theory (\MLTT)~\cite{itt} with (non-indexed)
inductive types, and with exceptions as introduced by
\citet{pedrotTabareau:esop2018}.
For a given type $A$, there are two
exceptional terms, namely $\err_A$ representing runtime type errors, and $\?_A$
representing the {\em unknown term}, which can optimistically stand for any term of type $A$. \shnew{In particular, the unknown type is $\?_{[]}$, where $[]$ denotes the universe (omitting levels for brevity here).}
Additionally, \CCIC features a cast operator $\cast{A}{B}{t}$, which supports treating a
term $t$ of type $A$ as a term of type $B$, without requiring any relation between $A$ and $B$.
\shnew{The above example in \GCIC elaborates to the \CCIC term
$(\lambda x:\?_{[]}. \cast{\?_{[]}}{\nat}{x}+1)\;\cast{V}{\?_{[]}}v$, where $V$ is the type of $v$.
If $v$ is $10$, this term reduces to 11; if $v$ is $\btrue$, the term reduces to $\err_\nat$.
The dependently-typed setting involves a number of peculiarities and complexities, which come from the fact that there are unknown terms at all types, and that gradual computation can happen at the type level as well.
}

\paragraph{Variants of Gradual \CIC}

Crucially, \citet{lennonAl:toplas2022} uncover an inherent tension in the gradualization of \CIC, dubbed the Fire Triangle of Graduality, which states that three fundamentally desirable properties cannot be fully satisfied simultaneously: (1) strong normalization, a property of particular relevance in the context of proof assistants, (2) conservativity with respect to \CIC, namely the ability to faithfully embed the static theory in the gradual theory,
%\footnote{Not to be confused with logical conservativity!} 
and (3) graduality, which guarantees that typing and evaluation are monotone with respect to precision.\footnote{In the gradual typing literature, graduality is first known as the gradual guarantees~\cite{siekAl:snapl2015}; the dynamic aspect thereof was later reformulated by \citet{newAhmed:icfp2018} under a more semantic form, which turns out to be stronger than the dynamic gradual guarantee in the setting of dependent types~\cite{lennonAl:toplas2022}.}

% Intuitively, the cast operator is the identity when $A$ and $B$ are
% convertible, and fails when $A$ and $B$ are incompatible, for instance when $A$ is
% the type of natural number and $B$ is a function type.
% %
% %
% This intuition is made explicit by the notion of

{\em Precision} is an essential notion in gradual typing~\cite{siekAl:snapl2015}, which captures the expected behavior of casts:
when a type $A$ is more precise than $B$, 
written $A \precision{}{} B$, then
casting from $A$ to $B$ does not fail, and doing the roundtrip back to
$A$ is the identity; 
the formal formulation of this property, coined
{\em graduality} by \citet{newAhmed:icfp2018}, is that when 
$A \precision{}{} B$,
the cast operations induce an embedding-projection pair
between $A$ and $B$. Additionally, $\?$ is the least precise type, 
and therefore casting from $A$ to the unknown type $\?$
and back is always the identity.
The maximality of the unknown type is a key element in the tension captured by the Fire
Triangle of Graduality. Indeed, if $\? -> \? \precision{}{} \?$, then by graduality it is possible 
to embed the untyped lambda calculus, and in particular the diverging term 
$\Omega := (\lambda~ x : \?.~x~x)~(\lambda~ x : \?.~x~x)$.

To study different resolutions of the Fire Triangle in a unified framework, 
\citet{lennonAl:toplas2022} develop \GCIC as a {\em parametrized} gradualization of \CIC.
\GCIC admits three variants, each sacrificing one property: 
\GCICG satisfies both conservativity and graduality at the expense of admitting divergence, 
\GCICN dynamically avoids non-termination but this carefulness inevitably leads to some terms violating graduality, and finally, \GCICs restricts the typing relation of \CIC to exclude those non-gradual terms and hence satisfies graduality and termination but does not admit all \CIC terms.
\CCIC is itself parametrized, yielding \CCICG, \CCICN, and \CCICs as dependent cast calculi underlying each of the three \GCIC variants.

\paragraph{Termination and Universe Levels}
In \GCIC, the unknown type is the unknown term at the universe type, $[]$. But due to predicativity in \CIC there is in fact an infinite hierarchy of universes $[]_i$. This means that in \GCIC there is one unknown type per level of the stratification; each $\?_{[]_i}$ is the least precise type among all types at level $i$ and below.
The two \GCIC variants that ensure termination avoid divergence by shifting universe levels either statically (\GCICs) or dynamically (\GCICN). 
 % \begin{mathpar}
 % \inferrule[$->$-\CIC]
 %    {\Gamma \vdash A \ccicty []_{i} \\
 %      \Gamma \vdash B \ccicty []_{i}}
 %    {\Gamma \vdash  A -> B \ccicty []_{i}}
 %    \and
 % \inferrule[$->$-\GCICs]
 %    {\Gamma \vdash A \ccicty []_{i} \\
 %      \Gamma \vdash B \ccicty []_{i}}
 %    {\Gamma \vdash  A -> B \ccicty []_{i+1}}
 %    \ilabel{typing_arrow_gcics}
 %  \end{mathpar}
% Notice how the \nameref{typing_arrow_gcics} rule 
\GCICs restricts the typing rule of the function type compared to vanilla \CIC
by incrementing the universe level of the function type with respect to that of its components. 
% This ensures that self-applications in $\Omega$ (and fixpoint operators) are ill-typed, a key result to prove strong normalization~\cite{lennonAl:toplas2022}. 
%
Its main downside is that it is not a conservative extension of \CIC: due to this modified typing rule, some valid \CIC terms are statically rejected. The prototypical example is that of {\em recursive large elimination}, such as the type of n-ary functions over natural numbers (in \Coq):

\begin{coq}
  Fixpoint nArrow (n : nat) : Type_0 := match n with 0 => nat | S m => nat -> narrow m.
\end{coq}

\noindent The term \coqe{nArrow n} is a type (\ie a term of type $[]_0$), and we have for example
\coqe{nArrow 0 equiv nat} and \coqe{nArrow 2 equiv nat -> nat -> nat}.
The reason this definition is ill-typed in \CICs is that the universe level at
which to define the resulting type is unbounded. Another more practical example
is that of a dependently-typed \coqe{printf} function, whose actual arity
depends on the input string. \shnew{Still, \GCICs captures a large and useful
  fragment of \CIC, which includes most examples of functional programs found in
  predicative System~F and also uses of dependent types where large elimination has a statically-known bound.}

In the context of a gradual proof assistant based on \CIC, the normalizing and conservative variant \GCICN is therefore the most appealing, as it ensures decidability of typing, (weak) canonicity, and supports all existing developments and libraries by virtue of being a conservative extension of \CIC. \GCICN avoids non-termination by introducing a universe shift during reduction, which unfortunately means that some terms break graduality. 
For instance, while \coqe{nArrow} is well-typed in \GCICN, 
the type \coqe{forall (n:nat), nArrow n} does not satisfy the embedding-projection property with respect to any unknown type $\?_{[]_i}$, because the appropriate universe level is not known {\em a priori}.
% (\ie~an imprecise version of a non-failing term can fail).
% \et{need an example here}
% \mb{In the previous paper we had:
% \[(\cast{\?_{[]_0} \Rightarrow \nat \to \nat}{\nat \to \nat}{\l x : \nat . 1 + x})~0\]
% which errors while the cast through $\nat \to \nat$ instead of $\?_{[]_0}$ gives $1$ but it’s much more technical than the level we are at here…}
However, apart from the fact that \GCICN does not satisfy graduality globally, little is known about its gradual properties as its metatheory in this regard has not been developed. In particular, there is no clear characterization of a class of terms for which graduality holds.

% \et{not much connected to the rest:}
% Additionally, none of the \GCIC variants are readily trustworthy as proof assistants because one can always use exceptional terms (either $\err_A$ or $\?_A$) as proof terms.
% making the system inconsistent\et{rather, "only weakly consistent"}.

\paragraph{A Refined Stratification of Precision} 
In this work, we observe that by refining the stratification of
precision we can develop a full account of graduality for an extension
of \CCICN, called \CCICPrec. The key idea is that $\?_{[]_i}$ should be the least precise type among all types at level $i$ and below, {\em except} for dependent function types at level $i$ (which are however still less precise than $\?_{[]_{i+1}}$).
We can precisely characterize problematic terms as those that are not {\em self-precise} (\ie~more precise than themselves). As we will see, for function types, self-precision means monotonicity with respect to precision. A recursive large elimination as in \coqe{nArrow} is not monotone because,
\shnew{
  $\mathtt{nArrow}\; ?_{\nat}$ computes to $\?_{[]_i}$ for some fixed level $i$, but there is
  no $i$ such that $\mathtt{nArrow}\; n \precision{}{} ?_{[]_i}$ uniformly for all $n$.
}
We prove that the dynamic gradual guarantee holds in \CCICPrec for any self-precise context, and that casts between types related by precision induce embedding-projection pairs between self-precise terms.
Therefore, this change in perspective in the interpretation of the unknown type
and the associated notion of precision yields a gradual theory that
conservatively extends \CIC, is normalizing, and satisfies graduality for a large and well-defined class of terms. \shnew{Specifically, we prove that all terms that would be well-typed with a level-shifting dependent product type (as used by \GCICs/\CCICs) can be embedded in \CCICPrec and proven to be self-precise, and hence satisfy graduality. Also, some terms that fall
outside of that fragment can be proven self-precise in \CCICPrec.}

\paragraph{Internalizing Precision, Reasonably}
While we could study graduality for \CCICPrec externally, we observe that we can exploit the expressiveness of the type-theoretic setting to internalize precision and its associated reasoning. In particular this makes it possible to state and prove, within the theory itself, results about (self-)precision and graduality for specific terms. 
\shnew{For such internal reasoning to be reliable, \CCICPrec adopts a two-layer structure, with an impure hierarchy of types for gradual terms, and a pure sort of propositions that can refer to gradual terms and errors, but whose inhabitants cannot use errors or unknown terms. This approach to isolate effects is inspired by prior approaches to soundly reason about effectful programs internally with dependent 
types~\cite{pedrotTabareau:popl2020,swamyAl:popl2016,stumpAl:par2010,kimmellAl:plpv2012,CasinghinoAl:popl2014} (discussed in \cref{sec:related-work}), most notably the Reasonably Exceptional Type Theory \RETT~\cite{pedrotAl:icfp2019}.} \RETT supports consistent reasoning about exceptional terms by featuring
 a layer of possibly exceptional terms, and a separate layer of pure terms in which raising an exception is prohibited. This way, the consistency of the logical layer is guaranteed, while allowing non-trivial interaction with the exceptional layer. Technically, the two layers are defined using two distinct universe hierarchies.

Additionally, internalizing precision requires the gradual type theory to satisfy extensionality principles in order to support the notion of precision as error approximation~\cite{newAhmed:icfp2018}. To this end, \CCICPrec builds upon the observational type theory \TTOBS~\cite{pujet:hal-03367052}.
Based on the seminal work on Observational Type Theory~\cite{altenkirchAl:plpv2007}, 
\TTOBS provides a setoidal equality in a specific universe $\prop$ of definitionally proof-irrelevant propositions. 
This universe of strict propositions, introduced by \citet{gilbert:hal-01859964} and supported in recent versions of \Coq and \Agda, 
makes it possible to define an extensional notion of equality, while trivializing the so-called higher
coherence hell by imposing that any two proofs of a given equality are {\em definitionally} equal.
The resulting theory is arguably much simpler and closer to the current practice of proof assistants than 
cubical type theory~\cite{cubicaltt,cubical-agda}, which is another approach to provide extensional principles with computational content.

A major insight of this work is to realize that we can actually merge the logical universe of \RETT 
used to reason about exceptional terms with the universe $\prop$ of proof-irrelevant propositions in order to 
define an internal notion of precision that is extensional and whose proofs cannot be trivialized with exceptional terms.

\paragraph{Applications of Internal Precision}
Being able to internally reason about the graduality of terms in a theory that is not globally gradual \shnew{is essential for a gradual proof assistant. Because precision semantically accounts for error approximation~\cite{newAhmed:icfp2018}, internal precision provides a useful reasoning principle to certify gradual programs. Just like internal equality enables reasoning using Leibniz equality (\ie~deducing that $P\; b$ holds given both $P\; a$ and $a = b$), internal precision makes it possible to deduce the correctness of a gradual program from the correctness of another: if we have $a \precision{}{} b$ and $P\;a$ for a correctness criterion $P$ that is self-precise and thus monotone, then $P\;b$ holds.} 
\shnew{
For instance, 
consider the following two functions related by precision:
$$
\mathtt{add1} := \lambda x : \nat. x + 1 \quad \precision{}{}\quad
\mathtt{add1?} := \lambda x : \?_{[]}. (\cast{\?_{[]}}{\nat}{x}) + 1
$$
The term $t := \mathtt{map}\,\nat\,\nat\,\mathtt{add1}\, l$ is fully static and hence does not fail, given a non-error list $l : \listT\,\nat$. Now, to show that the term $u := \mathtt{map}\,\?_{[]}\,\nat\,\mathtt{add1?}\, l'$ (where $l'$ is $\cast{\listT\,\nat}{\listT\,\?_{[]}}{l}$) also does not fail, one can either reason directly on the definition of $u$, or one can deduce the property ``for free'' from the fact that $t \precision{}{} u$, which follows from the monotony of $\mathtt{map}$.}\footnote{\shnew{The fact that $\mathtt{map} : \Pi A B : []. (A \to B) \to \listT\,A \to \listT\,B$ is self-precise and hence monotone with respect to all its arguments is proven by simple induction on lists. See the Agda development for details of this example.}}

Additionally, internal precision makes it possible to support gradual subset types, in which a type can be refined by a proposition expressed using precision. Moreover, in the literature, exception handling is never considered when proving graduality because this mechanism inherently allows terms that do not behave monotonically with respect to precision. Internal precision enables us to support exception handling in the impure layer of the type theory, and to consistently reason about the graduality (or not) of exception-handling terms.

\paragraph{Structure of the Article}
We propose \CCICPrec, a novel gradual type theory with internal precision and a
two-layer architecture that enables consistent reasoning about potentially
failing and imprecise gradual programs. \CCICPrec is a
strongly-normalizing extension of \CIC that satisfies graduality for a large and well-defined class of terms.
After a brief informal overview of the main elements of \CCICPrec and their applications (\cref{sec:action}), we formalize \CCICPrec as an extension of \CCIC  with a sort of propositions (\cref{sec:ccic}) and a precision relation for internal reasoning about graduality (\cref{sec:precision}). We present a model of \CCICPrec in \CIC, which
validates its metatheoretical properties (\cref{sec:model}). 
\Cref{sec:impl-deta-agda}~discusses extensions of \CCICPrec and \cref{sec:related-work} reviews related work. % and \cref{sec:conclusion} concludes.
We provide a
\href{https://gitlab.inria.fr/kmaillar/grip-a-reasonably-gradual-type-theory}{\Coq
  formalization of the model and a proof-of-concept implementation in \Agda}
(\href{https://zenodo.org/record/6928465}{artifact after evaluation}).

% We illustrate the benefits of internal precision in terms of reasoning about the graduality of programs within the type theory itself with several examples :
%   establishing the dynamic gradual guarantee for classes of types that
%   are more precise than $\?$, possibly even when relying on exception handling, supporting
%   gradual types indexed by precision\km{what does that mean ?}, \et{... to be completed} 
%   \item  and study its metatheory\km{Do we have
%     anything apart consistency ? Ideally we might want a canonicity statement, and
%     proof of decidability of typechecking. After discussion with Nico: try to
%     defined properly a syntactic model from \CCICPrec to \MLTT+Sprop and
%     conjecture that conversion is not only preserved but reflected to obtain
%     both canonicity and decidability of conversion}.
%   \item 
%   \item We prove that any term of \CCICs can be faithfully embedded within \CCICPrec, along with the self-precision proof of this term, hence shedding light on the relation between the \CCIC variants and \CCICPrec (\cref{sec:ccics})
%   \item We provide \et{describe the Coq and Agda material}
%   \item \km{Do we want to add the proper treatment of cumulativity to the
%       contributions of the paper ?}\et{we can add that as "A minor technical contribution"? (or something like that)}
% \end{itemize}  

\section{A Brief Overview of \CCICPrec}
\label{sec:action}

\CCIC has been introduced by \citet{lennonAl:toplas2022} as a
variant of \CIC with exceptional terms and a cast operator, 
designed to support the source gradual type theory \GCIC. 
Due to the use of conversion for typing in dependently-typed systems, 
\GCIC requires elaboration into \CCIC 
for both its static and dynamic semantics. This elaboration, which introduces casts as necessary to account for imprecision in \GCIC terms, is not the focus of this work; instead, we tackle issues at the level of the design and semantics of the type theory with casts, \CCIC. After a quick refresher on \CCIC, this section introduces the two-layer architecture of \CCICPrec for consistent reasoning about gradual programs, the notion of internal precision and its application to reason about graduality, including in the presence of exception handling, and gradual subset types.

\subsection{Background on \CCIC}
\label{sec:ccic-background}

Technically, \CCIC features an impure hierarchy of universes $[]_i$ (read ``Type'') where one
can freely use unknown terms, noted $\?_A$ for any type $A$, and
errors, noted $\err_A$.
The hierarchy $[]_i$ is explicitly cumulative, meaning that there is a
constructor $\cum : []_i -> []_{i+1}$ that permits to consider a type
at level $i$ as a type at level $i+1$.
\CCIC also features inductive types such as natural numbers (noted $\nat$),
booleans (noted $\bool$) and lists of elements of type $A$ (noted $\listT~A$).
The only difference with the corresponding inductive types in \CIC
is that there are two additional constructors for each inductive
type, one corresponding to errors $\err$ and the other to the unknown term $\?$ at that type.  
Additionally, \CCIC features casts, whose typing rule is
\begin{mathpar}
    \inferrule
    { {\Gamma} \vdash {A} \ccicty {[]_i} \\
      {\Gamma} \vdash {B} \ccicty {[]_i} \\
      {\Gamma} \vdash {t} \ccicty {A}}
    {{\Gamma} \vdash {\cast{A}{B}{t}} \ccicty {B}}
\end{mathpar}
A cast converts any term of type $A$ to a term of type $B$, with no
constraint between $A$ and $B$. This means that a cast propagates
deeper when types are compatible, \eg two function types:
$$
\ascdom{f}{A_1 -> B_1}{A_2 -> B_2}
\quad \redCCIC \quad 
\l y :
A_2. \ascdom{(f~\cast{A_2}{A_1}{y})}{B_1}{B_2}
$$
But when $A$ and $B$ are
not compatible, a cast reduces to an error in $B$, \eg between booleans
and natural numbers, we have 
$
\cast{\bool}{\nat}{\btrue} \redCCIC \err_{\nat}
$. \shnew{Following \citet{pedrotTabareau:esop2018}, both $\?$ and $\err$ behave like
\emph{call-by-name} exceptions.}
\shnew{In particular, this means that $(\lambda x:\nat. 0)\;\err_{\nat} \redCCIC 0$, not $\err_{\nat}$. Also, exceptions can only be caught on positive types such as inductives, not on negative types such as functions. Notably, $\err_{\P x : A . B } \conv \lambda x:A. \err_B$.
}

The main features of \CIC that are absent in \CCIC are an impredicative universe of propositions and a
general notion of indexed inductive types.

\subsection{A Universe for Logical Reasoning}

Directly inspired by the work on the {\em reasonably} exceptional type theory RETT~\cite{pedrotAl:icfp2019},
\CCICPrec features two distinct kind of sorts: the impure hierarchy of
types $[]_i$ of \CCIC, and a pure impredicative sort of definitionally proof-irrelevant propositions $\prop_{}$. While
propositions can be {\em about} gradual terms and errors, they cannot
be themselves inhabited by unknown terms or errors, thereby ensuring
consistent logical reasoning. 
\citet{lennonAl:toplas2022} show that no good notion
of equality can be defined in the impure hierarchy of types because
of an unsolvable tension between canonicity and the reduction of cast on
equality. In \CCICPrec, the absence of imprecision in $\prop$ means the cast operator 
does not need to be defined between propositions, and therefore the tension disappears.

To be able to reason about properties of inductive types in $\prop$,
their elimination principles needs to be extended for predicates in
$\prop$.
However, contrarily to predicates valued in the impure hierarchy of types,
there is no default behavior for errors and $\?$. Thus eliminators in
$\prop$ require additional arguments to deal with those two
exceptional cases, in a way reminiscent of try-catch for exception handling.
For instance, the eliminator for $\bool$ (if-then-else) is given by:
$$
\bcatchProp : \forall (P: \bool -> \prop), P~\btrue -> P~\bfalse ->
P~\err_\bool -> P~\?_\bool -> \forall (b : \bool), P~b
$$
In this logical layer, it becomes possible to reliably prove properties, because 
it is not possible to prove a false result in $\prop$ by means of the unknown (or error) term, contrarily to $[]$. For instance, we can prove
that casting from $\bool$ to  $\nat$ is always an error, stated as 
$
\forall (b : \bool), \cast{\bool}{\nat}{b} = \err_\nat
$. This result is proven by a direct use of reflexivity of equality because
the cast simply reduces to an error.

\subsection{Internal Precision}
\label{sec:internal-prec}

\CCICPrec features internal precision as an heterogeneous relation in the pure logical universe $\prop_{}$, defined
between gradual types and terms of gradual types, as expressed by the typing rules:
  \begin{mathpar}
    \inferrule
    {\Gamma \vdash A,B : []_i}
    {\Gamma{} \vdash A \precisionType{i} B : \prop_{}}
    \and
    \inferrule
    { \Gamma \vdash A,B : []_i \\
    \Gamma{} \vdash t : A \\
      \Gamma{} \vdash u : B}
    {\Gamma{} \vdash t \precision{A}{B} u : \prop_{}}
\end{mathpar}
Because the universe level at which gradual types are defined plays a central
role in the definition of precision, we explicitly annotate type precision 
with the level at which it occurs. 
Note that precision on proofs of propositions is undefined: there is no way 
to be imprecise in the logical layer.

\citet{garciaAl:popl2016} describe a systematic approach to design gradual languages, in which precision follows from the interpretation of gradual types as the set of static types that they denote. For instance, the type $\nat -> \?$ denotes all function types with $\nat$ as domain; this type is deemed more precise than the unknown type $\?$ because the latter denotes any type. Therefore, precision among types coincides with the set inclusion of their denotations. 
Of course, in the context of a stratified hierarchy of types, with full dependency, the situation is more challenging.

% presents a general mechanism to
% automatically derive a notion of precision once the interpretation of
% the unknown type is given in a set theoretic setting, using abstract
% interpretation.
To better reflect the semantics of \CCICN with respect to universe levels during reduction, 
which avoids diverging terms such as $\Omega$ without affecting typing,
% avoid the potential for diverging terms such as $\Omega$, 
in \CCICPrec we adjust the denotation of the unknown type at
universe level $i$, $\?_{[]_i}$, so that it excludes dependent
function types at level $i$.  Consequently, at level $i$, all type
constructors except functions are more precise than $\?_{[]_i}$,
so the following propositions hold 
% yielding the following definition of type precision 
(mentioning only
lists as the prototypical example of inductive types):
%
% Here, we use a similar approach and first define when a type is more precise
% than $\?_{[]_i}$.
% %
% The main point here is that we wish to forbid reentrancy into $\?_{[]_i}$
% from function types, as this is what gives rise to diverging terms such as
% $\Omega$. 
% Therefore the AGT intuition mandates that all type constructors but
% the one for functions should be more precise that $\?_{[]_i}$, so as to 
% avoid non-termination.
% %
% Thus, we simply remove the case of dependent function types from the
% definition of precision, and keep all the others:
%
\begin{mathpar}
    \inferrule
    {}
    { []_i \precisionType{i+1} \?_{[]_{i+1}}}
    \and
    \inferrule
    {}
    { \listT~A \precisionType{i} \? _{[]_{i}} \text{ whenever }  
    A \precisionType{i} \? _{[]_{i}} }
    \and
    % \inferrule
    % { }
    % { \BoxProp~P \precisionType{i} \? _{[]_{i}}}
    % \and
    \inferrule
    % {A \precisionType{i} A }
    {}
    {\cum A \precisionType{i+1} \? _{[]_{i+1}}} \and
        \inferrule
    {}
    { \?_{[]_i} \precisionType{i} \?_{[]_{i}}} 
  \end{mathpar}
In particular, in order to be more precise than the unknown type, a dependent function type needs to be \emph{guarded} by an explicit use of cumulativity with $\cum : []_i -> []_{i+1}$.
This means that we can derive $\cum (\nat -> \nat)
\precisionType{1} \?_{[]_1}$ and $\cum (\?_{[]_0} -> \?_{[]_0})
\precisionType{1} \?_{[]_1}$, but $\nat -> \nat
\not \precisionType{0} \?_{[]_0}$ and $\?_{[]_0} -> \?_{[]_0}
\not \precisionType{0} \?_{[]_0}$.

Once the
definition of precision on the unknown type is fixed, the rest of the
definition is naturally obtained from
congruence/extensional rules.
 We do not detail here the definition of internal term precision (presented in \cref{sec:precision})
but, for instance, precision between two functions 
$f \precision{\forall a, B~a}{\forall a', B'~a'} g$ boils down to pointwise 
precision:
 $
 \forall a\, a',~ a \precision{A}{A'} a' \to f\,a \precision{B~a}{B'~a'} g\,a'
$.
The only remaining subtlety is the definition of term precision 
in the impure sort $[]_i$, as it should be connected to type precision, 
because terms of $[]_i$ are types. 
Precision on types, when seen as terms of the sort $[]_i$, is
the restriction of type precision to types that are
more precise than $\?_{[]_{i}}$, \ie
$
 A \precision{[]_i}{[]_i} B  \text{ corresponds to } A \precisionType{i} B \wedge B
 \precisionType{i} \?_{[]_{i}}
$.

Consequently, \CCICPrec has the global property that $\?_A$ is maximal for \emph{term precision}
of any type $A$, even when $A$ is $[]_i$, but
$\?_{[]_{i}}$ is not maximal for \emph{type precision} at level $i$,
so as to avoid the Fire Triangle, as explained in \cref{sec:introduction}.
Conversely, however, type precision is stable by product formations, \ie in the
non-dependent case if $A \precisionType{i} A'$ and $B \precisionType{i} B'$
then $A \to B \precisionType{i} A' \to B'$. This is not the case for term
precision, again because of the Fire Triangle and
of the maximality of $\?_{[]}$ as a term. 

This design forces certain terms to be non-monotone,
in particular those built using large elimination.
Consider the type-level function \coqe{t_0 := fun b => if b then nat else nat -> nat}. 
We have \coqe{false precise ?_bool}, but we do not have \coqe{t false equiv nat -> nat precise ?_Type0}. We can address the issue in this simple case by posing \coqe{t_1 := fun b => if b then iota nat else iota (nat -> nat)}, which explicitly uses cumulativity, so \coqe{t_1} 
is monotone as a function of type $\bool -> []_1$.
%introducing an explicit use of cumulativity and pose 
Using cumulativity however does not work for {\em recursive} large
elimination as the \coqe{nArrow} function discussed in the
introduction, \shnew{because the appropriate universe level is not
  known statically. While being typable in \CCICPrec, $\Omega$ and similar self-applications
  that would be non-terminating in \CCICG
  are also not self-precise, witnessing their pathological behavior.}
\begin{comment}
In that case, \coqe{nArrow} simply cannot be made monotone.
\et{here we could insist that by level-shifting typing, \coqe{nArrow 2} is rejected, because the type system is compositional, while we can prove that \coqe{nArrow 2} is in fact monotone}
\et{maybe that's too much detail, in which case I would remove the last sentence ("in that case... monotone")}
\end{comment}

Armed with these notions of precision, it becomes possible to axiomatize
directly in $\prop$ the various properties they satisfy and their relation to
casts.
% such as graduality, \ie that $A \precisionType{} B$ implies that
% the casts between $A$ and $B$ form an embedding-projection pair~\cite{newAhmed:icfp2018}.
%
Note that because this axiomatization occurs in the definitionally 
proof-irrelevant universe $\prop$, there is
no need to endow the axioms with any computational meaning:
they just need to be justified by a model to guarantee consistency (\cref{sec:model}).

\subsection{Internal Reasoning about Graduality}
\label{sec:action-dgg}

Graduality~\cite{newAhmed:icfp2018} and the dynamic gradual guarantee (DGG)~\cite{siekAl:snapl2015} are usually established as global properties of a gradual language. However, as mandated by the Fire Triangle of Graduality~\cite{lennonAl:toplas2022}, graduality cannot hold globally in a terminating gradual extension of \CIC. 
While \citet{lennonAl:toplas2022} simply do not attempt to study graduality for \CCICN, the situation of \CCICPrec in this regard is both novel and unique: because precision is an internal notion within a type theory that allows for consistent reasoning, we can account for graduality. We can also exactly state the DGG theorem that holds in \CCICPrec.

%

% However, in order to achieve this property globally, all considered languages in the theoretical literature are devoid of exception handling mechanisms. The reason is that if handling runtime type errors is possible, then there is no reason for an arbitrary term to be monotone with respect to precision.
% : an imprecise term could react to a cast error by returning an arbitrary value, unrelated by precision to the result of a more precise non-failing version of that term. %\et{clear? or put an example?}

% \paragraph{Exception handling and precision}
%
% However, this constitutes a severe limitation on the primitives that can
% be considered in gradual languages. Indeed, by construction, the
% dynamic execution of gradual terms can raise exceptions, in particular
% when there is a cast error. But in most gradual languages, handling those
% exceptions is possible in the source language, and this for a
% very simple reason: 

% We can also internally prove that two terms are related by precision, and state properties about precision.
% In general, internal precision in a reasonably gradual type theory like \CCICPrec 
% means that we can internally prove that certain terms are related by
% precision, state properties of precision, and use this precision
% relation to define gradual subset types.
% %
% An example of gradual subset types has been given in the previous
% section, and we now turn to the general properties of precision.

\paragraph{Dynamic Gradual Guarantee}

In essence, the DGG says that if a term $x$ is more precise than a term $y$, then for any evaluation context $C$, $C~x$ ``error approximates'' $C~y$---meaning that $C~x$ can fail more than $C~y$, but if it does not fail, then both are equivalent. Essentially, this property is about the {\em monotonicity} of contexts with respect to precision.
In our setting, an evaluation context is simply a function from some type $A$ to the type $\bool$ of booleans, so the DGG corresponds to the monotonicity of functions, that is,
$
%\begin{array}{l}
\mathtt{DGG} : \forall (A : []) (C : A -> \bool)
  (x~y : A) , 
    x \precision{A}{A} y -> C~x \precision{\bool}{\bool} C~y.
%\end{array}
$

As we have seen above with \coqe{nArrow}, not all functions are monotone in \CCICPrec.
To establish monotonicity internally in a general manner, we need a notion that does not make sense only for function types. 
Fortunately, a direct consequence of the pointwise definition of precision on functions is that monotonicity of functions corresponds to their {\em self-precision}. In general, we write $\sp{a}{A}$ for self-precision, meaning
that $a:A$ is such that $a \precision{A}{A} a$.
% \et{self-precision update:} We observe that the monotonicity requirement on contexts (\ie~functions) can be generalized to any type by considering {\em self-precision}, \ie~the fact of a term being more precise than itself.
%\et{lacking intro/intuition to self-precision}

In \CCICPrec, $\mathtt{DGG}~A~C$ is equivalent to $\sp{C}{A -> \bool}$. 
In other words, for any type $A$ and for any context $C$ that is self-precise, we have the usual dynamic gradual guarantee between two elements $x$ and $y$ related by the precision over $A$.
%
% This is formally stated as the following theorem:
% %
% $$
% \begin{array}{l}
% \mathtt{DGG} : \forall (A : [])
%   (x~y : A) (C : A -> \bool), C \precision{A ->
%   \bool}{A -> \bool} C ->
%     x \precision{A}{A} y -> C~x \precision{\bool}{\bool} C~y.
% \end{array}
% $$
%
% Compared to the standard statement of dynamic gradual guarantee, the
% only difference is the additional assumption that $C$ is a
% self-precise evaluation context.
% This means that by relaxing the reflexivity rule of the precision
% relation, it becomes possible to establish a general graduality theorem; but
% {\em relativized to self-precise terms}.
This means that we can understand existing gradual systems in which the DGG holds globally as
systems where every context is self-precise by construction.

\paragraph{Graduality}
% graduality, \ie that $A \precisionType{}{} B$ implies that
% the casts between $A$ and $B$ form an embedding-projection pair~\cite{newAhmed:icfp2018}.}
% In \CCICPrec, we can postulate various properties of the precision
% relation, such as transitivity, and in particular graduality.
Graduality~\cite{newAhmed:icfp2018} is defined as
the fact that when $A \precisionType{i} B$, for
any $a:A$ and $b:B$, there is an adjunction
$
\cast{A}{B}{a} \precision{B}{B} b  \leftrightarrow 
a \precision{A}{B} b \leftrightarrow 
a\precision{A}{A} \cast{B}{A}{b},
$
and furthermore the roundtrip is the identity on $A$ up to equiprecision:
$
\cast{B}{A}{\cast{A}{B}{a}} \precision{A}{A} a
$
\shnew{(the reverse precision relation 
%a \precision{A}{A} \cast{B}{A}{\cast{A}{B}{a}}$ 
is a consequence of reflexivity and the
  adjunction property).}

  %\label{footnote:equiprecision}}

As we show in \cref{sec:precision-properties} (\cref{prop:grip-graduality}), \CCICPrec globally satisfies graduality, except for the fact that $a:A$ and $b:B$ must both be self-precise for
it to hold.

\shnew{
\paragraph{Applications}
Graduality and the DGG can be exploited in several ways using internal precision.}
A potential use is to develop internally the theory of
precision, showing for instance that casts between types related by
precision do compose (which is not the case for arbitrary types).
Another possible use is to derive proofs of precision on open
terms that can appear during reasoning. For instance, when using gradual subset types
(introduced in \cref{sec:action-subsets} below) to define functions, it becomes necessary
to discharge proof obligations related to the precision of terms containing free variables.  

\shnew{
One can also exploit the reasoning principle of the DGG for certifying gradual programs. 
We mention in \cref{sec:introduction} the case of two programs that use the $\mathtt{map}$ function and its self-precision to deduce that a gradual program does not fail. 
More generally, given any correctness criterion for $t$ (for instance that the resulting list has the same length as the input list) 
knowing $t \precision{}{} u$ is sufficient to deduce the corresponding criterion for $u$, as long as the criterion is self-precise. Considering that proofs of self-precision could be automated for a large class of terms (see \cref{thm:ccics-sp}, which in particular covers all the terms mentioned in this example), the proof burden of correctness results can be considerably lowered by exploiting the DGG compared to direct reasoning.
Alternatively, \CCICPrec lets user construct precision proofs where actual non-trivial reasoning is needed, as illustrated in the next section.}

% For instance, we can show by simple reasoning by induction on lists that the usual function $\mathtt{map} : \Pi A B : []. (A \to B) \to \listT\,A \to \listT\,B$ is self-precise, which means that
%  $\mathtt{map}$ is monotone with respect to all its arguments.
% Then consider the following two functions related by precision:
% $$
% \mathtt{add1} := \lambda x : \nat. x + 1 \quad \precision{}{}\quad
% \mathtt{add1?} := \lambda x : \?_{[]}. (\cast{\?_{[]}}{\nat}{x}) + 1
% $$
% The term $t := \mathtt{map}\,\nat\,\nat\,\mathtt{add1}\, l$ is fully static and hence does not fail, given a non-error list $l : \listT\,\nat$. Now, to show that the term $u := \mathtt{map}\,\?_{[]}\,\nat\,\mathtt{add1?}\, l'$ (where $l'$ is $\cast{\listT\,\nat}{\listT\,\?_{[]}}{l}$) also does not fail, one can either reason directly on the definition of $u$, or one can deduce the property ``for free'' from the fact that $t \precision{}{} u$, which follows from the monotony of $\mathtt{map}$. See the Agda development for details.

% We can characterize 

% \subsection[What else]{\et{What else?}}

% \km{Correspondence between precision and (setoid-)equality:
%     1) without boxes for props/quantification of props overs types $A : Type_i, a : A ⊢ cast~A~A~a = a$
%     2) with boxes that are not related by precision, it is probably enough to
%     add a precision hypothesis $A : Type_i, A \sqsubseteq A, a : A ⊢ cast~A~A~a = a$
%     3) with boxes related by precision, change the equality by a precision
%     statement $A : Type_i, A \sqsubseteq A, a : A ⊢ cast~A~A~a \sqsubseteq a$}

% \km{quicksort filter?}

\subsection{Exception Handling and Graduality}
\label{sec:exceptions}

All languages in the theoretical literature that address graduality 
are devoid of exception handling mechanisms. The reason is that handling runtime type errors makes it possible to define  terms that are not monotone with respect to precision, and so graduality cannot hold globally. However, in practice, exception handling (and other language mechanisms in tension with graduality) are key ingredients and one would ideally like to account for them.
% : an imprecise term could react to a cast error by returning an arbitrary value, unrelated by precision to the result of a more precise non-failing version of that term. %\et{clear? or put an example?}
As explained above, the situation of \CCICPrec in this regard is new and singular: since we can internally and consistently reason about precision, we can support exception handling terms, and still establish their monotonicity as specific theorems proven in the type theory itself. Below we illustrate such an exception-handling term and its proof of monotonicity within \CCICPrec.

The catch operator on $\bool$ is not monotone with respect to precision. Consider its type signature: 
$$
\bcatch : \forall (A: []) \ (a_\btrue : A) \ (a_\bfalse : A) \
(a_{\err_\bool} : A)\ (a_{\?_\bool} : A) , \bool -> A
$$
\shnew{There is no reason for $a_{\?_\bool}$, given to handle the unknown term case, to be less
precise than $a_\btrue$ and $a_\bfalse$.}
In our setting, the catch operation (and its dependent generalization)
can be considered, without endangering any properties of the system.
Moreover, we can show  that precision is preserved in specific uses of catch.

To illustrate, consider the following optimized implementation of
(iterated) multiplication of a list of natural numbers, with two functions, that takes advantage of the
fact that $0$ is an absorbing element (we use pattern matching syntax
for induction on lists to ease the reading):
$$
\begin{array}{llcl}
  \mathtt{mult}^{\err}_\listT& \nilK & := & 1 \\
  \mathtt{mult}^{\err}_\listT & (\consK~n~l) & := &
                                                \ifte{(\mathtt{is\_zero}~n)}{\err_\nat}
                                                {n * \mathtt{mult}^{\err}_\listT l}\\[0.5em]
  \mathtt{mult}_\listT & l & :=  & \ncatch~\nat~0~(\l n:\nat . 1 + n)~0~\?_\nat~
                               (\mathtt{mult}^{\err}_\listT~l)                                                
\end{array}
$$
The function $\mathtt{mult}^{\err}_\listT$ returns an error as soon as
a $0$ is encountered in the list, short-circuiting the recursive computation.
The wrapper function $\mathtt{mult}_\listT$ catches
errors raised by $\mathtt{mult}^{\err}_\listT$ and returns $0$ in
that case.
% , because in that case, we can avoid the whole
% evaluation of the multiplication as we know the result must be
% $0$.
%
%
In general, $\mathtt{mult}_\listT$ is not monotone because when the
input list is an error, it returns the value $0$, which is not more
precise than the return value on other lists.
But $\mathtt{mult}_\listT$ is monotone on lists that do
not contain errors, because in such cases errors are used
in a delimited manner in order to optimize execution.
In \CCICPrec, we can make this explicit and prove the
following theorem:
$$
\mathtt{mult}_\listT^\sqsubseteq : \forall (l \ l': \listT~\nat),
\mathtt{not}\mbox{-}\err_{\listT}~ l ->
l \precision{\listT~\nat}{\listT~\nat} l' -> 
\mathtt{mult}_\listT~l \precision{\nat}{\nat} \mathtt{mult}_\listT~l'.
$$
where $\mathtt{not}\mbox{-}\err_{\listT}$ is a predicate ensuring that
the list is not $\err_{\listT~\nat}$ and does not contain
$\err_\nat$ in its elements.
\shnew{Again, details can be found in the Agda development.}
% A comprehensive description of this example is given in \cref{sec:agda-poc}.

\subsection{Gradual Subset Types}
\label{sec:action-subsets}

%\paragraph{Using the Logical Layer to Define Dependent Types in the Gradual Layer}

The logical layer $\prop$ enables stating and proving formal properties
on the gradual, impure layer $[]$. But in a dependently-typed setting, it is also
important to be able to use the properties stated in $\prop$
to constrain types in $[]$, using for instance 
{\em subset types}. Recall that a subset type is a type $A$ enriched with a proposition $P$, 
noted $\{ a : A \ \& \ P~a \}$, and an inhabitant is a dependent pair $(a ; p)$, such that $a : A$ and $p : P~a$.
This means that in \CCICPrec we need a way to embed $\prop$ into $[]$.
Note that this cannot be a direct injection, as propositions in
$\prop$ cannot be inhabited with exceptions.
Therefore, we need a special operator $\BoxProp : \prop -> []$ that
takes a proposition $P$ and freely adds $\err_{\BoxProp~P}$ and
$?_{\BoxProp~P}$ to $P$.
This allows us to define lists of size $n$ as the type
$$
\sizedlist~A~n := \{ l : \listT~A \ \& \ \BoxProp~(\len~l = n) \}.
$$
This way, we can gradually define the $\mathtt{append}_?$ function as
$$
\begin{array}{l}
  \mathtt{append}_? : \forall A \ n \ m , \ \sizedlist~A~n ->
  \sizedlist~A~m ->
  \sizedlist~A~(n+m) \\
\mathtt{append}_? \ A  \ n \ m \ (l ; \_) \ (l' ; \_) := ( l \append l' ; ?_{\BoxProp~(\len~
  (l \append l') = n + m)})
\end{array}
$$
where the proof that the result is of the right size is avoided through imprecision.
It is also possible to define the precise append function that
contains the actual proof that the resulting size is valid:
$$
\begin{array}{l}
  \mathtt{append} : \forall A \ n \ m , \ \sizedlist~A~n ->
  \sizedlist~A~m ->
  \sizedlist~A~(n+m) \\
\mathtt{append} \ A  \ n \ m \ (l ; \boxK~p) \ (l' ; \boxK~p') := ( l
  \append l' ; \append\!\mathtt{lemma} \ l \ l' \cdot \mathtt{ap}_2 + p \ p' )
\end{array}
$$
where $\append\!\mathtt{lemma}$ is the proof that the length of two
appended lists is equal to the sum of their lengths,
$e \cdot e'$ is the concatenation of equality and $\mathtt{ap}_2$ is a
witness that (binary) functions preserve equalities. 

% Given these two definitions of append, what is the
% property satisfied by the precise append function that is not
% satisfied by the imprecise one?
%
In \CCICPrec, these two append functions can be distinguished in the logical
layer by using the following predicate, which indicates that a
property in the impure layer has {\em really} been proven: 
$$
\begin{array}{ll}
\mathtt{valid}_{\BoxProp} : \forall P:\prop, \ \BoxProp~P -> \prop  & 
\mathtt{valid}_{\BoxProp} \ P \ (\err_{\BoxProp~P}) := \botProp\\
\mathtt{valid}_{\BoxProp} \ P \ (\boxK~p) := \topProp &
\mathtt{valid}_{\BoxProp} \ P \ (?_{\BoxProp~P}) := \botProp
\end{array}
$$
Posing $\mathtt{valid}_{\sizedlist} (\_ ; p) :=
\mathtt{valid}_{\BoxProp} \ \_ \ p$, 
the precise append function is the only one of the two
versions for which one can prove:
$$
\mathtt{valid}\_\mathtt{append} : \forall \ A \ n \ m \ l \ l' ,
\mathtt{valid}_{\sizedlist} l ->
\mathtt{valid}_{\sizedlist} l' ->
\mathtt{valid}_{\sizedlist} (\mathtt{append} \ A  \ n \ m \
l \ l')
$$

In a gradual setting, we can also use the unknown term in order to avoid an
explicit definition of the resulting size of the list.
For instance, the filter function can be given the imprecise type
$$
\mathtt{filter} : \forall A \ n \ (P : A -> \prop) , \ \sizedlist~A~n ->
  \sizedlist~A~?_{\nat}
$$
However, there is no way to give a valid implementation of a filter
function of that type, because the size of the filtered list cannot
be proven to be equal to $?_{\nat}$ in the logical layer.
Taking advantage of the internal notion of precision, we can define an
alternative notion of sized list in \CCICPrec as
$$
\sizedlistprec~A~n := \{ l : \listT~A \ \& \ \BoxProp~(\len~l \precision{\nat}{\nat} n) \}.
$$
Using this notion of sized lists, it is possible to
define a valid filter function of type
$$
\mathtt{filter}_{\precision{}{}} : \forall A \ n \ (P : A -> \prop) , \ \sizedlistprec~A~n ->
  \sizedlistprec~A~?_{\nat}.
$$
because the proof that the size of the filtered list is more precise
than $?_{\nat}$ directly follows from the fact that $?_{\nat}$ is
the maximal element of type $\nat$.

\section{Gradual types and pure propositions}
\label{sec:ccic}

In this section, we present the two-layer core of \CCICPrec,
intended to be both a gradual cast calculus, target for elaboration
of a gradual surface language, and a pure language to consistently 
reason about programs in that cast calculus.
In \cref{sec:ccic-grad}, we give an overview of the gradual
part of the language, while \cref{sec:ccic-prop} introduces
the pure sort of propositions. Finally, \cref{sec:ccic-crossing}
discusses how to soundly support interactions between these two layers. 

\subsection{The Impure Layer of Gradual Terms}
\label{sec:ccic-grad}

\begin{figure}
\begin{small}
  \boxedrule{$\vdash \Gamma$}
  \vspace{-.5em}
	\begin{mathpar}
  \vdash \cdot \quad \text{and} \quad 
  \vdash \Gamma, x : T \text{ whenever }
  \vdash \Gamma \text{ and }
      \Gamma \vdash T \ccicty []_{i}
		% \inferrule[Empty]
  %   { }
  %   {\vdash \cdot}
  %   \ilabel{infrule:cic-axiom}
  %   %
  %   \and
  %   %
		% \inferrule[Cons]
  %   {\vdash \Gamma \\
  %     \Gamma \vdash T \ccicty []_{i}}
  %   {\vdash \Gamma, x : T}
  %   \ilabel{infrule:cic-cons}
	\end{mathpar}
	\boxedrule{$\Gamma \vdash t \ccicty T$} \vspace{1em}
	\begin{mathpar}
		\inferrule[Univ]
    {\vdash \Gamma}
    {\Gamma \vdash []_{i} \ccicty []_{i + 1}}
    \ilabel{infrule:cic-univ}
    \and
		\inferrule[Var]
    {\vdash \Gamma \\ (x : T) \in \Gamma}
    {\Gamma \vdash x \ccicty T}
    \ilabel{infrule:cic-var}
    \and
		\inferrule[Prod]
    {\Gamma \vdash A \ccicty []_i \\
      \Gamma, x : A \vdash B \ccicty []_{i}}
    {\Gamma \vdash  \P x : A . B \ccicty []_{i}}
    \ilabel{infrule:cic-prod}
    \\
		\inferrule[Abs]
    { \Gamma \vdash A \ccicty []_{i} \\
      \Gamma, x : A \vdash t \ccicty B}
    {\Gamma \vdash \l x : A . t \ccicty  \P x : A . B}
    \ilabel{infrule:cic-abs}
    \and
		\inferrule[App]
    { \Gamma \vdash t \ccicty  \P x : A . B \\
      \Gamma \vdash u \ccicty A}
    {\Gamma \vdash t~u \ccicty B\subs{x}{u}}
    \ilabel{infrule:cic-app}
    \\
    \inferrule[List]
    {\Gamma{} \vdash A : []_i}
    {\Gamma{} \vdash \listT\,A : []_i}
    \ilabel{infrule:ccic-list}
    \and
    \inferrule[List-Nil]
    {\Gamma{} \vdash A : []_i}
    {\Gamma{} \vdash \nilK[A] : \listT\,A}
    \ilabel{infrule:ccic-nilK}
    \and
    \inferrule[List-Cons]
    { \Gamma \vdash A : []_i \\
      \Gamma{} \vdash a : A \\
      \Gamma{} \vdash l : \listT\,A}
    {\Gamma{} \vdash \consK[A]\,a\,l : \listT\,A}
    \ilabel{infrule:ccic-consK}
    \and
    \inferrule[List-Catch]
    { \Gamma \vdash P : \listT\,A \to []_i\\
      \Gamma \vdash h_{\nilK} : P\,\nilK[A]\\
      \Gamma \vdash h_{\consK} : \P (a : A)(l : \listT\,A) . P\,l {\to} P(\consK[A]\,a\,l)\\
      \Gamma \vdash h_{\err} : P\,\err_{\listT~A}\\
      \Gamma \vdash h_{\?} : P\,\?_{\listT~A}\\
      \Gamma \vdash l : \listT\,A \\
    }
    {\Gamma \vdash \lcatch{A}\,P\,h_{\nilK}\,h_{\consK}\,h_{\err}\,h_\?\,l : P\,l}
    \ilabel{infrule:ccic-list-ind}
    \\
    \inferrule[Unk]
    {{\Gamma} \vdash {A} \ccicty {[]_i}}
    {{\Gamma} \vdash {\?_A} \ccicty {A}}
    \ilabel{infrule:ccic-unk}
    \and
    \inferrule[Err]
    {{\Gamma} \vdash {A} \ccicty {[]_i}}
    {{\Gamma} \vdash {\err_A} \ccicty {A}}
    \ilabel{infrule:ccic-err}
    \and
    \inferrule[Cast]
    { {\Gamma} \vdash {A} \ccicty {[]_i} \\
      {\Gamma} \vdash {B} \ccicty {[]_i} \\
      {\Gamma} \vdash {t} \ccicty {A}}
    {{\Gamma} \vdash {\cast{A}{B}{t}} \ccicty {B}}
    \ilabel{infrule:ccic-cast}
    \\
    \inferrule[Cum]
    { \Gamma \vdash A \ccicty []_i}
    {\Gamma \vdash \cum A \ccicty []_{i+1}}
    \ilabel{infrule:ccic-cum}
    \and
    \inferrule[Coe]
    { \Gamma{} \vdash a : A }
    { \Gamma{} \vdash \coe\,a : \cum A }
    \ilabel{infrule:ccic-coe}
    \and
    \inferrule[Coe-Inv]
    { \Gamma{} \vdash a : \cum A }
    { \Gamma{} \vdash \coeinv\,a : A }
    \ilabel{infrule:ccic-coeinv}

		\inferrule[Conv]
    { \Gamma \vdash t \ccicty T' \\
      \Gamma \vdash T : []_i \\
      \Gamma \vdash T' \conv T : []_i}
    {\Gamma \vdash t \ccicty T}
    \ilabel{infrule:ccic-conv}
		
	\end{mathpar}

  \boxedrule{$\Gamma \vdash t \conv t' : T$} \hspace{3mm} (Congruence, reflexivity, symmetry and transitivity rules omitted)
  
  \begin{mathpar}
    % \inferrule[Conv-Ty]
    % {\Gamma{} \vdash T \conv T' : []_i}
    % {\Gamma{} \vdash T \conv T'}
    % \ilabel{infrule:ccic-conv-ty}
    % %
    % \and
    %
    \inferrule[Conv-Red]
    {\Gamma \vdash t: A \\ \Gamma \vdash t' : A \\ t \redCCIC t' }
    {\Gamma{} \vdash t \conv t' : A}
    \ilabel{infrule:ccic-conv-red}
    \and
    \inferrule[$\P$-$\eta$]
    {\Gamma{}, x : A \vdash t\,x \conv t'\,x : B}
    {\Gamma{} \vdash t \conv t' : \P x : A. B}
    \ilabel{infrule:ccic-pi-eta}
    \\
    \inferrule[Coe-Retr]
    {\Gamma{} \vdash a : A}
    {\Gamma{} \vdash \coeinv\,(\coe\,a) \conv a : A}
    \ilabel{infrule:coe-retr}
    \and
    \inferrule[Coe-Sect]
    {\Gamma{} \vdash a : \cum\,A}
    {\Gamma{} \vdash \coe\,(\coeinv\,a) \conv a : \cum\,A}
    \ilabel{infrule:coe-sect}
  \end{mathpar}
\end{small}

	\caption{\CCICPrec: typing of the impure layer --- based on \CCIC}
	\label{fig:ccic-base}
\end{figure}
As seen in \cref{sec:ccic-background}, 
\CCIC is an extension of \MLTT with primitives for gradual typing, namely casts, errors and unknown terms.
For the impure layer of gradual terms, \CCICPrec follows significantly \CCIC~\cite{lennonAl:toplas2022}, with some minor modifications and presentation differences highlighted below, in particular the support for exception handling and explicit cumulativity.

The syntax and typing rules of the gradual layer of \CCICPrec are given in~\cref{fig:ccic-base}.
They feature a hierarchy of universes $[]_i$, dependent products $\Pi$
introduced by $\lambda$-abstraction and destructed by applications, and
inductive types, introduced by constructors
and destructed by catch operators.
Here
we do not consider inductive types with indices,
such as equality, whose treatment is deferred to~\cref{sec:equality}.
For readability we only formally present lists $\listT$,
however the calculus can readily be extended with other parametrized instances
of $\mathbb{W}$-types (see \cref{sec:general-inductive-types}), as done for \CCIC.
Throughout the article, and in particular for examples, we take
the liberty to use dependent sums $\Sigma$, natural numbers $\nat$ and booleans $\bool$.
The typing rule (\nameref{infrule:ccic-list-ind}) for the catch
operator on lists requires two additional arguments with respect to
the usual recursor on lists, one for the case of an error, and one for $\?$.
Note that the usual recursor on lists $\lind{}$ which simply propagates $\err$
and $\?$, as used in \CCIC,
can be recovered from the catch operator by defining $h_{\err}$ to be
$\err$ and $h_\?$ to be $\?$.

Like \Agda, \CCICPrec uses explicit cumulativity. The operator $\cum$ lifts a type
from one universe to the next, and operators $\coe$ and $\coeinv$ coerce between
a type and its lift.
%
% This explicit management of cumulativity is the choice made for the
% implementation of \Agda, whereas \Coq relies on implicit
% cumulativity.
%
We choose explicit 
cumulativity due to the central role it plays in the definition of internal
precision (\cref{sec:precision}---see \cref{sec:impl-deta-agda} for further discussion on explicit versus
implicit cumulativity). 
As for the gradual part of the calculus, it
features the unknown terms $\?_{A} : A$,
errors $\err_{A} : A$, and casts $\ascdom{a}{A}{B}$ between arbitrary types
at the same universe level.

As any dependent type theory, \CCICPrec relies on a notion of conversion that
allows us to convert a term of type $T'$ to a term of type $T$  (Rule
\nameref{infrule:ccic-conv}) as soon the two types are convertible.
Conversion is defined as the reflexive, symmetric and transitive
closure of reduction with the additional $\eta$-conversion for
functions and the fact that $\coe$ and $\coeinv$ are inverse of each
other. 
The dynamic behavior of these terms is presented
by means of a reduction relation in~\cref{fig:ccic-reduction}, directly adapted from that of \CCIC.
There are three sets of rules. The first is for standard rules of
\MLTT, \ie the usual $\beta$-rule
for functions and $\iota$-rule for lists.
The second corresponds to propagation of both $\?$ and $\err$ as
exceptions as advocated for by \citet{pedrotTabareau:esop2018}.
The last describes the behavior of the cast
primitive, which computes based on the shape of its two type arguments.
The first five rules propagate casts between types with the same head constructor.
The next four correspond to failures, either when the source and target
types are incompatible, when one of them is an error, or when trying to
cast a product type into the unknown type of its level.
This last rule~\nameref{redrule:prod-err} is crucial for normalization,
as it is responsible for the failure of terms such as $\Omega$.
Next, rule~\nameref{redrule:up-down} can be understood as a form of $\iota$-rule for $\?_{[]}$: it showcases the fact that casts into $\?_{[]}$
work as canonical forms for it (when their domain is of a certain form),
with casts from $\?_{[]}$ as destructors.
Finally rule~\nameref{redrule:ind-germ} decomposes casts from a list
into the unknown type through $\listT \?_{[]}$,
the most general type with $\listT$ as a head constructor,
letting rules~\nameref{redrule:l-l-nil} and \nameref{redrule:l-l-cons}
further decompose the innermost cast if applicable.
\shnew{Finally, \nameref{infrule:red-cong} complements the top-level reduction given by the other rules with congruence closure. For the
purpose of that rule $\listT$, $\consK[A]$, $\nilK[A]$ and $\lcatch{A}$ are treated as terms applied to their arguments.}

\shnew{As standard in rewriting systems for programming languages, reduction is orthogonal (left-linear and without critical pairs), and so
the standard parallel reduction proof technique \cite{TAKAHASHI1995120} applies to show that it is confluent. This is further witnessed by the
confluence checker of Agda, which accepts the definitions of the proof-of-concept implementation.}
% \et{mention confluence? (revB)}

\begin{figure}
\begin{small}
  \boxedrule{$\H$, $\types$, $\hd : \types \to \H$
  % and $\stalkCIC{i} : \H \rightharpoonup \types$
  }\vspace{2mm}

  \begin{mathpar}

  \H \ni h ::= [] \mid \Pi \mid \listT \mid \cum A \and

  \types ::= []_i \mid \P x : A . B \mid \listT\,A \mid \cum A \\

  \hd([]_i) := []  \and
  \hd(\P x : A . B) := \Pi  \and
  \hd(\listT\,A) := \listT  \and
  \hd(\cum A) := \cum
%   \stalkCIC{i}{[]} := []_{i} \and
% %
%   \stalkCIC{i}{\listT} := \listT\,\?_{[]_i} \and
% %
%   \stalkCIC{i+1}{\Pi} := \cum (\?_{[]_{i}} \to \?_{[]_i}) \and
% %
%   \stalkCIC{i+1}{\cum} := \cum \?_{[]_{i}}
  \end{mathpar}

  \boxedrule{$t \redCCIC t'$}\\
  \begin{mathpar}
    %\text{\textbf{Standard \MLTT}}\\
    \redruleAux{(\l x: A . t)~u}
    {t \subs{u}{x}
    }{$\P$-$\beta$}
    \ilabel{redrule:prod-beta}\and
    \redruleAux{
      \lcatch{A}\,P\,h_{\nilK}\,h_{\consK}\,h_{\err}\,h_\?\,\nilK[A]
    }{
      h_{\nilK}
    }{Catch-$\nilK$}
    \ilabel{redrule:lind-nil} \and
    \redrule{
      \lcatch{A}\,P\,h_{\nilK}\,h_{\consK}\,h_{\err}\,h_\?\,(\consK[A]\,a\,l)
    }{
      h_{\consK}\,a\,l\,(\lcatch{A}\,P\,h_{\nilK}\,h_{\consK}\,h_{\err}\,h_\?\,l)
    }[Catch-$\consK$]
    \ilabel{redrule:lind-cons}\\
    
    \text{\textbf{Propagation rules for $\?$ and $\err$}} \\
    \redruleAux{
      \?_{\P(x:A).B}
    }{
      \l (x : A). \?_{B}}{$\Pi$-Unk}
    \ilabel{redrule:prod-unk} \and
    \redruleAux{
      \err_{\P(x:A).B}
    }{
      \l (x : A). \err_{B}}{$\Pi$-Err}
    \ilabel{redrule:prod-err} \\
  \redruleAux{
    \?_{\cum A}
  }{
    \coe\,\?_{A}
  }{Cum-Unk} \ilabel{redrule:cum-ukn} \and
  \redruleAux{
    \lcatch{A}\,P\,h_{\nilK}\,h_{\consK}\,h_{\err}\,h_\?\,\?_{\listT\,A}
  }{
    h_\?
  }{Catch-Unk} \ilabel{redrule:lind-unk} \and
  \redruleAux{
    \err_{\cum A}
  }{
    \coe\,\err_{A}
  }{Cum-Err} \ilabel{redrule:cum-err} \and
  \redruleAux{
    \lcatch{A}\,P\,h_{\nilK}\,h_{\consK}\,h_{\err}\,h_\?\,\err_{\listT\,A}
  }{
    h_{\err}
  }{Catch-Err} \ilabel{redrule:lind-err} \and
  \redruleAux{
    \ascdom{\?_{\listT\,A}}{\listT\,A''}{\listT\,A'}
  }{
    \?_{\listT\,A'}
  }{$\listT$-Cast-Unk} \ilabel{redrule:lcast-unk} \and
  \redruleAux{
    \ascdom{\err_{\listT\,A}}{\listT\,A''}{\listT\,A'}
  }{
    \err_{\listT\,A'}
  }{$\listT$-Cast-Err} \ilabel{redrule:lcast-err} \and
  \redruleAux{
    \ascdom{\?_{\?_{{[]_i}}}}{\?_{{[]_i}}}{T}
  }{
    \?_{T}
  }{Down-Unk} \ilabel{redrule:down-unk} \and
  % \text{when $T \in \types$}
    \and 
  \redruleAux{
    \ascdom{\err_{\?_{{[]_i}}}}{\?_{{[]_i}}}{T}
  }{
    \err_{T}
  }{Down-Err} \ilabel{redrule:down-err} 
  \text{ when $T \in \types$} \\
  \text{\textbf{Reduction rules for cast}} \\

  \redrule{
    	\ascdom{f}{\P (x: A_1). B_1}{\P (y:A_2).B_2}
    }{
      \l y : A_2. \ascdom{(f~\cast{A_2}{A_1}{y})}{B_1\subs{\cast{A_2}{A_1}{y}}{x}}{B_2} 
    }[$\Pi$-$\Pi$] \ilabel{redrule:prod-prod} \\
  %   \vspace{-.7em} \\
  % \myflushright \l y : A_2. \ascdom{(f~\cast{A_2}{A_1}{y})}{B_1\subs{\cast{A_2}{A_1}{y}}{x}}{B_2} \\
%
  \redruleAux{
    \cast{\cum\,A}{\cum\,A'}{\coe\,t}
  }{
    \coe\,\cast{A}{A'}{t}
  }{Cum-Cum} \and
  \redruleAux{
    \ascdom{A}{[]_i}{[]_i}
  }{A}{Univ-Univ} \ilabel{redrule:univ-univ} \and
  \redrule{
    \ascdom{\nilK[A]}{\listT\,A''}{\listT\,A'}
  }{
    \nilK[A']
  }[$\listT$-$\listT$-Nil] \ilabel{redrule:l-l-nil} \and
  \redrule{
    \ascdom{\left(\consK[A]\,a\,l\right)}{\listT\,A''}{\listT\,A'}
  }{
    \consK[A']\,\left(\cast{A}{A'}{a}\right)\,\left(\cast{\listT\,A}{\listT\,A'}{l}\right)
  }[$\listT$-$\listT$-Cons] \ilabel{redrule:l-l-cons} \and
	\redrule{
    \ascdom{t}{T}{T'}
  }{
    \err_{T'}
  }[Head-Err] \ilabel{redrule:head-err}
	  \hfill \text{when $T, T' \in \types$ and $\hd T \neq \hd T'$} \\
  \redruleAux{
    \ascdom{t}{\err_{{[]_i}}}{T}
  }{
    \err_T
  }{Dom-Err} \ilabel{redrule:err-dom} \and
%  \text{when $T \in \types$} \\
%
  \redruleAux{
    \ascdom{t}{T}{\err_{{[]_i}}}
  }{
    \err_{\err_{[]_i}}
  }{Cod-Err} \ilabel{redrule:err-codom} \and
  \text{when $T \in \types$} \\
  \redrule{
    \ascdom{f}{\P x:A.B}{\?_{[]_i}}
  }{
    \err_{\?_{[]_i}}
  }[Cast-$\Pi$-Err] \ilabel{redrule:cast-prod-err} \\
%
% \redrule{
%   \ascdom{f}{\cum\,(\P x:A.B)}{\?_{[]_{i+1}}}
% }{
%   \cast{\stalkCIC{i+1}{\Pi}}{\?_{[]_{i+1}}}{
%     \coe\left(\cast{\P x:A.B}{\?_{[]_i} \to \?_{[]_i}}{f}\right)}
% }[Prod-Germ] \ilabel{redrule:prod-germ} \vspace{-.7em}\\ 
% \myflushright \text{when $\cum\,(\P x : A . B) \neq \stalkCIC{i+1}{\P}$} \\
%
% \redrule{
%   \ascdom{t}{\cum\,A}{\?_{[]_{i+1}}}
% }{
%   \castmid{\cum\,A}{\stalkCIC{i+1}{\cum}}{\?_{[]_{i+1}}}{t}
% }[Cum-Germ] \ilabel{redrule:cum-germ}
% \hfill \text{when $\cum\,A \neq \stalkCIC{i+1}{\cum}$} \\
%
  \redrule{
    \cast{\?_{[]_i}}{Y}{\cast{X}{\?_{[]_i}}{t}}
	}{
		\ascdom{t}{Y}{X}
  }[Up-Down] \ilabel{redrule:up-down}
	\hfill \text{when $X \in \types$ and $Y$ is $\listT\,\?_{[]}$, $[]$ or $\cum A$} \\
  \redrule{
    \ascdom{t}{\listT\,A}{\?_{[]_i}}
  }{
    \castmid{\listT\,A}{\listT\,\?_{[]_i}}{\?_{[]_i}}{t}
  }[$\listT$-Dec] \ilabel{redrule:ind-germ}
  \hfill \text{when $\listT\,A \neq \listT\,\?_{[]}$}\\

  \text{\textbf{Congruence} ($A$, $B$ and $t$ denote arbitrary terms)}
\end{mathpar}

\begin{minipage}{0.75\textwidth}
  \begin{align*}
    \Ctx ::= &~ [\cdot] \mid \P x : \Ctx.\ B \mid \P x : A. \Ctx \mid  \l x : \Ctx.\ t \mid \l x : A.\ \Ctx \mid t\ \Ctx \mid \Ctx\ t
      \\
      & \mid \?_{\Ctx} \mid \err_{\Ctx} 
      \mid \cast{\Ctx}{B}{t} \mid \cast{A}{\Ctx}{t} \mid \cast{A}{B}{\Ctx} \mid \cum\ \Ctx \mid \coe\ \Ctx \mid \coeinv\ \Ctx \\
      & \mid \nilK[\Ctx] \mid \consK[\Ctx] \mid \lcatch{\Ctx}
    \end{align*}
\end{minipage}%
\begin{minipage}{0.25\textwidth}
  \[\inferrule[Red-Cong]{t \redCCIC t'}{\Ctx[ t ] \redCCIC \Ctx[ t' ]}     \ilabel{infrule:red-cong}  \]
\end{minipage}
\vspace{1em}
\end{small}

  \caption{\CCICPrec: Reduction rules -- adapted from \CCIC}
  \label{fig:ccic-reduction}
\end{figure}

\subsection{The Pure Layer for Reasoning on Gradual Terms}
\label{sec:ccic-prop}

The casts $\ascdom{t}{A}{B}$ and exceptional terms $\err_A, \?_A$
are fundamental features to enable gradual programming.
However, as a consequence all types are inhabited, so logical consistency, and
thus meaningful internal reasoning on programs, is lost.
To remedy this problem, following the insight of \RETT \cite{pedrotAl:icfp2019},
we introduce an additional layer dedicated to sound reasoning,
which must therefore be free of the gradual primitives.
As in \RETT, the separation between the impure and pure layers is
controlled by means of sorts: alongside the impure hierarchy of gradual terms $[]_i$, we
introduce a new impredicative\footnote{
  Impredicativity is inessential but simplifies the
  exposition while matching the model in~\cref{sec:model}; the Agda development
  shows how this presentation can be adapted to a predicative
  hierarchy $\prop_i$.} sort $\prop$ of definitionally
proof-irrelevant pure propositions.
\shnew{Since the propositional layer is pure, there is no “unknown proposition” $?_{P}$ for a proposition $P$.
But this is not needed, because in that layer axioms suffice, as they are readily convertible to any other term by
propositional irrelevance.}

In more details, \cref{fig:ccic-prop} shows how \CCICPrec extends
what was essentially \CCIC with this new sort $\prop$ (\nameref{infrule:prop-wf}). In particular, 
an extension of conversion specifies that any two proofs of the same
proposition are convertible (\nameref{infrule:prop-irr}).
We use $\sort$ for a generic sort,
that is either $\prop$ or $[]_i$ for some $i$.
At this stage, there are only two ways to construct propositions.
On one side, the empty proposition $\botProp$
(\nameref{infrule:bot-prop-wf}) with no introduction, and elimination in the form
of an explosion principle
(\nameref{infrule:bot-prop-elim}).
On the other, universal quantification over propositions or
types (\nameref{infrule:forall-wf}) introduced by $\l$-abstraction
(\nameref{infrule:forall-intro}) and eliminated by application
(\nameref{infrule:forall-elim}).
Implication $P \to Q$ between propositions is defined as the non-dependent
quantification $\forall (\_ : P), Q$.
More interesting ones will be added later, such as the
precision relation (\cref{fig:ccic-prec}).
However, further logical connectives can already be encoded on top of the
primitives we already have, using impredicativity and definitional
proof-irrelevance \cite{gilbert:hal-01859964}. For instance, the
proposition true can defined by $\topProp := \botProp -> \botProp$.

\begin{figure}
  \flushleft{}
\begin{small}
  \begin{mathpar}
    \inferrule[$\prop$-Wf]
    {\Gamma{} \vdash}
    {\Gamma{} \vdash \prop : []_{0}}
    \ilabel{infrule:prop-wf}
    \and
    \inferrule[$\prop$-Irr]
    {\Gamma \vdash P : \prop\\\Gamma\vdash p,q : P}
    {\Gamma \vdash p \conv q : P}
    \ilabel{infrule:prop-irr}
    \and
    \inferrule[$\botProp$-Wf]{\Gamma{} \vdash}{\Gamma{} \vdash \botProp : \prop}
    \ilabel{infrule:bot-prop-wf}
    \and
    \inferrule[$\botProp$-Elim]
    {\Gamma{} \vdash p : \botProp\\ \Gamma{} \vdash A : \sort}
    {\Gamma{} \vdash \exfalso{p}{A} : A}
    \ilabel{infrule:bot-prop-elim}
    \\
    \inferrule[$\forall$-Wf]
    {\Gamma{} \vdash A : \sort \\ \Gamma{}, x : A \vdash P : \prop}
    {\Gamma{} \vdash \forall (x : A), P : \prop}
    \ilabel{infrule:forall-wf}
    \and
    \inferrule[$\forall$-Intro]{\Gamma{}, x : A \vdash p : P}{\Gamma{} \vdash
      \lambda (x : A). p : \forall (x : A), P}
    \ilabel{infrule:forall-intro}
    \and
    \inferrule[$\forall$-Elim]{
      \Gamma{} \vdash f : \forall(x : A), P\\
      \Gamma{} \vdash a : A}
    {\Gamma{} \vdash f~a : P[a/x]}
    \ilabel{infrule:forall-elim} \\

    \inferrule[List-Catch-Prop]
    { \Gamma \vdash P : \listT\,A \to \prop\\
      \Gamma \vdash h_{\nilK} : P\,\nilK[A]\\
      \Gamma \vdash h_{\consK} : \P (a : A)(l : \listT\,A) . P\,l {\to} P(\consK[A]\,a\,l)\\
      \Gamma \vdash h_{\err} : P\,\err_{\listT~A}\\
      \Gamma \vdash h_{\?} : P\,\?_{\listT~A}\\
      \Gamma \vdash l : \listT\,A \\
    }
    {\Gamma \vdash \lcatchProp{A}\,P\,h_{\nilK}\,h_{\consK}\,h_{\err}\,h_\?\,l : P\,l}
    \ilabel{infrule:ccic-list-ind-prop}
    \\       
    \inferrule[$\BoxProp$-Wf]{\Gamma{} \vdash P : \prop}
    {\Gamma{} \vdash \BoxProp~P : []_0}
    \ilabel{infrule:ccic-box}
    \and
    \inferrule[$\BoxProp$-Intro]
    {\Gamma{} \vdash p : P}
    {\Gamma{} \vdash \boxK_{P}~p : \BoxProp~P}
    \ilabel{infrule:ccic-box-intro}
    \vspace{-3mm}
    \and
    \inferrule[$\BoxProp$-Elim]
    { \Gamma \vdash A : \BoxProp\, P \to \sort \quad
      \Gamma \vdash h : \P (p : P) . A(\boxK_{P} p)\\
      \Gamma \vdash h_{\err} : A\,\err_{\BoxProp\, P}\\
      \Gamma \vdash h_{\?} : A\,\?_{\BoxProp\, P}\\
      \Gamma \vdash t : \BoxProp\, P \\
    }
    {\Gamma \vdash \boxcatch{P} A\,h\,h_{\err}\,h_\?\,l : A\,t}
    
    % { \Gamma, x : \BoxProp\, P \vdash A : []_i \\
    %   \Gamma{}, p : P \vdash a : A\subs{\boxK_{P}\,p}{x} \\
    %   \Gamma{} \vdash t : \BoxProp\,P}
    % {\Gamma{} \vdash \catch_{\BoxProp\,P}\,(x.A)\,(p.a)\,t : A\subs{t}{x}}
    \ilabel{infrule:ccic-box-elim}
    \\
    \redruleAux{\ascdom{t}{\BoxProp\,P}{\BoxProp\,Q}}{\err_{\BoxProp\,Q}}{Box-Box}
    \ilabel{redrule:box-box}

    \Ctx ::= \ldots 
   \mid \forall (x : \Ctx),B \mid \forall (x : A),B \mid \boxK_{\Ctx} 
    \mid \boxcatch{\Ctx}
    \end{mathpar}
   %  \begin{align*}
   %    \BoxProp\,\botProp &\redCCIC \bot
   %    &\boxK_{\botProp}\,e &\redCCIC \exfalso{e}{\botProp}\\
   %    \BoxProp\,(\forall (x : A).~P) &\redCCIC \Pi(x : A).~\BoxProp\,P
   %    &\boxK_{\forall (x : A).P}\,(\lambda(x : A).\,p) &\redCCIC \lambda(x : A).\,\boxK_{P}\,p\\
   %    \BoxProp\,(\forall (x : Q).~P) &\redCCIC \Pi(x : \BoxProp\,Q).~\BoxProp\,P
   %    &\boxK_{\forall (x : Q).P}\,(\lambda(x : Q).\,p) &\redCCIC \lambda(x :
   %    \BoxProp\,Q).\,\boxK_{P}\,p
   %  \end{align*}
   %  \km{shouldn't x be unboxed in p here ?} 
   % \km{This extension might not be useful in this section}

\end{small}
  \caption{\CCICPrec: Extensions of typing and reduction for propositions and boxing ($\sort = \prop$ or $[]_i$)}
  \label{fig:ccic-prop}
\end{figure}
The success of the separation of layers is given by the following theorem, proven in~\cref{sec:model}.
\begin{theorem}[Logical soundness of \CCICPrec]
  \label{thm:logical-soundness}
  If \MLTT extended with strict propositions is consistent then there is no
  closed proof $\vdash e : \botProp$ of the empty proposition $\botProp: \prop$ in
  \CCICPrec.
\end{theorem}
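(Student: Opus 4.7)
The plan is to establish this as a relative consistency result by constructing a model of \CCICPrec inside \MLTT extended with strict propositions, in such a way that the empty proposition is mapped to (something isomorphic to) the empty strict proposition. Then any closed proof $\vdash e : \botProp$ in \CCICPrec would yield a closed inhabitant of the empty strict proposition in the target, contradicting its assumed consistency.

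First, I would interpret the two layers separately, exploiting the fact that the pure sort $\prop$ of \CCICPrec contains no gradual primitives at all. Propositions and their proofs would be sent essentially to themselves: $\prop$ maps to the target's sort of strict propositions, $\forall$ to universal quantification, $\botProp$ to the empty strict proposition, and proof-irrelevance~(\nameref{infrule:prop-irr}) is validated on the nose. This part of the translation is almost a syntactic identity, because we explicitly forbid errors and unknowns in $\prop$.

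Second, for the impure hierarchy I would adapt the existing discrete model of \CCIC: each closed type at level $i$ is interpreted as a pointed set (or a type equipped with two distinguished elements representing $\err$ and $\?$), the universes $[]_i$ are interpreted via an inductively defined code type coupled with a decoding function, and casts are interpreted by case analysis on the heads of their source and target, matching the reduction rules of \cref{fig:ccic-reduction}. Inductive types such as $\listT$ are interpreted as their strict counterparts extended with two extra constructors for $\err$ and $\?$; the catch operators become genuine pattern matches. Cumulativity and $\coe/\coeinv$ are interpreted using the inclusion of codes at level $i$ into codes at level $i+1$. The bridge $\BoxProp\,P$ is then interpreted as the disjoint sum of (the interpretation of)~$P$ with two tag elements, matching the typing rules (\nameref{infrule:ccic-box}, \nameref{infrule:ccic-box-intro}, \nameref{infrule:ccic-box-elim}) and the reduction~\nameref{redrule:box-box}.

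The remaining steps are routine: by induction on derivations, show that the translation is well-defined, sends convertible terms to definitionally equal terms of the target, and preserves typing. Because the translation sends $\botProp$ to the empty strict proposition, and sends the impure layer nowhere near $\prop$ (errors and unknowns only ever appear in the interpretation of $[]_i$, never in that of $\prop$), a closed $\vdash e : \botProp$ yields a contradiction in the target and hence cannot exist. The main obstacle is validating the intricate reduction rules for casts—particularly \nameref{redrule:up-down}, \nameref{redrule:ind-germ}, \nameref{redrule:cast-prod-err}, and the interaction with cumulativity—which requires the germ-based stratification used in the \CCIC model to be lifted to \CCICPrec; but since these rules never touch $\prop$, they cannot interfere with the preservation of logical consistency once the impure layer is shown to be a model of its own typing discipline. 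The actual construction is carried out in \cref{sec:model} and mechanised in \Coq.
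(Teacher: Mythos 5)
Your overall strategy---a syntactic model of \CCICPrec in \MLTT with strict propositions, sending $\prop$ to the target sort of strict propositions and $\botProp$ to the empty one, interpreting the impure hierarchy via a code-based universe with casts defined by case analysis on codes, inductive types padded with $\err$/$\?$ constructors, and $\BoxProp$ as a tagged sum---is exactly the route the paper takes in \cref{sec:model}, and your final reduction (a closed proof of $\botProp$ yields a closed inhabitant of the empty type in the target) is the paper's concluding step verbatim.

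There is, however, a concrete gap: your plan never interprets the internal precision relations, which are part of the syntax of \CCICPrec and live precisely in the layer you describe as ``almost a syntactic identity.'' The formers $A \precisionType{i} B$ and $t \precision{A}{B} u$ are primitive propositions (\cref{fig:ccic-prec}), and the theory is further extended with constants inhabiting propositions about them---quasi-reflexivity, transitivity, the extremality of $\err$ and $\?$, cast decomposition, and the \eppair{} property of casts between precision-related types (\cref{fig:ccic-prec-axioms,fig:ccic-prec-properties}). A closed proof of $\botProp$ may freely use all of these constants, so the translation must give precision a meaning and exhibit inhabitants of every translated axiom; otherwise your induction on derivations is undefined at each of these rules and the consistency claim only covers the precision-free fragment. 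This is where most of the work in the paper's model actually lies: a relational layer defining a homogeneous precision relation by induction on universe levels and on codes (including a bespoke inductive relation on the interpretation of $\?_{[]}$), and a property layer (\cref{thm:properties-precision}) proving, by induction on multisets of codes, that each code carries a partial preorder and that casts between precision-related codes form \eppairs. Your sketch should at least acknowledge that validating these $\prop$-valued axioms is a proof obligation of the model, not a byproduct of keeping errors out of $\prop$.
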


\subsection{Crossing Sort Boundaries}
\label{sec:ccic-crossing}

\paragraph{Eliminations}
Because of the important differences between the two layers of \CCICPrec, their
interactions need to be finely controlled in order to
stay well-behaved.
This is done by providing restricted elimination of inhabitants of
types from one layer to types of the other.

In one direction, eliminating from the pure propositional layer to the impure
gradual one is allowed only through the empty proposition $\botProp$,
by using the explosion principle, a.k.a.~{\em ex-falso} (\nameref{infrule:bot-prop-elim}).
This can be seen as a strengthening of the singleton elimination criterion
of the usual \coqe{Prop} sort of Coq,
in a way that respects definitional proof-irrelevance~\cite{gilbert:hal-01859964}.
Effectively, one is allowed to use a proof of a proposition to inhabit
a type only to show that we are in an inconsistent context,
typically in an unreachable branch of a match.
In practice, this ends up not being too restrictive,
since quite a few propositions are defined on top of $\botProp$.
For instance, internal precision defined in \cref{sec:precision} ultimately reduces to a combination of $\forall$ and
$\botProp$ after case analysis on its type parameters.
In the other direction,
eliminators from the impure layer to the pure layer
need to take errors and $\?$ into account.
Indeed, since these terms do not exist as
propositions, they cannot be used when matching on an impure argument.
Thus, the need for a $\catch$ recursor is even more dire than for
types, because we cannot rely on errors in the target type to provide
``default'' values for an $\err$ or $\?$ scrutinee,
as an $\ind$ recursor does.
On lists, for instance, we get $\lcatchProp{A}$, which behaves exactly
the same as $\lcatch{A}$ except that it can be used on predicates of type
$\listT\,A \to \prop$. 

% \cref{fig:list-catch}, where
% %
% $\catch_{\listT}$ takes a predicate $P$ on lists, assumptions $h_{\nilK}$ and
% $h_{\consK}$ as for the standard eliminator $\ind_{\listT}$ (see
% \nameref{infrule:ccic-list-ind}) and additional arguments $h_{\err}$ and $h_\?$
% for the two exceptional cases. Reduction rules (we give the ones for $\nilK$
% and $\err$, the others two are similar) use
% the given branch in each case, be it an exceptional one or not.

\paragraph{Embedding Propositional Invariants within $[]$}
In order to quantify over a proposition in a type, or carry a proof along some
data, propositions must be embeddable into types and equipped with $\err$ and
$\?$.
As illustrated in \cref{sec:action-subsets} with the case of gradual subset types, 
this is achieved through the type $\BoxProp\,P$ (\cref{fig:ccic-prop})
that packs a proposition $P : \prop$ (\nameref{infrule:ccic-box}).
A proof $p : P$ of a proposition can be used to inhabit $\BoxProp\,P$ using the
constructor $\boxK_P$ (\nameref{infrule:ccic-box-intro}).
Moreover, as any other type, $\BoxProp\,P$ is equipped with exceptional
constructors $\err_{\BoxProp\,P}$ and $\?_{\BoxProp\,P}$.
The eliminator on $\BoxProp$ is given by a catch operator, similar to
the one for lists (\nameref{infrule:ccic-box-elim}), whose obvious
reduction rules are omitted.

We extend the reduction of casts to $\BoxProp$ (\nameref{redrule:box-box}) by
reducing a cast between $\BoxProp$-types to an error.
This peculiar definition is chiefly due to the fact that we cannot decide
entailment between arbitrary propositions $P$ and $Q$, and so cannot decide
when casting $\boxK_{P}~p$ to $\BoxProp~Q$ should return some $\boxK_{Q}~p'$ 
or fail.

\section{Internalizing precision}
\label{sec:precision}

The pure logical layer $\prop$ is used to assert properties of the impure gradual layer $[]$.
But none of the primitives introduced in \cref{sec:ccic} enable direct reasoning
on the most important relation between gradual programs: precision.
In this section, we provide exactly this,
by extending the logical layer with an internal precision
relation specifying the behavior of casts (\cref{sec:prec-relation}).

However, having a definition of precision is not enough:
as we cannot reason by induction
on types, general properties such as transitivity of precision cannot
be derived from the definition in \cref{sec:prec-relation} alone.
This is why we also need to directly add properties of precision
(\cref{sec:precision-properties}).
As those are added as new constants inhabiting propositions,
we do not need to specify anything about them. Indeed,
all inhabitants of propositions are definitionally equal, so none of them
is better than another. The only thing of importance is to preserve consistency of
the theory, by ensuring that the properties are validated by the model (\cref{sec:model}).

Although the impure layer does not globally satisfy graduality, a large fragment of the language
 behaves well, in the sense that it is monotone
with respect to precision (\cref{sec:prec-monotone-frag}). In particular, we show that this
fragment subsumes \shnew{
  \CCICPrecs, a fragment inspired by} \CCICs, the normalizing gradual variant proposed by \cite{lennonAl:toplas2022}
(\cref{thm:dgg-frag}).

% \begin{itemize}
%   \item are strict props necessary? a priori, no, more practical 
%  \item putting ⊑ in Type would mean:
%     \begin{itemize}
%     \item dealing with higher-order coherences (cubical, etc.)
%     \item precision proofs could be exceptional or unknown: contradicts goal to have sound reasoning about gradual[*] programs
%     \item having to define precision for precision proofs -- not possible to do without having exceptions/errors
%     \item   (still interp prec as ep)
%     \item at least we can't have ⊑ : Type -> Type -> Type to be monotone in both its arguments
%           (A ⊑ A' and B ⊑ B' and A⊑B does NOT imply A'⊑ B')
%         but we could have it, just not being monotone
%     \end{itemize}
%   \end{itemize}

\subsection{The Precision Relation}
\label{sec:prec-relation}
The \emph{raison d'être} of the propositional layer is to host the precision
relation, that provides
an entry point for specifying correctness properties of casts.
Precision is formulated in two distinct flavors for types and terms:
a homogeneous relation $A \precisionType{i} B$ on types
$A,B : {[]}_{i} $ of a common universe level $i$, 
and a heterogeneous relation $a \precision{A}{B} b$ between terms $a : A$ and
$b : B$.
These two precision relations are internalized as two new primitive type formers,
% , provided that $A,B$ are related by precision
% $e : A \precisionType{i} B$.
%
and their content is described by their behaviour on
their type parameters.
In practice we present these relations through a confluent reduction system
in~\cref{fig:ccic-prec}, corresponding to
a definition by case analysis on the type parameters,
which is how the model of~\cref{sec:model} proceeds.
We note $a \equiprecision{A} a'$ for $a \precisionDiag{A} a' \wedge a'
\precisionDiag{A} a$.

%
%Before diving into the actual content of these relations, 
Let us now explain the
two main properties we expect to hold.
%
% The intuition leading to these properties, made formal below
% using embedding projection pairs, is that casts between types related by
% precision should be well behaved.\mb{I am not of sure of what this sentence
% brings here.}
%
First, the precision relation should be \emph{transitive}:
there should be an operation $\prectrans$ such that if $e : A
\precisionType{i} B$ and $e' : B \precisionType{i} C$ then $e \prectrans e' : A
\precisionType{i} C$.
Second, the precision relation cannot be reflexive. Indeed, reflexivity
at function types $A \to B$ entails monotonicity: due to the way we 
define precision, if a function $f : A
\to B$ verifies $f \precision{}{} f$ then for any $a \precision{A}{A} a'$, $f\,a
\precision{B}{B} f\,a'$. But we do not want to globally forbid such
non-monotone features, as we rather made the design choice to allow some
non-monotonicity in \CCICPrec, \textit{e.g.} the $\catch$ construct.
% casts from
% $\?_{{[]}_i} {\to} \?_{{[]}_i}$ to itself cannot be well-behaved according
% to the fire triangle of graduality.
%
As a consequence, reflexivity becomes a property, and
we say that a type $A : {[]}_{i}$ is \emph{self-precise},
noted $\spType{A}{i}$, when it is a reflexive element of $\precisionType{i}$.
Similarly, a term $a : A$ is called
\emph{self-precise}, noted $\sp{a}{A}$, when it is related to itself by
$\precision{A}{A}$.
Not every type is self-precise, but the precision relation is
\emph{quasi-reflexive}: if two types $A,B$ are related by precision $e : A
\precisionType{i} B$,
both are self-precise,\footnote{In order to obtain
  transitivity on function types, the precision relation needs to be
  at least co-transitive, a property obtained here as a consequence of
  quasi-reflexivity.} so we have self-precision proofs $\lrefl{e} :
\spType{A}{i}, \urefl{e} : \spType{B}{i}$.

\begin{figure}
  \begin{small}
  \begin{mathpar}
    \inferrule[$\sqsubseteq$-Type-Wf]
    {\Gamma \vdash A, B : []_i}
    {\Gamma{} \vdash A \precisionType{i} B : \prop}
    \ilabel{infrule:prec-type-wf}
    \and
    \inferrule[$\sqsubseteq$-Wf]
    {\Gamma{} \vdash A,B : []_i\\
      \Gamma{} \vdash t : A \\
      \Gamma{} \vdash u : B}
    {\Gamma{} \vdash t \precision{A}{B} u : \prop}
    \ilabel{infrule:prec-wf}
    \\
    % \inferrule[${[]}$-Refl]
    % {\Gamma{} \vdash}
    % {\Gamma{} \vdash \univrefl{i}{j} : []_i \precision{j} []_i }
    % \ilabel{infrule:univ-refl}
    % \and
    % \inferrule[$\prop$-Refl]
    % {\Gamma{} \vdash}
    % {\Gamma{} \vdash \proprefl{i}{j} : \prop_i \precision{j} \prop_i }
    % \ilabel{infrule:prop-refl}
    % \\
    \inferrule[${[]}$-Refl-Ty]
    {\Gamma \vdash}
    {\Gamma \vdash \congTm{[]_i} : \selfprecisionType{[]_i}{i+1}}
    \ilabel{infrule:univ-prec-refl-ty}
    \and
    \inferrule[$\prop$-Refl-Ty]
    {\Gamma \vdash}
    {\Gamma \vdash \congTm{\prop} : \selfprecisionType{\prop}{0}}
    \ilabel{infrule:prop-prec-refl-ty}
    \and
    \inferrule[$\cum$-Cong-Ty]
    {}
    {\cum A \precisionType{i+1} \cum B \redCCIC A \precisionType{i} B}
    % {\Gamma \vdash w : A \precisionType{i} B }
    % {\Gamma \vdash \congTm{\cum}\,w : \cum A \precisionType{i+1} \cum B}
    \ilabel{infrule:cum-prec-cong-ty}
    \and
    \inferrule[$\listT$-Cong-Ty]
    {}
    {\listT A \precisionType{i} \listT B \redCCIC A \precisionType{i} B}
    % {\Gamma \vdash w : A \precisionType{i} B }
    % {\Gamma \vdash \congTm{\listT}\,w : \listT A \precisionType{i} \listT B}
    \ilabel{infrule:list-prec-cong-ty}
    % \and
    % \inferrule[$\prop$-Refl]
    % {}
    % {\prop \precisionType{0} \prop \redCCIC \top}
    % \ilabel{infrule:prop-prec-refl}
    \and
    \inferrule[$\P$-Cong]
    {}
    {\P x : A. B \precisionType{i} \P x : A'. B' \redCCIC A \precisionType{i} A' \wedge
      {\left \{
          {\arraycolsep=1.5pt
            \begin{array}{lrclll}
             \forall a_0\, a_1,& a_0 \precision{A}{A} a_1 & \to & B[a_0/x] &\precisionType{i} B[a_1/x] &\wedge\\
             \forall a'_0\, a'_1,& a'_0 \precision{A'}{A'} a'_1 & \to & B'[a'_0/x'] &\precisionType{i} B'[a'_1/x'] &\wedge\\
             \forall a\, a',& a \precision{A}{A'} a' & \to & B[a/x] &\precisionType{i} B'[a'/x']&
           \end{array}}
        \right .}
      % \km{Reviewer B complains that B/B' are formally not functions and we
      %   should substitute rather than apply}
      % {\begin{dcases*}
      %     \forall a_0\, a_1.~ a_0 \precision{}{\lrefl{e_A}} a_1 \to B\,a_0
      %     \precisionType{i} B\,a_1 \quad\wedge &\\
      %     \forall a'_0\, a'_1.~ a'_0 \precision{}{\urefl{e_A}} a'_1 \to B'\,a'_0 \precisionType{i} B'\,a'_1
      %     \quad\wedge &\\
      %     \forall a\, a'.~ a \precision{}{e_A} a' \to B\,a \precisionType{i} B'\,a'
      %   \end{dcases*}}
    }
    % {\cum\,(\P^{i,j} x : A. B) \precisionType{\max(i,j)+1} \cum\,(\P^{i,j} x :
    %   A'. B') \redCCIC
    %   {\begin{dcases*}
    %       (e_A : A \precisionType{i} A') \wedge &\\
    %       \forall a_0\, a_1.~ a_0 \precision{}{\lrefl{e_A}} a_1 \to B\,a_0
    %       \precisionType{j} B\,a_1 \quad\wedge &\\
    %       \forall a'_0\, a'_1.~ a'_0 \precision{}{\urefl{e_A}} a'_1 \to B'\,a'_0 \precisionType{j} B'\,a'_1
    %       \quad\wedge &\\
    %       \forall a\, a'.~ a \precision{}{e_A} a' \to B\,a \precisionType{j} B'\,a'
    %     \end{dcases*}}}
    \ilabel{infrule:pi-prec-cong}
    \\
    \inferrule[${[]}$-$\sqsubseteq$]
    {}
    {A \precision{{[]_{i}}}{{[]_{i}}} B\redCCIC A \precisionType{i} B \wedge B \precisionType{i} \?_{[]_i}}
    \ilabel{infrule:univ-prec-def}
    \and
    \inferrule[${[]}$-$\?$-Bound]
    { \Gamma \vdash }
    {\Gamma \vdash \unkUpTm{[]_i} : []_i \precisionType{i+1} \?_{[]_{i+1}}}
    \ilabel{infrule:univ-prec-unk}
    \and
    % \inferrule[$\err_{[]}$-Refl]
    % { }
    % {\vdash \wit : \selfprecision{\err_{{[]}_i}}{[]_i}}
    % \ilabel{infrule:err-prec-refl}
    % \and
    % \inferrule[${[]}$-Refl-Tm]{ }
    % {\vdash \wit : \selfprecision{[]_i}{[]_{i+1}}}
    % \ilabel{infrule:univ-prec-refl-tm}
    \and
    \inferrule[$\prop$-$\?$-Bound]
    { \Gamma \vdash }
    { \Gamma \vdash \unkUpTm{\prop} : \prop \precisionType{0} \?_{[]_0}}
    \ilabel{infrule:prop-prec-unk}
    \and 
    % \inferrule[$\prop$-Refl]
    % { }
    % {\vdash \wit : \selfprecision{\prop}{[]_0}}
    % \ilabel{infrule:prop-prec-refl}
    % \and 
    \inferrule[$\cum$-$\?$-Bound]
    {\Gamma{} \vdash w : \selfprecisionType{A}{i}}
    {\Gamma{} \vdash \unkUpTm{\cum}\,w : \cum\,A \precisionType{i+1} \?_{[]_{i+1}}}
    \ilabel{infrule:cum-prec-unk}
    % \and
    % \inferrule[$\cum$-Cong]
    % {\Gamma{} \vdash \wit : A \precisionType{i} A'}
    % {\Gamma{} \vdash \wit : \cum\,A \precisionDiag{[]_{i+1}} \cum\, A'}
    % \ilabel{infrule:cum-prec-cong}
    \and
    \inferrule[$\listT$-$\?$-Bound]
    {\Gamma{} \vdash w : A \precisionType{i} \?_{[]_i}}
    {\Gamma{} \vdash \unkUpTm{\listT}\,w : \listT\,A \precisionType{i} \?_{[]_i}}
    \ilabel{infrule:list-prec-unk}
    % \and
    % \inferrule[$\listT$-Cong]
    % {\Gamma{} \vdash \wit : A \precisionDiag{[]_i} A'}
    % {\Gamma{} \vdash \wit : \listT\,A \precisionDiag{[]_i} \listT\, A'}
    % \ilabel{infrule:list-prec-cong}
    \and
    \inferrule[$\err$-Refl]
    { \Gamma{} \vdash w : \spType{A}{i}}
    {\Gamma{}\vdash \reflPrec{\err{}}\,w : \selfprecision{\err_{A}}{A}}
    \ilabel{infrule:err-prec-refl}
    \and
    \inferrule[$\?$-Refl]
    { \Gamma{} \vdash w : \spType{A}{i}}
    {\vdash \reflPrec{\?}\,w : \selfprecision{\?_{A}}{A}}
    \ilabel{infrule:unk-prec-refl}
    \and
    \inferrule[$\err$-$\sqsubseteq$]
    { \Gamma{} \vdash w_A : \spType{A}{i}\\
      \Gamma{} \vdash w_B : \spType{B}{i}\\
      \Gamma{} \vdash w_b : \sp{b}{B}}
    {\Gamma{} \vdash \errMin\,w_A\,w_B\,w_b : \err_{A} \precision{A}{B} b}
    \ilabel{infrule:prec-err}
    \and
    \inferrule[$\?$-$\sqsubseteq$]
    { \Gamma{} \vdash w_A : \spType{A}{i}\\
      \Gamma \vdash w_a : \sp{a}{A} \\ 
      \Gamma{} \vdash w_B : \spType{B}{i}}
    {\Gamma{} \vdash \unkMax\,w_A\,w_a\,w_B : a \precision{A}{B} \?_{B}}
    \ilabel{infrule:prec-unk}
    \\
    % \inferrule[$\prop$-Refl]
    % { }
    % {\selfprecisionType{\prop}{0}}
    % \ilabel{infrule:prop-prec-refl}
    % \and
    % \inferrule[$\err_{[]}$-Refl]
    % { }
    % {\selfprecisionType{\err_{{[]}_i}}{i}}
    % \ilabel{infrule:err-prec-refl}
    % \and
    % \inferrule[$\cum$-Cong]
    % { A \sqsubseteq_i A'}
    % {\cum\,A \sqsubseteq_i \cum\, A'}
    % \ilabel{infrule:cum-prec-cong}
    % \and
    % \inferrule[$\listT$-Cong]
    % { A \sqsubseteq_i A'}
    % {\listT\,A \sqsubseteq_i \listT\, A'}
    % \ilabel{infrule:list-prec-cong}
    % \\
    % \inferrule[$\?$-Refl]
    % { }
    % {\selfprecisionType{\?_{{[]}_i}}{i}}
    % \ilabel{infrule:unk-prec-refl}
    % \and
    % \inferrule[$\listT$-$\?_{[]}$]
    % {A \precisionType{i} \?_{[]_i}}
    % {\listT\,A \precisionType{i} \?_{[]_i}}
    % \ilabel{infrule:list-prec-unk}
    % \and
    % \inferrule[$\cum$-$\?_{[]}$]
    % {\selfprecisionType{A}{i}}
    % {\cum A \precisionType{i+1} \?_{[]_{i+1}}}
    % \ilabel{infrule:cum-prec-unk}
    % \and
    % \inferrule[${[]}$-$\?_{[]}$]{ }
    % {[]_i \precisionType{i+1} \?_{[]_{i+1}}}
    % \ilabel{infrule:univ-prec-unk}
    % \\
    % \inferrule[${[]}$-Refl]
    % {}
    % {[]_i \precisionType{i+1} []_i \redCCIC \top}
    % \ilabel{infrule:univ-prec-refl}
    % \and
    \inferrule[$\prop$-$\sqsubseteq$]
    { \Gamma{} \vdash P : \prop\\ \Gamma{} \vdash Q : \prop}
    {\Gamma{} \vdash \propIrr\,P\,Q : P \precisionDiag{\prop} Q}
    \ilabel{infrule:prop-prec-def}
    % \and
    % \inferrule[$\err$-Refl]{ }{\selfprecision{\err_{A}}{A}}
    % \ilabel{infrule:prec-err-refl}
    \and
    \inferrule[$\BoxProp$-Cong]
    {}
    {\BoxProp\,P \precisionType{0} \BoxProp\,Q \redCCIC P \precision{\prop}{\prop} Q}
    \ilabel{infrule:box-prec-cong}
    \and
    \inferrule[$\BoxProp$-$\sqsubseteq$]
    {\Gamma \vdash b : \BoxProp\,P\\ \Gamma{} \vdash b' : \BoxProp\,Q}
    {\Gamma \vdash \boxPrecIrr\,b\,b' : b \precision{\BoxProp\,P}{\BoxProp\,Q} b'}
    \ilabel{infrule:box-prec} 
    \\
    \inferrule[$\listT$-$\sqsubseteq$-$\nilK$]
    { \Gamma{} \vdash A : [] }
    {\Gamma \vdash \congTm{\nilK}\,A : \selfprecision{\nilK}{\listT{A}}}%{\nilK \precision{\listT\,A}{\listT\,A'} \nilK}
    \ilabel{infrule:list-prec-nil}
    \and
    \inferrule[$\listT$-$\sqsubseteq$-$\consK$]{}{\consK\,a\,l \precision{\listT\,A}{\listT\,A'}
      \consK\,a'\,l \redCCIC a \precision{A}{A'} a' \wedge l \precision{\listT\,A}{\listT\,A'} l'}
    \ilabel{infrule:list-prec-cons}
    \\
    % \inferrule[$\nilK$-$\sqsubseteq$-$\?$]
    % { \Gamma{} \vdash A : [] }
    % {\Gamma{} \vdash \wit : \nilK \precision{\listT\,A}{\listT\,A} \?_{\listT\,A}}
    % \ilabel{infrule:list-nil-prec-unk}
    % \and
    % \inferrule[$\consK$-$\sqsubseteq$-$\?$]{}{\consK\,a\,l
    %   \precision{\listT\,A}{\listT\,A} \?_{\listT\,A} \redCCIC a
    %   \precision{A}{A} \?_A \wedge l \precision{\listT\,A}{\listT\,A} \?_{\listT\,A}}
    % \ilabel{infrule:list-cons-prec-unk}
    % \\
    \inferrule[NoConf-$\nilK$-$\consK$]
    {}
    {\nilK \precision{\listT\,A}{\listT\,A'} \consK\,a\,l \redCCIC \botProp}
    \ilabel{infrule:list-prec-noconf-nil-cons}
    \and
    \inferrule[NoConf-$\consK$-$\nilK$]
    {}
    {\consK\,a\,l \precision{\listT\,A}{\listT\,A'} \nilK \redCCIC \botProp}
    \ilabel{infrule:list-prec-noconf-cons-nil}
    \and
    % \inferrule[$\sqsubseteq$-$\?_{[]}$]
    % {\Gamma{} \vdash e : \ascdom{t}{A}{\?_{[]}} \precision{A}{A} \?_A}
    % {\Gamma{} \vdash \wit :  t \precision{\?_{[]}}{\?_{[]}} \?_{[]}}
    % \inferrule[$\to$-Cong]
    % {}
    % {\cum\, (A {\to} B) \precisionType{i+1} \cum\, (A' {\to} B') \redCCIC A \precisionType{i} A' \wedge B \precisionType{i} B' }
    % \ilabel{infrule:arrow-prec-cong}
    \inferrule[$\Pi$-$\sqsubseteq$]
    {}
    {f \precision{\P x : A. B}{\P x : A'. B'} g \redCCIC
      {\left \{
          {\arraycolsep=1.5pt
            \begin{array}{lrcll}
              \forall a_0\, a_1,& a_0 \precision{A}{A} a_1 & \to & f\,a_0 \precision{B\,a_0}{B\,a_1} f\,a_1 &\wedge\\
              \forall a'_0\, a'_1,& a'_0 \precision{A'}{A'} a'_1 & \to & g\,a'_0 \precision{B'\,a'_0}{B'\,a'_1} g\,a'_1 &\wedge\\
              \forall a\, a',& a \precision{A}{A'} a' & \to & f\,a \precision{B\,a}{B'\,a'} g\,a'&
            \end{array}}
        \right .}}
      % {\begin{dcases*}
      %     \forall a_0\, a_1\, (a_\varepsilon : a_0 \precision{}{\lrefl{e_A}} a_1).~ f\,a_0
      %     \precision{}{e_B\,a_0\,a_1\,a_\varepsilon} f\,a_1 \quad\wedge &\\
      %     \forall a'_0\, a'_1\, (a'_\varepsilon : a'_0 \precision{}{\urefl{e_A}} a'_1).~ g\,a'_0 \precision{}{e_{B'}\,a'_0\,a'_1\,a'_\varepsilon} g\,a'_1
      %     \quad\wedge &\\
      %     \forall a\, a'\,(a_\varepsilon : a \precision{}{e_A} a').~ f\,a
      %     \precision{}{e_{BB'}\,a\,a'\,a_\varepsilon} g\,a'
      %   \end{dcases*}}}
    \ilabel{infrule:pi-prec-def} 
    \and
    \inferrule[$\cum$-$\sqsubseteq$]
    {}
    {a \precision{\iota A}{\iota B} b \redCCIC \coeinv a \precision{A}{B}
      \coeinv b}
    \ilabel{infrule:coe-prec-cong}
    
    % \\
    % \inferrule[$\listT$-$\sqsubseteq$-$\nilK$-$\consK$]{}{\nilK \precision{\listT\,A}{\listT\,A'} \consK\,a\,l \redCCIC \botProp}
    % \ilabel{infrule:list-prec-noconf-nil-cons}
    % \and
    % \inferrule[$\listT$-$\sqsubseteq$-$\consK$-$\nilK$]{}{\consK\,a\,l \precision{\listT\,A}{\listT\,A'} \nilK \redCCIC \botProp}
    % \ilabel{infrule:list-prec-noconf-cons-nil}
    % \\
    % \err_{\?_{[]}} \precision{}{\unkreflName} t \redCCIC \top
    % \and
    % \and
    % \ascdom{a}{A}{\?_{[]_i}} \precision{}{}{\unkreflName} \err_{\?_{[]}}
    % \redCCIC (e_A : \spType{A}{i}) \wedge a \precision{}{}{e_A} \err_{A}
    % \and
    % \ascdom{a}{A}{\?_{[]_i}} \precision{}{}{\unkreflName}{}
    % \ascdom{b}{B}{\?_{[]_i}} \redCCIC
    % {
    %   \begin{dcases*}
    %     (e : A \precisionType{i} B) \wedge a \precision{}{}{e} b  \quad\vee &\\
    %     (e_A : \spType{A}{i}) \wedge a \precision{}{}{e_A} \err_{A} \wedge (e_B :
    %     \spType{B}{i}) \wedge \sp{b}{e_B}
    %   \end{dcases*}
    % }
    % \and
    % \?_{\?_{[]}}\precision{}{}{\unkreflName} \err_{\?_{[]}} \redCCIC \bot
    % \and
    % \?_{\?_{[]}}\precision{}{}{\unkreflName} \ascdom{a}{A}{\?_{[]_i}} \redCCIC \bot
    % \\
  \end{mathpar}
  \end{small}
  \caption{Precision on types and terms}
  \label{fig:ccic-prec}
\end{figure}

Let us now turn to the actual content of the precision relations as defined
in~\cref{fig:ccic-prec}. \shnew{Term and type precision are internally supported by adding two 
new term formers, whose 
typing is given by} the first two rules \nameref{infrule:prec-type-wf} and
\nameref{infrule:prec-wf}. 
% tell when precision at the level of types and terms is
% well-typed.
%
\nameref{infrule:univ-prec-refl-ty} next states that each universe ${[]}_{i}$ is
self-precise (as a type), \nameref{infrule:prop-prec-refl-ty} that
$\prop$ is self-precise at level $0$, and \nameref{infrule:cum-prec-cong-ty} and
\nameref{infrule:list-prec-cong-ty} that $\cum$ and $\listT$ are congruent for
precision on types at the adequate levels.
Precision at product types is the crux of the definition of precision, we
defer its explanation of \nameref{infrule:pi-prec-cong} to after the other rules.
For now, it is only important to note that contrarily to other type formers,
there is no rule to relate product types \emph{as terms}, only \emph{as types}.
This is the technical counterpart of the intuition given in
\cref{sec:internal-prec} that precision
between products should be guarded by an explicit use of cumulativity.
Next come the rules for type formers as terms: all of them---apart, crucially,
from product types--- are either directly self-precise (as terms of $[]_i$) or
congruent because they are congruent for type precision and bounded above by
$\?_{[]}$.
Indeed, heterogeneous precision between types reduces to homogeneous
precision between types more precise than $\?_{[]_i}$ by virtue of
\nameref{infrule:univ-prec-def}, tying the knot between the two notions.
As a consequence, a proof of precision $A \precisionDiag{[]_i} A$ entails
that $A \precisionType{i} A$ as well as $A \precisionType{i} \?_{[]_i}$.
$[]_i$, $\prop$ are bounded by $\?_{[]}$ via \nameref{infrule:univ-prec-unk},
\nameref{infrule:prop-prec-unk}, whereas
% \footnote{
%   We use the wildcard $\wit$ to avoid introducing extraneous names for
%   inhabitants of strict propositions.
% }
$\cum$ require that its parameter is self-precise, rule \nameref{infrule:cum-prec-unk}, and $\listT$ that its
parameter is bounded by $\?_{\square}$, rule \nameref{infrule:list-prec-unk}.
The two exceptional types $\err_{[]}$ and $\?_{[]}$ are also self-precise, both
as types and terms of the universe, using instances of \nameref{infrule:err-prec-refl} and
\nameref{infrule:unk-prec-refl}.
%
% These reflexivity and congruence rules provide an exhaustive list of the
% types $X$ such that $\sp{X}{\univ{i}}$, hence $X \precisionType{i} \?_{[]_{i}}$
% by \nameref{infrule:univ-prec-def}, where, again, function
% types are missing.
%

%
More generally, the rules \nameref{infrule:err-prec-refl} and
\nameref{infrule:unk-prec-refl} ensure that the terms $\err_{A}$ and $\?_{A}$
are in relation with themselves, while \nameref{infrule:prec-err} and
\nameref{infrule:prec-unk} say that they are respectively minimal and
maximal---for self-precise terms of a self-precise type.
Heterogeneous precision between
propositions is degenerate % coincide with logical implication
(\nameref{infrule:prop-prec-def}), meaning that any two propositions are related
by precision.
Monotonicity of $\BoxProp$ with respect to precision on propositions
(\nameref{infrule:box-prec-cong}) means that precision between boxed propositions
is degenerate as well.
To validate this, we endow $\BoxProp$
types with a precision relation collapsing all terms (\nameref{infrule:box-prec}).
This is sensible, as it
showcases the fact that no (self-precise) context should be allowed to
distinguish two proofs of a proposition, since those, even $\BoxProp$ed, ought to be
observationally subsingletons. It also makes the eager erroring behavior of
\nameref{redrule:box-box} sensible, since the error is as good an inhabitant
of a $\BoxProp$ed proposition as any.

Cumulativity preserves the relation between types coming
from lower levels (\nameref{infrule:coe-prec-cong}), meaning that coercions
between a type and its lifting are monotone.
On inductive types the precision relation closely resembles binary
parametricity~\cite{bernardyAl:jfp2012}, relating a constructor to itself when
arguments are related (\nameref{infrule:list-prec-nil},
\nameref{infrule:list-prec-cons}).
Two no confusion principles
(\nameref{infrule:list-prec-noconf-nil-cons},
\nameref{infrule:list-prec-noconf-cons-nil}) allow to deny the relatedness of
lists that have distinct head constructors.\footnote{ In the case of lists and
  using transitivity, we can derive solely from these two rules that any
  non-exceptional constructor is discriminable from $\err_{\listT\,A},
  \?_{\listT\,A}$, \textit{e.g.} that $\nilK \notprecisionDiag{\listT\,A}
  \err_{\listT\,A}$, and $\?_{\listT\,A} \notprecisionDiag{\listT\,A}
  \err_{\listT\,A}$.
For other inductive types such as $\zeroType$ or $\unit$, these rules should be assumed primitively,
\textit{e.g.} $\?_{\zeroType} \notprecisionDiag{\zeroType} \err_{\zeroType}$
for the empty type $\zeroType$.}

Finally, we need to explain how function types are related by (type) precision.
For simplicity, we start with the non-dependent case that takes the standard
shape found in other gradual languages: two function types $A \to B$ and $A' \to
B'$ are related whenever their domains and codomains are related: $A \precisionType{i} A' \wedge B \precisionType{i} B'$.
%
% \begin{align*}
%   \label{infrule:arrow-prec-cong}
%   A {\to} B \precisionType{i} A' {\to} B' \quad\leftrightarrow \quad A \precisionType{i} A' \wedge B \precisionType{i} B'
% \end{align*}
% \begin{mathpar}
%   \inferrule[$\to$-Cong]
%   {}
%   {A {\to} B \precisionType{i} A' {\to} B' \redCCIC A \precisionType{i} A'
%     \wedge B \precisionType{i} B' }
% \end{mathpar}
The relation of precision $f \precision{A{\to}B}{A'{\to}B'} g$ between functions
$f : A \to B$ and $g : A' \to B'$ has to ensure that (1) $f$ is monotone with
respect to the precision on $A$ and $B$; (2) $g$ is monotone with respect to the
precision on $A'$ and $B'$; and (3) given inputs $a:A, a' : A'$ related by
precision $a \precision{A}{A} a'$, $f\,a : B $ is related to $g\,a' : B'$ by
$\precision{B}{B}$.
%
% \begin{mathpar}
%   \inferrule[$\to$-$\sqsubseteq$]
%   {}
%   {f \precision{A {\to} B}{A' {\to} B'} g \redCCIC
%     \left \{
%       {\arraycolsep=1.5pt
%         \begin{array}{lrcll}
%           \forall a_0\, a_1.& a_0 \precision{A}{A} a_1 &\to& f\,a_0 \precision{B}{B} f\,a_1 &\wedge\\
%           \forall a'_0\, a'_1.& a'_0 \precision{A'}{A'} a'_1 &\to& g\,a'_0 \precision{B'}{B'} g\,a'_1 &\wedge\\
%           \forall a\, a'.& a \precision{A}{A'} a'& \to& f\,a \precision{B}{B'} g\,a'&
%         \end{array}
%       }%
%     \right .
%   }
%   \ilabel{infrule:arrow-prec-def}
% \end{mathpar}
Condition (3) boils down to the standard definition of (binary) parametricity on
function types.
Additional conditions (1-2) are required to ensure quasi-reflexivity at
function types: since we do not want to globally impose that functions respect
precision, we need to explicitly require that precision only relates monotone
functions.
For a function $f : A \to B$ between self-precise types, being
self-precise is logically equivalent to being monotone with respect to
precision, so conditions $(1$-$3)$ are equivalent in that case.
In the case of dependent function types (\nameref{infrule:pi-prec-cong}),
domains must be related similarly to the non-dependent case but the codomains
must now be related \emph{as type families}, meaning that they are required to
satisfy variants of the conditions (1-3) with respect to type precision.
Finally, the relation between dependent functions is described by 
\nameref{infrule:pi-prec-def} and requires again that both functions are monotone
and map related input to related outputs, at the adequate types.

\begin{example}[Necessity of monotonicity in function types]
  \label{ex:monotonicity-function-types}
  Consider the two functions of type $\zeroType \to \unit$  given by
    $f := \catchat{\zeroType}~(\lambda (x :
    \zeroType). \unit)~\unitK~\err_{\unit}$
    and
    $g := \catchat{\zeroType}~(\lambda (x : \zeroType).
    \unit)~\?_{\unit}~\unitK$ using  the eliminator for the empty inductive type
    $\zeroType$, $\catchat{\zeroType} : \P(P : \zeroType \to
    \univ{})(h_{\err} : P\,\err_{\zeroType})(h_{\?} : P\,\?_{\zeroType})(x : \zeroType). P\,x$.
  % \begin{align*}
  %   f &:= \catchat{\zeroType}~(\lambda (x : \zeroType). \unit)~\unitK~\err_{\zeroType},\\
  %   g &:= \catchat{\zeroType}~(\lambda (x : \zeroType). \unit)~\?_{\zeroType}~\unitK.
  % \end{align*}
  %
  These functions verify that $\forall x \precisionDiag{\zeroType} y, f\,x
  \precisionDiag{\unit} g\,y$, but neither $f$ or $g$ are monotone.
  As a consequence, precision on function types need to be restricted to
  monotone functions.
  Taking $f$ to be instead the constant function with value $\err_{\unit}$, or $g$
  the constant function with value $\?_{\unit}$ shows that we really need both
  functions to be monotone.
\end{example}

\subsection{Properties of Precision}
\label{sec:precision-properties}

We now extend the theory with properties about precision that are validated by our model (presented in \cref{sec:model}), in order to allow users to reason abstractly about
precision proofs in \CCICPrec. Thus, whenever we say that a property ``holds''
in this section, it should be understood as a twofold statement:
first, the property is validated in the model, and so
we add a new constant in \CCICPrec, witnessing its truth.

\paragraph{Embedding-projection pairs}
Why do we care so much about precision?
The fundamental reason is that casts between types that are related by precision
are well-behaved.
We adopt the approach of~\citet{newAhmed:icfp2018} to characterize well-behaved
pairs of casts as those that form an embedding projection pair (\eppair).
In our setting that allows non monotone functions, the definition of an
\eppair needs to be relativized to self-precise elements.
\begin{definition}[Embedding projection pairs]
  \label{def:eppair}
  A pair of functions $(\ascdom{}{A}{B} : A \to B, \ascdom{}{B}{A} : B \to A)$
  is an embedding projection pair, notation $\ascdom{}{A}{B} \dashv
  \ascdom{}{B}{A}$%\km{maybe not the most suitable notation}
, when:
  \begin{description}
  \item[Monotonicity] both $\ascdom{}{A}{B}$ and $\ascdom{}{B}{A}$ are monotone
    with respect to precision,
    \begin{align*}
      \forall a\,a' : A, a \precisionDiag{A} a' \to \ascdom{a}{A}{B} \precisionDiag{B} \ascdom{a'}{A}{B} \\
      \forall b\,b' : B, b \precisionDiag{B} b' \to \ascdom{b}{B}{A} \precisionDiag{B} \ascdom{b'}{B}{A} 
    \end{align*}
  \item[Adjunction] for any self-precise terms $a : A, b : B$ the following adjunction
    property is verified
    \[\sp{a}{A} \wedge \sp{b}{B} \quad\to\quad \ascdom{a}{A}{B}
      \precisionDiag{B} b \leftrightarrow a \precisionDiag{A} \ascdom{b}{B}{A},\]
  \item[Retraction]  a self-precise term $a : A$ is equiprecise
  with its downcast-upcast:
    \[\sp{a}{A} \quad \to \quad \ascdom{\ascdom{a}{A}{B}}{B}{A} \precisionDiag{A} a \]
    \shnew{The reverse precision relation 
%a \precision{A}{A} \cast{B}{A}{\cast{A}{B}{a}}$ 
is a consequence of reflexivity and the
  adjunction property.}
  \end{description}
  We call $\ascdom{-}{A}{B} : A \to B$ the \textbf{upcast}
  associated to the \eppair and $\ascdom{-}{B}{A} : B \to A$ the \textbf{downcast}.
\end{definition}
\shepherd{}{
  \begin{proposition}
    \label{prop:grip-graduality}
    In \CCICPrec, any pair of casts $(\ascdom{}{A}{B} : A \to B, \ascdom{}{B}{A} : B
    \to A)$ between types  $A \precisionType{i} B$ related by precision forms an
    embedding projection pair witnessed by
    \[\preceppair \quad:\quad \forall A\, B,\quad A \precisionType{i} B \quad\to\quad \ascdom{}{A}{B} \dashv \ascdom{}{B}{A}.\]
  \end{proposition}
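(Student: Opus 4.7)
The plan is to proceed by induction on the structure of the type precision derivation $A \precisionType{i} B$, which—thanks to the reduction-based presentation of precision in~\cref{fig:ccic-prec}—effectively amounts to a case analysis on the shapes of $A$ and $B$. For each case, we unfold the cast reductions of~\cref{fig:ccic-reduction} on $\ascdom{-}{A}{B}$ and $\ascdom{-}{B}{A}$ and verify the three \eppair{} conditions (monotonicity, adjunction, retraction) directly. Quasi-reflexivity gives us $\spType{A}{i}$ and $\spType{B}{i}$ from $A \precisionType{i} B$, which we will invoke frequently to discharge the side conditions on self-precision that appear in the rules \nameref{infrule:prec-err}, \nameref{infrule:prec-unk} and in the $\Pi$ case.

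The homogeneous diagonal cases, where $A$ and $B$ share a head constructor, are treated by the congruence rules \nameref{infrule:cum-prec-cong-ty}, \nameref{infrule:list-prec-cong-ty}, \nameref{infrule:pi-prec-cong}, and \nameref{infrule:box-prec-cong}. For $[]_i$, \nameref{redrule:univ-univ} makes both casts reduce to the identity, so all three conditions are immediate. For $\listT$, $\cum$, and $\P$, the reductions \nameref{redrule:l-l-nil}/\nameref{redrule:l-l-cons}, \nameref{redrule:prod-prod}, and Cum--Cum push the casts into subterms, so the three conditions follow from the induction hypothesis applied to the domain and codomain precisions, together with the extensional characterizations of precision at those types (\nameref{infrule:list-prec-cons}, \nameref{infrule:pi-prec-def}, \nameref{infrule:coe-prec-cong}). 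For $\BoxProp$, \nameref{redrule:box-box} forces both casts to $\err$, and the collapsed precision \nameref{infrule:box-prec} trivially validates all three conditions.

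The heterogeneous cases against the unknown type, given by \nameref{infrule:univ-prec-unk}, \nameref{infrule:prop-prec-unk}, \nameref{infrule:cum-prec-unk}, and \nameref{infrule:list-prec-unk}, are where the careful design of $\?_{[]_i}$ pays off. In each case, the upcast $\ascdom{-}{A}{\?_{[]_i}}$ either reduces directly (when $A \in \types$ already has a suitable shape) or, for $\listT$, first decomposes via \nameref{redrule:ind-germ} into a composition through $\listT\,\?_{[]}$. The downcast $\ascdom{-}{\?_{[]_i}}{A}$ then crucially cancels the upcast through \nameref{redrule:up-down}, yielding retraction, while adjunction reduces to the maximality rule \nameref{infrule:prec-unk} (for the ``$\to$'' direction, any $a \precisionDiag{A} \ascdom{b}{\?_{[]_i}}{A}$ gives $\ascdom{a}{A}{\?_{[]_i}} \precisionDiag{\?} b$ because both sides are bounded above by $\?_{\?_{[]_i}}$). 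Observe that the crucial exclusion of bare $\P$-types from $\?_{[]_i}$ is exactly what makes \nameref{redrule:up-down} apply in this stratum; without it, the retraction would fail in the product case, which is the manifestation of the Fire Triangle we avoid.

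The main obstacle is the $\P$ case, both because the definition of product precision~\nameref{infrule:pi-prec-def} carries three monotonicity/heterogeneous clauses that must all be discharged, and because casts on functions unfold into nested casts on the domain (with inverted precision direction) and on the codomain (with substitutions that depend on the inner cast). To prove the adjunction at products, one has to use, given $h : \ascdom{f}{\P x{:}A.B}{\P x{:}A'.B'} \precisionDiag{} g$, the adjunction for the domain \eppair{} obtained inductively to transport an argument $a' : A'$ back along the cast, then apply the codomain adjunction pointwise; symmetry of the argument handles the other direction. Retraction is similar but requires combining the inner section and retraction using the extensional $\P$-$\eta$ conversion (\nameref{infrule:ccic-pi-eta}) to equate the doubly-cast function with $f$ pointwise up to equiprecision. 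All these steps go through in the model of~\cref{sec:model}, which is ultimately what validates the added constant $\preceppair$.
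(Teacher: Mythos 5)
Your overall decomposition (case analysis on type formers, with the $\Pi$ case and the unknown-type cases as the crux) matches the shape of the argument the paper actually carries out, but you run it in the wrong place, and this is a genuine gap rather than a stylistic difference. The paper's own proof of \cref{prop:grip-graduality} is a two-line appeal to the model: $\preceppair$ is a new \emph{constant} added to the theory, and its only justification is that \cref{thm:properties-precision} validates the corresponding statement in the model of \cref{sec:model}. The reason the paper does not (and cannot) run your induction is twofold. First, inside \CCICPrec there is no induction principle on types, and precision proofs live in the definitionally proof-irrelevant sort $\prop$, so ``induction on the structure of the type precision derivation'' is not an available elimination; the paper is explicit that even transitivity of precision cannot be derived internally and must be postulated. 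Second, as an external argument on raw syntax your case analysis is neither exhaustive nor obviously well-founded: open and neutral types trigger none of the reduction rules you rely on, and in the $\P$ case the induction hypothesis is needed at the substituted codomains $B[a/x] \precisionType{i} B'[a'/x']$, which are not structurally smaller than $\P x{:}A.B$.

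The model repairs exactly these two defects: types are interpreted as elements of an inductive family of codes, so case analysis on them is legitimate, and the induction is on a multiset of codes, with the $\Pi$ case handled through the auxiliary notions of partial preorder and \emph{indexed} partial preorder (a type family with casts whose fibers carry embedding-projection pairs over related indices). Your closing sentence---that all these steps go through in the model---is therefore not a final remark but the entire proof; the syntactic unfolding of reduction rules that precedes it is a useful sanity check of the computational content, but it cannot stand on its own as a derivation of $\preceppair$.
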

  \begin{proof}
    The addition of the constant $\preceppair$ is justified by the model of \CCICPrec presented
    in \cref{sec:model}, in particular by the functorial component of $\El$ in
    \cref{thm:properties-precision} providing an \eppair for any two types related
    by precision.
  \end{proof}
}
\begin{figure}
  \flushleft
\begin{small}
  \emph{Quasi-reflexivity and transitivity} 
  \begin{align*}
    &&\text{Implicit} &\text{ bindings }
      w_A : \selfprecisionType{A}{i}, w_B : \selfprecisionType{B}{i}, w_C : \selfprecisionType{C}{i}:\\
    \lrefl{-} &:A \precisionType{i} B \to A \precisionType{i} A
      & \lrefl{-} &: \forall \{w_A\,w_B\}, a \precision{A}{B} b \to a \precisionDiag{A} a\\
    \urefl{-} &: A \precisionType{i} B \to B \precisionType{i} B
      & \urefl{-} &: \forall \{w_A\,w_B\}, a \precision{A}{B} b \to b \precisionDiag{B} b\\
    -\prectrans-~ &: A \precisionType{i} B \to B \precisionType{i} C \to A \precisionType{i} C
      & - \prectrans-~ &:\forall \{w_A\,w_B\,w_C\}, a \precision{A}{B} b \to b \precision{B}{C} c \to a \precision{A}{C} c
  \end{align*}
  \medskip
  
  \emph{Decomposition of casts}
  \begin{mathpar}
    \inferrule[Upper-decomposition]
    { \Gamma \vdash w_{AX} : A \precisionType{i} X\\
      \Gamma \vdash w_{BX} : B \precisionType{i} X\\
      \Gamma \vdash w_a : \sp{a}{A}}
    { \Gamma \vdash \upperdecompl\,w_{AX}\,w_{BX}\,w_a : \ascdom{\ascdom{a}{A}{X}}{X}{B} \equiprecision{B} \ascdom{a}{A}{B}}
    \ilabel{infrule:cast-decomposition}
    % \and
    % \inferrule[Upper-decomposition-inv]
    % { \Gamma \vdash w_{AX} : A \precisionType{i} X\\
    %   \Gamma \vdash w_{BX} : B \precisionType{i} X\\
    %   \Gamma \vdash w_a : \sp{a}{A}}
    % { \Gamma \vdash \upperdecompr\,w_{AX}\,w_{BX}\,w_a : \ascdom{a}{A}{B} \precisionDiag{B} \ascdom{\ascdom{a}{A}{X}}{X}{B}}
    % \ilabel{infrule:cast-decomposition-inv}
  \end{mathpar}
  \medskip

  \emph{Decomposition of heterogenous term precision}
  \begin{align}
    \text{For}\,A \precisionType{i} X, B \precisionType{i} X, \qquad a \precision{A}{B} b \qquad \leftrightarrow \qquad \selfprecision{a}{A} \quad\wedge\quad\ascdom{a}{A}{X} \precisionDiag{X} \ascdom{b}{B}{X} \quad  \wedge \quad \selfprecision{b}{B}
    \label{eq:heterogeneous-precision-decomposition}
  \end{align}
\end{small}
  \caption{Axioms of precision}
  \label{fig:ccic-prec-axioms}
\end{figure}
\begin{figure}
  \flushleft
  \begin{small}
  \emph{Functoriality \& monotonicity of casts.}
  \begin{mathpar}
    \inferrule[Cast-Id]
    {\selfprecisionType{A}{i}\\\selfprecision{a}{A}}
    {\ascdom{a}{A}{A} \equiprecision{A} a} 
    \ilabel{infrule:cast-id}
    \and
    \inferrule[Upcast-Comp]{
      A \precisionType{i} B\\
      B \precisionType{i} C\\
      \selfprecision{a}{A}}
    {\ascdom{\ascdom{a}{A}{B}}{B}{C} \equiprecision{C} \ascdom{a}{A}{C}} 
    \ilabel{infrule:upcast-comp}
    \and
    \inferrule[Downcast-Comp]{
      A \precisionType{i} B\\
      B \precisionType{i} C\\
      \selfprecision{c}{C}}
    {\ascdom{\ascdom{c}{C}{B}}{B}{A} \equiprecision{A} \ascdom{c}{C}{A}} 
    \ilabel{infrule:downcast-comp}
    \and
    \inferrule[Cast-Mon]{}{\sp{\ascdom{}{-}{-}}{\P(A\,B :
        \univ{i}).A{\to}B}\equiv {%
        \begin{array}{l}
          \forall A\,A'\,(w_A : A \precisionType{i} A')\,B\,B'\,(w_B : B
          \precisionType{i} B')\\
          ~(a : A)(a' : A') (w_a : a \precision{A}{A'} a'),\ascdom{a}{A}{B} \precision{B}{B'} \ascdom{a'}{A'}{B'}         
        \end{array}}}
  \end{mathpar}
  \medskip

  \emph{Characterization of heterogenous term precision}
  \begin{align}
    \text{For}\,\selfprecisionType{A}{i}, \selfprecisionType{B}{i}, \qquad a \precision{A}{B} b \qquad \leftrightarrow \qquad a \precisionDiag{A} \ascdom{b}{B}{A} \quad  \wedge \quad \selfprecision{b}{B}
    \label{eq:heterogeneous-precision}
  \end{align}
  \end{small}
  \caption{Properties of precision}
  \label{fig:ccic-prec-properties}
\end{figure}

\paragraph{Order-like properties} In order to establish that two types are
related by precision, we can use the generic axioms of the precision relations
described in \cref{fig:ccic-prec-axioms} beside those of \cref{fig:ccic-prec}.
Type precision is a quasi-reflexive and transitive relation,
and so is term precision at any self-precise type,
meaning that $\precisionDiag{A}$ is quasi-reflexive and transitive whenever
$\spType{A}{i}$. Moreover, using \cref{fig:ccic-prec}, they admit $\err$ and
$\?$ as respectively smallest and largest (self-precise) elements.
More generally, heterogeneous term precision satisfies indexed variants of
quasi-reflexivity and transitivity on self-precise types.

\paragraph{Decomposition of casts and heterogeneous precision}
A further fundamental property of casts is that they decompose through any type
less precise than both the source and the target of the cast: if $A \precisionType{i} X$
and $B \precisionType{i} X$, then for any self-precise term $a : A$, the cast
$\ascdom{a}{A}{B}$ is equiprecise to an upcast from $A$ to $X$ followed by a downcast to $B$:
\[\sp{a}{A} \to \ascdom{\ascdom{a}{A}{X}}{X}{B}
  \equiprecision{B} \ascdom{a}{A}{B}\]
Heterogenous term precision $\precision{A}{B}$ satisfy a similar decomposition
property~\cref{eq:heterogeneous-precision-decomposition} expressing that the relation between self-precise elements can be
reduced to homogeneous precision at any common upper bound $X$ of $A,B$ for type
precision.
%
% This last equivalence between heterogeneous and homogeneous term precision actually
% holds for any pair of self-precise types $A,B$.
%
In particular, whenever $A, B : []_i$ are more precise than $\?_{[]_i}$,
that is when $A,B$ are self-precise as terms of $[]_i$, $\?_{[]_i}$
provides such a common upper bound for precision.
As long as precision and cast are concerned, self precise types $A, B : []_i$
that are not bounded by $\?_{[]_i}$ can be adequately replaced by $\cum\,A$ and
$\cum\,B$, thanks to \nameref{infrule:cum-prec-cong-ty} and \nameref{infrule:coe-prec-cong}, for which $\?_{[]_{i+1}}$ is an upper bound.
As a consequence of these properties, heterogeneous term precision between self
precise types can be reformulated using solely homogeneous precision at $A,B$
and casts:
\begin{align*}
  a \precision{A}{B} b \qquad&\leftrightarrow\qquad  \selfprecision{a}{A} \>\wedge\> \ascdom{\coe\,a}{\cum A}{\?_{[]}} \precisionDiag{\?_{[]}} \ascdom{\coe\,b}{\cum B}{\?_{[]}} \>\wedge\> \selfprecision{b}{B}\\
  &\leftrightarrow\qquad a \precisionDiag{A} \ascdom{b}{B}{A}  \>\wedge\> \selfprecision{b}{B}                                                                                                                                            
\end{align*}

\paragraph{Composing casts}
Using \nameref{infrule:cast-decomposition} and the monotonicity of
embedding projection pairs, we can show that the \eppair induced
by precision are functorial: casting a self-precise term $a$ of a self-precise
type $A$ to $A$ itself is equiprecise to $a$ (\nameref{infrule:cast-id}), a
succession of upcasts between precision-related types combine to a single upcast
(\nameref{infrule:upcast-comp}) and similarly for downcasts
(\nameref{infrule:downcast-comp}).
%
\begin{comment}
Combining these with the embedding-projection properties induced by
precision-related types give rise to a rich ``yoga''\et{?} of indexed adjoints, 
deriving for instance properties like:
\[
  A \precisionType{i} B \wedge B \precisionType{i} C \to a \precision{A}{C} c \to
  \ascdom{a}{A}{B} \precision{B}{C} c
\]
\end{comment}
%

\paragraph{Failure of threesomes}
Since casts decompose in a well-behaved way through any upper bound,
it is natural to wonder whether a
similar property would hold for lower bounds, as can be found in
threesomes~\cite{siekAl:popl10} in the simply-typed gradual setting.
In general, if $Y \precisionType{i} A$, $Y \precisionType{i} B$ we can
derive from properties of casts that for any self-precise term $a : A$,
$\ascdom{\ascdom{a}{A}{Y}}{Y}{B} \precisionDiag{B} \ascdom{a}{A}{B}$, and taking
$A = B = \nat$, $Y = \err_{[]_i}$ and $a = 0$ shows that this precision ordering
can be strict.
We could still expect that this relation is an equiprecision when $Y$ is
sufficiently close to both $A$ and $B$, typically when it is their meet $A \sqcap B$ for the precision relation.
Such a condition is known as the Beck-Chevalley condition in the literature on
hyperdoctrines and descent~\cite{Law70}, and the following counterexample shows
that this property does not hold in \CCICPrec.

\begin{example}[No cast decomposition through meets]
  \ilabel{casts-meet-decomposition}
  Computing the meet of $X_1 = \nat \to \nat$ and
  $X_2 = \Pi(b : \mathbb{B})(\ifte{b}{\nat}{\bool})$ gives
  % (cad $$A_1= B_1 =\nat$$, $$A_2 = \mathbb{B}$$
  % et $$B_2(b) := \text{if}~b~\text{then}~\nat~\text{else}~\nat$$),
  % on obtient
  \begin{align*}
    X_1 \sqcap X_2 &= \Pi(x : \nat \sqcap \bool)~\nat \sqcap (\ifte{\ascdom{x}{\nat \sqcap \bool}{\bool}}{\nat}{\bool})\\
                   &= \Pi(x : \err_{[]}) ~\nat \sqcap (\ifte{\err_{\bool}}{\nat}{\bool})\\
                   &= \Pi(x : \err_{[]}) ~\nat \sqcap \err_{[]} \\ & = \err_{[]} \to \err_{[]}
  \end{align*}
  Now computing the result of casting $f : X_{1} := \lambda(n : \nat). 5$ to $X_{2}$ directly and through $X_{1} \sqcap X_{2}$,
  and evaluating both results on $\btrue$, we obtain
  \begin{align*}
    (\ascdom{f}{X_{1}}{X_{2}})~\btrue
    &= (\lambda(b : \bool). \ascdom{f\,(\ascdom{b}{\bool}{\nat})}{\nat}{\ifte{b}{\nat}{\bool}})~\btrue \\
    &= (\lambda(b : \bool). \ascdom{f\,\err_{\nat}}{\nat}{\ifte{b}{\nat}{\bool}})~\btrue \\
    &= (\lambda(b : \bool). \ascdom{5}{\nat}{\ifte{b}{\nat}{\bool}})~\btrue \\
    &= \ascdom{5}{\nat}{\ifte{\btrue}{\nat}{\bool}}= \ascdom{5}{\nat}{\nat} = 5
  \end{align*}
  and
  \begin{align*}
    % \vspace{-2em}
    (\castmid{X_{2}}{X_{1} \sqcap X_{2}}{X_{1}}{f})\,\btrue
    &= (\castmid{X_{2}}{\err_{[]}{\to}\err_{[]}}{X_{1}}{f})\,\btrue\\
    &= (\cast{\err_{[]}{\to}\err_{[]}}{X_{2}}{\lambda(x : \err_{[]}). \err_{\err_{[]}}})~\btrue\\
    &= (\lambda(b : \bool). \err_{\ifte{b}{\nat}{\bool}})~\btrue = \err_{\nat}
  \end{align*}
  \shnew{Note that for these examples the call-by-name behavior of $\err$ \cite{pedrotTabareau:esop2018} is crucial.}
  In particular, $\cast{X_{1}}{X_{2}}{f} \not\sqsubseteq \castmid{X_{1}}{X_{1} \sqcap X_{2}}{X_{2}}{f}$ and the cast from $X_{1}$ to $X_{2}$ cannot be decomposed through a type more precise than both $X_{1}$ and $X_{2}$.
  This counterexample can be adapted to use dependent sums $\Sigma$ instead of
  dependent products, showing that this phenomenon is proper to type dependency
  and function types are not crucial.
\end{example}

Note that all the properties presented in this section only apply to
self-precise terms.
The behavior of cast on types or terms that are not self-precise, typically non
monotone functions, is left partially unconstrained.
%
% \km{
% Properties:
%   \begin{itemize}
%     \item we don't have reflexivity everywhere: Pis, catch
%     % \item transitivity everywhere (means that precision itself is monotone)
%     %   x⊑\_ monotone  / \_⊑x antitone
%     % \item for transitivity: co-transitivity on domain
%     % \item upper/lower reflexivity: x⊑y -> x⊑x /\ y⊑y
%     \item "raw" transport : transp: P x -> P y  (no precision requirement) := cast (Px) (Py) t
%       "gradual" transport (with refl premise on P):  P⊑P -> x⊑y -> P x ⊑ P y
%     % \item ep-pairs: reasoning about casts
%   \end{itemize}
% }

\shnew{
  \paragraph{Dynamic Gradual Guarantee}

  A crucial property of precision is that self-precise contexts
  (\ie functions for a type $A$ to $\bool$) are monotone. As explained in \cref{sec:action-dgg},
  this is a form of Dynamic Gradual Guarantee, 
  and it follows directly from the definition of precision for functions.

  \begin{theorem}[Dynamic Gradual Guarantee]
    \label{thm:dgg}
    For any $A : []$ and boolean context $C : A \to \bool$ such that $\sp{C}{A \to \bool}$,
    if $x, y : A$ are such that $x \precision{A}{A} y$, it also holds that $C~x \precision{\bool}{\bool} C~y$.
  \end{theorem}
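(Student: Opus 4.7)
The plan is to prove this theorem essentially by unfolding the definition of precision on dependent function types, since the statement has been designed precisely so that it falls out of the definition of self-precision for functions.

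First, I would unfold the hypothesis $\sp{C}{A \to \bool}$, which by definition means $C \precision{A \to \bool}{A \to \bool} C$. Then I would apply the reduction rule \nameref{infrule:pi-prec-def}, which reduces $f \precision{\P x : A. B}{\P x : A'. B'} g$ to a conjunction of three clauses. Instantiating with $f = g = C$, $A = A' = A$ (the ambient type), and $B = B' = \bool$, the three clauses become: (1) monotonicity of $C$ with respect to left-diagonal precision, (2) monotonicity of $C$ with respect to right-diagonal precision, and (3) the heterogeneous transport condition $\forall a\, a',\; a \precision{A}{A} a' \to C\,a \precision{\bool}{\bool} C\,a'$.

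The conclusion of the theorem follows by projecting onto the third conjunct and specializing it to $a := x$, $a' := y$ with the hypothesis $x \precision{A}{A} y$. No induction, model-level reasoning, or appeal to properties from \cref{sec:precision-properties} is needed: the theorem is a direct consequence of how precision on function types was set up in \cref{fig:ccic-prec}, which is exactly why the design choice of folding monotonicity into the definition pays off here.

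There is really no hard step; the only subtlety is that the theorem is stated at a non-dependent function type $A \to \bool$, whereas \nameref{infrule:pi-prec-def} is phrased for the general dependent product. This is handled trivially, since the non-dependent case is the instance of the dependent case where $B$ and $B'$ do not mention the bound variable, so the required precision between codomains degenerates to $\bool \precisionType{0} \bool$, which holds by \emph{e.g.} $\lrefl{\congTm{[]_0}}$ together with self-precision of $\bool$. Thus the entire proof is a short derivation consisting of unfolding the definition, projecting a conjunct, and applying it to the given arguments.
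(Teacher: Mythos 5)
Your proof is correct and matches the paper's own argument, which simply observes that the theorem ``follows directly from the definition of precision for functions'': unfolding $\sp{C}{A \to \bool}$ via the rule $\Pi$-$\sqsubseteq$ and projecting the third (heterogeneous) conjunct at $x \precision{A}{A} y$ is exactly the intended derivation. Your closing remark about discharging $\bool \precisionType{0} \bool$ is harmless but unnecessary, since the term-precision rule for functions imposes no such side condition on the codomains.
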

}

\subsection{Monotone Fragment}
\label{sec:prec-monotone-frag}
By adequately restricting \CCICPrec, we can consider a fragment where every term
is monotone.
On that fragment, precision between functions only needs a single
heterogeneous component, bypassing boilerplate proofs of monotonicity.
In practice, a characterization of this fragment could be used to automatically
synthesize monotonicity proofs and lift a sizeable share of the
burden imposed to the programmer.
\shepherd{
The quest for a large monotone fragment explains our design decision to
degenerate precision on $\prop$: doing so allows for a monotone negation on
propositions.
}{}
%

% \km{Adapt the story of \CCICs to explain that we take inspiration from the
%   previous work in order to delineate this monotone fragment}
There are two main non-monotone features in \CCICPrec.
\shepherd{}{
The $\catch$ constructor, which purposely allows for a non-monotone treatment of
$\err$ and $\?$ (see \cref{ex:monotonicity-function-types}), is the first source
of non-monotone terms.
The second source of non-monotone terms lie in the use of $\Pi$ to produce terms
of a universe, which cannot be monotone due to the Fire Triangle of Graduality.
However, \citet{lennonAl:toplas2022} explain how to sidestep the latter
obstruction by systematically lifting $\Pi$ types by one universe level up, a
soluution employed} in
their \CCICs system---the only variant of \CCIC that satisfies both
normalization and graduality, by sacrificing conservativity over \CIC\shepherd{,
  ---which we can embed in \CCICPrec}{}. We can rethink \CCICs as an attempt to
guarantee that every well-typed term is self-precise in order to globally
satisfy graduality. \shnew{Inspired by this technique, we construct \CCICPrecs,
  a subsystem of \CCICPrec where every term is self-precise.}

\paragraph{Monotone $\catch$}
The typical non-monotone construction in \CCICPrec, is the $\catch$ construction
on inductive types (see \cref{ex:monotonicity-function-types}).
However there is a generic way to prove that a $\catch$ is monotone,
assuming adequate precision hypotheses on its arguments.
In the case of lists, monotonicity of $\lcatch{A}$ amounts to:
\[\forall l~l',~ l\precisionDiag{\listT\,A} l' {\to}
  \lcatch{A}\,P\,h_{\nilK}\,h_{\consK}\,h_{\err}\,h_\?\,l
  \precision{P\,l}{P\,l'} 
  \lcatch{A}\,P\,h_{\nilK}\,h_{\consK}\,h_{\err}\,h_\?\,l'
 \]
A natural proof \shepherd{}{of monotonicity} proceeds by successive induction on $l$ and $l'$ using
$\catch^{\prop}$.
The cases with distinct head constructors, e.g. $l = \nilK, l' =
\consK\,a\,l''$, are contradictory thanks to the no-confusion rules for
precision on list (for instance \nameref{infrule:list-prec-noconf-nil-cons}).
For the valid cases, we need to assume that the the branches $h_{\nilK}$ and
$h_{\consK}$ are less precise than $h_{\err}$ and more precise than $h_{\?}$,
and that $h_{\consK}$ is self-precise, e.g. $h_{\err}
\precision{P\,\err_{\listT\,A}}{P\,\nilK} h_{\nilK}$.
%(cf. \cref{fig:monotonicity-catch} for the full description).
\shnew{In particular, 
$\lind\,P\,h_{\nilK}\,h_{\consK} :=
  \lcatch{A}\,P\,h_{\nilK}\,h_{\consK}\,\err_{P\,\err_{\listT\,A}}\,\?_{P\,\?_{\listT\,A}}$
is always monotone if $P$, $h_{\nilK}$ and $h_{\consK}$ are self-precise.}

% \begin{example}[\texttt{Printf}-like functions]
%   %
%   A major limitation of \GCIC from~\cite{lennonAl:toplas2022} is that none of
%   the proposed gradual variants allow for a reasonable treatment of varying
%   arity functions such as \texttt{printf} functions that take a format string to
%   determine their number of arguments.
% \end{example}

%
\shepherd{
\paragraph{\CCICs, a gradual fragment of \CCICPrec}
We can achieve an embedding of \CCICs into \CCICPrec
by using explicit cumulativity to systematically increase
the universe level of $\Pi$ types, mimicking the typing rule of products in
the former system:}
{
\paragraph{\CCICPrecs, a gradual fragment of \CCICPrec}
In \citet{lennonAl:toplas2022}, the system \CCICs is both gradual and normalizing,
at the cost of being more conservative than \CIC: some terms are typable in \CIC, but
not in \CCICs. This is done by systematically increasing the level of a $\Pi$ type.
Drawing inspiration from this, we can define \CCICPrecs, which has exactly the same
rules for typing and conversion as \cref{fig:ccic-base,fig:ccic-reduction},
but for rule \nameref{infrule:cic-prod} replaced by the following
rule \nameref{infrule:ccics-prod}, and uses of $\lcatch{A}$ restricted to $\lind{A}$ as defined above.}
\[
  \inferrule[$\P$-\CCICPrecs]
  {\Gamma \ccicprecsty A : []_{i} \\
    \Gamma, x : A \ccicprecsty B : []_{i}}
  {\Gamma \ccicprecsty  \P x : A. B : []_{i+1}}
  \ilabel{infrule:ccics-prod}
\]
To distinguish the two, we use $\ccicprecty$ for judgments in \CCICPrec,
and $\ccicprecsty$ for judgments in \CCICPrecs\shepherd{, that
consist of the rules from \cref{fig:ccic-base,fig:ccic-reduction},
with the exception of rule \nameref{infrule:cic-prod} replaced by
rule \nameref{infrule:ccics-prod}
above}{}.
% \footnote{\citet{lennonAl:toplas2022} have more permissive rule with
%   respect to universe levels than we do, by baking in some cumulativity in
%   the formation of $\Pi$-types and casts.}
%
It is rather straightforward to define a translation $\shiftTl{-}$ from \CCICPrecs
to \CCICPrec: the translation preserves all term and type constructor but $\Pi$
types where it adds an explicit coercion \shepherd{of}{due to} cumulativity:
\[
  \begin{array}{lcl}
    \shiftTl{\P x : A. B} &:=& \cum\,(\P x : \shiftTl{A}. \shiftTl{B}) \\
    \shiftTl{\l x : A. t} &:=& \coe\,(\l x : \shiftTl{A}. \shiftTl{t}) \\
    \shiftTl{t~u} &:=& (\coeinv\,\shiftTl{t})~\shiftTl{u}
  \end{array}
\]
Extending this translation to contexts in a pointwise fashion, we obtain the
following correctness lemma.

\begin{lemma}
  The translation $\shiftTl{-}$ from \CCICPrecs to \CCICPrec forms a syntactic model:
  \begin{enumerate}
  \item If $\Gamma \ccicprecsty t : A$ and $t \redCCIC t'$ in \CCICs then $\shiftTl{\Gamma}
    \ccicprecty \shiftTl{t} \conv \shiftTl{t'} : \shiftTl{A}$ in
    \CCICPrec;
   
  \item  If $\Gamma \ccicprecsty t : A$ then $\shiftTl{\Gamma} \ccicprecty
    \shiftTl{t} : \shiftTl{A}$.
  \end{enumerate}
\end{lemma}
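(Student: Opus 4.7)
The plan is to prove this correctness lemma by induction on the typing and reduction derivations of \CCICPrecs, preceded by an easy substitution lemma. Since the translation $\shiftTl{-}$ is purely structural and only inserts the cumulativity wrappers $\cum$, $\coe$, $\coeinv$ around products, lambdas, and applications without binding any fresh variables, one readily obtains by induction on $B$ that $\shiftTl{B\subs{u}{x}} = \shiftTl{B}\subs{\shiftTl{u}}{x}$. This lemma will be invoked whenever a typing rule or a $\beta$-style reduction performs substitution.

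For part (2), I would proceed by induction on $\Gamma \ccicprecsty t : A$. Most rules are preserved verbatim by the translation; only the three $\Pi$-related cases need attention. For \nameref{infrule:ccics-prod} at level $i+1$, the induction hypothesis gives $\shiftTl{A}, \shiftTl{B} : []_i$, so applying \nameref{infrule:cic-prod} then \nameref{infrule:ccic-cum} reconstructs the expected type $\cum(\P x : \shiftTl{A}. \shiftTl{B}) : []_{i+1}$. For abstractions, the induction hypothesis types $\l x : \shiftTl{A}. \shiftTl{t}$ at $\P x : \shiftTl{A}. \shiftTl{B}$, and \nameref{infrule:ccic-coe} lifts this to $\cum(\P\dots)$. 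For applications $t\,u$, the induction hypothesis gives $\shiftTl{t} : \cum(\P x : \shiftTl{A}. \shiftTl{B})$, so \nameref{infrule:ccic-coeinv} yields $\coeinv \shiftTl{t} : \P x : \shiftTl{A}. \shiftTl{B}$ and the application lands in $\shiftTl{B}\subs{\shiftTl{u}}{x} = \shiftTl{B\subs{u}{x}}$ by the substitution lemma. The \nameref{infrule:ccic-conv} case requires that convertibility is preserved by translation, which follows from part (1) for reductions and, for the $\eta$ axiom $\l x : A. t\,x \conv t$, from combining \nameref{infrule:coe-sect} (which gives $\shiftTl{t} \conv \coe(\coeinv \shiftTl{t})$) with \nameref{infrule:ccic-pi-eta} applied to $\coeinv \shiftTl{t}$.

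For part (1), I would proceed by induction on $t \redCCIC t'$, with the congruence rules lifting compositionally. For \nameref{redrule:prod-beta}, the translated redex $(\coeinv(\coe(\l x : \shiftTl{A}. \shiftTl{t})))\,\shiftTl{u}$ is convertible, via \nameref{infrule:coe-retr}, to $(\l x : \shiftTl{A}. \shiftTl{t})\,\shiftTl{u}$, which $\beta$-reduces to $\shiftTl{t}\subs{\shiftTl{u}}{x} = \shiftTl{t\subs{u}{x}}$. The propagation rules for $\?_{\P\dots}$ and $\err_{\P\dots}$ chain two steps: \nameref{redrule:cum-ukn}/\nameref{redrule:cum-err} first peels the outer $\cum$, then \nameref{redrule:prod-unk}/\nameref{redrule:prod-err} produces the expected $\coe$-wrapped $\l$-abstraction, matching the translation of the reduct.

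The main obstacle is handling \nameref{redrule:prod-prod}, the reduction of a cast between two product types. In \CCICPrec, the translated cast has shape $\cast{\cum(\P x : \shiftTl{A_1}. \shiftTl{B_1})}{\cum(\P y : \shiftTl{A_2}. \shiftTl{B_2})}{\shiftTl{f}}$, but $\shiftTl{f}$ is not syntactically of the form $\coe\,g$ in general, so \nameref{redrule:cum-cum} does not fire directly. The fix is to first use \nameref{infrule:coe-sect} to rewrite $\shiftTl{f} \conv \coe(\coeinv \shiftTl{f})$, after which \nameref{redrule:cum-cum} pushes the cast under $\coe$, \nameref{redrule:prod-prod} rewrites the inner cast to a $\l$-abstraction, and the outer $\coe$ wrapper matches the translation of the reduct $\l y : A_2. \dots$. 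This is precisely why the lemma states part (1) as a convertibility rather than a single-step reduction: these intermediate $\cum$-$\coe$ manipulations are definitional equalities rather than computational reductions.
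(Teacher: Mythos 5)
Your overall strategy is the same as the paper's: a structural substitution lemma, induction on typing for point~(2) with the three $\Pi$-related rules repaired by \nameref{infrule:ccic-cum}, \nameref{infrule:ccic-coe} and \nameref{infrule:ccic-coeinv}, and induction on reduction for point~(1) with $\beta$ recovered via \nameref{infrule:coe-retr}. You are in fact more explicit than the paper on several points the authors gloss over: the commutation of $\shiftTl{-}$ with substitution, the treatment of the \nameref{infrule:ccic-conv} case (which needs preservation of full conversion, including $\eta$ via \nameref{infrule:ccic-pi-eta} combined with \nameref{infrule:coe-sect}), and especially the \nameref{redrule:prod-prod} case, where your observation that $\shiftTl{f}$ must first be rewritten to $\coe\,(\coeinv\,\shiftTl{f})$ by \nameref{infrule:coe-sect} before \textsc{Cum-Cum} can fire is exactly right and is a genuine refinement of the paper's one-line claim that cast reductions "can be simulated."

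There is, however, one family of cases you do not address, and it is precisely the one the paper's proof singles out as the delicate point: the cast rules whose behaviour depends on a $\Pi$ type meeting $\?_{[]_i}$, namely \nameref{redrule:cast-prod-err} and the germ/\nameref{redrule:up-down} machinery. Because $\shiftTl{-}$ wraps every product in $\cum$, a source cast $\ascdom{f}{\P x:A.B}{\?_{[]_i}}$ (which in \CCICPrecs reduces to $\err_{\?_{[]_i}}$ by \nameref{redrule:cast-prod-err}) translates to a cast whose source type has head $\cum$ rather than $\Pi$, so the target term follows an entirely different reduction path — indeed $\cum(\P x:\shiftTl{A}.\shiftTl{B})$ is precisely the form that \emph{does} embed into $\?_{[]}$ rather than erroring. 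Your remark that "the congruence rules lift compositionally" and your focus on \nameref{redrule:prod-prod} do not cover this; one must either argue that the level discipline of \CCICPrecs prevents these redexes from arising in well-typed terms, or exhibit a matching conversion in \CCICPrec, and neither is automatic. Since the paper's own proof explicitly flags "the decomposition of $\Pi$ types into $\?_{[]}$" as the point needing simulation, leaving it out is a real gap in an otherwise careful argument.
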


\begin{proof}
For point (1), $\beta$-reduction is preserved thanks to
\nameref{infrule:coe-retr} and \shepherd{the reduction rules for casts can be simulated,
in particular the decomposition of $\Pi$ types into $\?_{[]}$}{all other rules are the same in both systems}.
Point (2) is then immediate from the observation that \nameref{infrule:ccics-prod}
can be translated to an application of \nameref{infrule:cic-prod} followed by
\nameref{infrule:ccic-cum}, \nameref{infrule:cic-abs} is translated to an
application of the same rule followed by \nameref{infrule:ccic-coe}, and
\nameref{infrule:cic-app} is modified with an application of \nameref{infrule:ccic-coeinv}.
\end{proof}

\begin{theorem}[Self-precision of \CCICPrecs embedding]
  \label{thm:ccics-sp}
  If $\ccicprecsty t : A$ then $\sp{\shiftTl{t}}{\shiftTl{A}}$ is derivable.
\end{theorem}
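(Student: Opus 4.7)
The plan is to prove the theorem by induction on the typing derivation $\Gamma \ccicprecsty t : A$, but since $t$ may be open, I would first strengthen the statement in the style of a fundamental lemma: for any judgment $\Gamma \ccicprecsty t : A$, a proof of $\sp{\shiftTl{t}}{\shiftTl{A}}$ is derivable in a ``reflexive extension'' of the translated context, obtained from $\shiftTl{\Gamma}$ by adjoining, for each $(x : T) \in \Gamma$, an auxiliary self-precision hypothesis $x_\sqsubseteq : \sp{x}{\shiftTl{T}}$. The variable rule \nameref{infrule:cic-var} is then discharged directly by $x_\sqsubseteq$, and the universe rule \nameref{infrule:cic-univ} by \nameref{infrule:univ-prec-refl-ty}.

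The key case is the product rule \nameref{infrule:ccics-prod}, which is exactly where the systematic universe shift of \CCICPrecs buys us self-precision. Here $\shiftTl{\P x : A. B} = \cum(\P x : \shiftTl{A}. \shiftTl{B})$ lives at $[]_{i+1}$, so by \nameref{infrule:univ-prec-def} self-precision unfolds into two obligations: a type-precision $\precisionType{i+1}$ of the term with itself, and a bound by $\?_{[]_{i+1}}$. The first reduces via \nameref{infrule:cum-prec-cong-ty} to self-precision of $\P x : \shiftTl{A}. \shiftTl{B}$ at level $i$, which is discharged by \nameref{infrule:pi-prec-cong} applied to the IH on $A$ and $B$; the three clauses of \nameref{infrule:pi-prec-cong} all reduce to instances of the IH on $B$ under the extended reflexive context, the cross-related one being precisely the case in which the strengthened hypothesis pays off. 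The bound by $\?_{[]_{i+1}}$ follows from \nameref{infrule:cum-prec-unk} applied to the same inner self-precision. For $\l$-abstraction I use \nameref{infrule:coe-prec-cong} to reduce to self-precision at $\P x : \shiftTl{A}. \shiftTl{B}$, then \nameref{infrule:pi-prec-def} to unfold into the three monotonicity clauses, each of which is discharged by the IH on $t$ in the reflexively extended context. For application, $\coeinv\,\shiftTl{t}$ is self-precise at the inner $\Pi$-type by \nameref{infrule:coe-prec-cong} together with \nameref{infrule:coe-retr}, and instantiating \nameref{infrule:pi-prec-def} at the self-precision of $\shiftTl{u}$ yields the result.

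The remaining cases are routine applications of the precision rules of \cref{fig:ccic-prec,fig:ccic-prec-axioms,fig:ccic-prec-properties}: inductive type formers and constructors use \nameref{infrule:list-prec-cong-ty}, \nameref{infrule:list-prec-nil}, \nameref{infrule:list-prec-cons}; the explicit cumulativity operators $\cum$, $\coe$, $\coeinv$ use \nameref{infrule:cum-prec-cong-ty} and \nameref{infrule:coe-prec-cong}; casts use the monotonicity axiom \textsc{Cast-Mon} of \cref{fig:ccic-prec-properties}; the $\err_A$, $\?_A$ and conversion rules use \nameref{infrule:err-prec-refl}, \nameref{infrule:unk-prec-refl} and the conversion of \CCICPrec on precision propositions; and the restricted eliminator $\lind{A}$ is self-precise by the generic monotonicity argument for $\catch$ sketched at the end of \cref{sec:prec-monotone-frag}, instantiated with the exceptional defaults (which are self-precise by \nameref{infrule:err-prec-refl} and \nameref{infrule:unk-prec-refl}).

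The main obstacle is calibrating the strengthened induction hypothesis so that the three clauses of \nameref{infrule:pi-prec-cong} and \nameref{infrule:pi-prec-def} are all derivable uniformly: one must observe that each clause only quantifies over self-precise inputs (the diagonal relations $a_0 \precision{A}{A} a_1$ and the cross relation $a \precision{A}{A'} a'$ between two identical types after translation), so that substituting them into the IH on $B$ (respectively $t$) keeps the reflexive context reflexive. Once this bookkeeping is set up, every remaining case matches one of the congruence, reflexivity, or monotonicity rules of \cref{fig:ccic-prec,fig:ccic-prec-axioms,fig:ccic-prec-properties}, so the induction goes through.
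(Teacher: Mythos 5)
Your overall strategy---induction on the typing derivation with a strengthened, context-indexed statement, with the product rule as the central case---is the same as the paper's, but the strengthening you chose is too weak, and the induction breaks exactly at the case you identify as key. Your reflexive extension adjoins, for each $(x:T)\in\Gamma$, a single copy of $x$ together with a self-precision witness $x_\sqsubseteq : \sp{x}{\shiftTl{T}}$, and your goal is the diagonal statement $\sp{\shiftTl{t}}{\shiftTl{A}}$. But the clauses of \nameref{infrule:pi-prec-cong} (and likewise \nameref{infrule:pi-prec-def}) quantify over \emph{two} inputs $a_0, a_1$ related by $a_0 \precision{A}{A} a_1$ and require $B[a_0/x] \precisionType{i} B[a_1/x]$: these are two distinct terms, not a single self-precise one, and a homogeneous relation at the same type is not the diagonal. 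Your induction hypothesis on $B$, living in a context with one variable $x$ and the witness $\sp{x}{\shiftTl{A}}$, can only be instantiated by substituting one term for $x$, yielding $B[a/x] \precisionType{i} B[a/x]$; it cannot produce the off-diagonal instances needed for monotonicity of $B$ as a type family. The same problem recurs at $\lambda$-abstraction and at every rule that crosses a binder, so the claim that each clause ``only quantifies over self-precise inputs'' is where the argument fails.

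The paper avoids this with a genuinely binary, parametricity-style strengthening: each context entry is duplicated into $x_0 : \shiftTl{A}_0$ and $x_1 : \shiftTl{A}_1$ together with a witness $x_\varepsilon : x_0 \precision{\shiftTl{A}_0}{\shiftTl{A}_1} x_1$, and the conclusion is the heterogeneous precision $\shiftTl{t}_0 \precision{\shiftTl{A}_0}{\shiftTl{A}_1} \shiftTl{t}_1$ between the two renamings of the translated term. The off-diagonal obligations are then discharged by instantiating $x_0 := a_0$, $x_1 := a_1$, $x_\varepsilon$ with the given proof, and the remaining monotonicity clauses follow from the binary induction hypothesis together with quasi-reflexivity (which is available because every type in the doubled context is self-precise). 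The theorem as stated is recovered by specializing the two copies in the empty context. Your treatment of the remaining cases (universes, inductives, cumulativity, $\lind{A}$) is reasonable once restated in this binary form, but as written the induction does not go through.
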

\begin{proof}
  We prove more generally that if $\Gamma \ccicprecsty t : A$ then
  \shepherd{there is some}{we can build a proof} $t'$
  such that
  $\shiftTl{\Gamma}_\varepsilon \ccicprecty t' : \shiftTl{t}_0
  \precision{\shiftTl{A}_0}{\shiftTl{A}_1} \shiftTl{t}_1$, where 
  $\shiftTl{\Gamma, x : A}_\varepsilon :=\shiftTl{\Gamma}_\varepsilon, x_0 :
  \shiftTl{A}_0, x_1 : \shiftTl{A}_1, x_\varepsilon : x_0
  \precision{\shiftTl{A}_0}{\shiftTl{A}_1} x_1$, and $\shiftTl{x}_i := x_i$. 
  The proof proceeds by induction on the typing derivation.
  \shnew{The case of $\lind{A}$ has already been outlined above, thus}
  we only treat the \shnew{other} central case where ${t = \P x : A. B}$.
  By induction hypothesis, we have
  % \begin{align*}
  %   \shiftTl{\Gamma}_\varepsilon &\ccicprecty ih_A : \shiftTl{A}_0
  %   \precision{\shiftTl{[]_i}_0}{\shiftTl{[]_i}_1} \shiftTl{A}_1 &\text{and}
  %   $\shiftTl{\Gamma}_\varepsilon, x_0 : \shiftTl{A}_0, x_1 : \shiftTl{A}_1,
  %   x_\varepsilon : \shiftTl{A}_0 \precision{\shiftTl{[]_i}_0}{\shiftTl{[]_i}_1}
  %   \shiftTl{A}_1 &\ccicprecty 
  %   ih_B : \shiftTl{B}_0 \precision{\shiftTl{[]_i}_0}{\shiftTl{[]_i}_1}
  %   \shiftTl{B}_1.
  % \end{align*}
  $\shiftTl{\Gamma}_\varepsilon \ccicprecty ih_A : \shiftTl{A}_0
  \precision{\shiftTl{[]_i}_0}{\shiftTl{[]_i}_1} \shiftTl{A}_1$ and
  $\shiftTl{\Gamma}_\varepsilon, x_0 : \shiftTl{A}_0, x_1 : \shiftTl{A}_1,
  x_\varepsilon : x_0 \precision{\shiftTl{A}_0}{\shiftTl{A}_1} x_1 \ccicprecty 
 ih_B : \shiftTl{B}_0 \precision{\shiftTl{[]_i}_0}{\shiftTl{[]_i}_1}
 \shiftTl{B}_1$,
 and need to prove that 
 ${\cum (\P x_0 : \shiftTl{A}_0. \shiftTl{B}_0)}$
   $\precision{\shiftTl{[]_{i+1}}_0}{\shiftTl{[]_{i+1}}_1}$ ${\cum (\P x_1 :
   \shiftTl{A}_1. \shiftTl{B}_1)}$. Hence, using
 \nameref{infrule:univ-prec-def}, \nameref{infrule:cum-prec-cong-ty} and
 \nameref{infrule:cum-prec-unk},
 % \km{Maybe show admissibility of this rule as an example}
 that $\P x_0 : \shiftTl{A}_0. \shiftTl{B}_0
   \precisionType{i} \P x_1 : \shiftTl{A}_1. \shiftTl{B}_1$.
 The two heterogeneous precision required by \nameref{infrule:pi-prec-cong} are
 direct consequences of $ih_A$ and $ih_B$ using \nameref{infrule:univ-prec-def}
 to relate type and term precision at level $i$.
 Finally, the monotonicity of $\shiftTl{B}_0$ and $\shiftTl{B}_1$ are
 consequences of $ih_B$ and quasi-reflexivity of precision that holds because
 every type in the context is self-precise.
\end{proof}

\shnew{
Combining this theorem with \cref{thm:dgg}, we get that the DGG holds for any \CCICPrecs context.

\begin{corollary}[Dynamic Gradual Guarantee for \CCICPrecs]
  \label{thm:dgg-frag}
  If $\ccicprecsty C : A \to \bool$, then for any $x, y$ such that $x \precision{\shiftTl{A}}{\shiftTl{A}} y$
  is derivable, also $(\coeinv\,\shiftTl{C})~x \precision{\bool}{\bool} (\coeinv\,\shiftTl{C})~y$ is.
\end{corollary}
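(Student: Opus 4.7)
The plan is to combine the two main results just established, namely \cref{thm:ccics-sp} (self-precision of the embedding) and \cref{thm:dgg} (the internal DGG for self-precise contexts), after a small unfolding of the translation $\shiftTl{-}$.

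First, I would unfold the translation at the type $A \to \bool$, which is syntactic sugar for $\P (\_ : A). \bool$. By definition of $\shiftTl{-}$, we get $\shiftTl{A \to \bool} = \cum(\shiftTl{A} \to \bool)$, since $\bool$ is not itself a $\Pi$-type and is preserved by the translation. Then, applying \cref{thm:ccics-sp} to the hypothesis $\ccicprecsty C : A \to \bool$, we obtain a derivation of
\[
  \shiftTl{C} \precision{\cum(\shiftTl{A} \to \bool)}{\cum(\shiftTl{A} \to \bool)} \shiftTl{C},
\]
that is, $\sp{\shiftTl{C}}{\cum(\shiftTl{A} \to \bool)}$.

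Next, I would transport this self-precision from $\cum(\shiftTl{A} \to \bool)$ to $\shiftTl{A} \to \bool$ using rule \nameref{infrule:coe-prec-cong}, which states that $a \precision{\cum X}{\cum Y} b$ reduces to $\coeinv a \precision{X}{Y} \coeinv b$. Instantiating with $a = b = \shiftTl{C}$ and $X = Y = \shiftTl{A} \to \bool$ yields
\[
  \coeinv\,\shiftTl{C} \precision{\shiftTl{A} \to \bool}{\shiftTl{A} \to \bool} \coeinv\,\shiftTl{C},
\]
so that $\coeinv\,\shiftTl{C}$ is self-precise at its $\shiftTl{A} \to \bool$ type in \CCICPrec.

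Finally, I would apply \cref{thm:dgg} with the \CCICPrec type $\shiftTl{A}$ and the self-precise boolean context $\coeinv\,\shiftTl{C} : \shiftTl{A} \to \bool$. Given the hypothesis $x \precision{\shiftTl{A}}{\shiftTl{A}} y$, \cref{thm:dgg} directly delivers the desired conclusion $(\coeinv\,\shiftTl{C})~x \precision{\bool}{\bool} (\coeinv\,\shiftTl{C})~y$. There is no real obstacle here beyond the minor bookkeeping of peeling off the explicit cumulativity wrapper introduced by the translation; the whole argument is a straightforward corollary of the two preceding theorems.
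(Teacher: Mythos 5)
Your proof is correct and follows exactly the route the paper takes (the paper only gives a one-sentence justification: combine \cref{thm:ccics-sp} with \cref{thm:dgg}). Your explicit unfolding of $\shiftTl{A \to \bool}$ as $\cum(\shiftTl{A} \to \bool)$ and the use of \nameref{infrule:coe-prec-cong} to peel off the cumulativity wrapper is precisely the bookkeeping the paper leaves implicit.
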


Terms that fall outside of the \CCICPrecs fragment include
recursive dependent arities such as $\mathtt{nArrow}$ (\cref{sec:internal-prec}), and pathological
terms such as $\Omega$ (\cref{sec:introduction})
that would be non-terminating in a globally gradual system such as \GCICG.
More interestingly, examples like $\mathtt{mult}^{\err}_\listT$ (\cref{sec:exceptions}) can be manually proven to be gradual
even if they do not belong to \CCICPrecs because they use $\catch$ locally. 
}

\section{A model of a reasonably gradual type theory}
\label{sec:model}

In this section we prove \cref{thm:logical-soundness}, that is
the relative consistency of \CCICPrec with a hierarchy of $n$ universes
with respect to \MLTT\footnote{With the standard type formers $\zeroType$, $\oneType$, $\twoType$, W, $\Sigma$, $\mathrm{Id}$ and $\Pi$.} with $(n+1)$ universes and a type of definitionally proof irrelevant propositions.
To do so, we exhibit a model where types are equipped
with a relation reflecting precision.
We formalized the components of this model 
(for two universes $[]_{0}$ and $[]_{1}$)
in Coq. % using the sort of strict
% propositions SProp~\cite{gilbert:hal-01859964}.
%
The construction of the model can be stratified in 3 layers:
\begin{itemize}
\item first, a
  computational layer that provides meaning to casts and exceptional terms $\err_A,
  ?_A$ ;
\item second, a relational layer that equips every type with a relation and
  defines a compatible global heterogeneous relation between elements;
\item third, a logical layer ensuring that said relations do capture
  well-behaved casts whenever all inputs are adequately related.
\end{itemize}

\paragraph{Computational layer}
The computational layer closely resembles the discrete model
of~\citet{lennonAl:toplas2022}, and we explain here its main features.
The introduction of exceptional terms follows the approach of
\ExTT~\cite{pedrotTabareau:esop2018}.
Its main point is to extend each inductive type with two new constructors, one
for $\?$ and one for $\err$.
Product types and functions are left unmodified, defining
$\?$ and $\err$ pointwise.
%
% We bring two twists to this model, though.

%
We depart from this model on universes, so that we can define the cast primitive
by case analysis on types.
Taking inspiration from~\citet{BoulierPT17}, we interpret types
as \emph{codes} when they are seen as terms, and as \emph{the semantics of those
  codes} when they are seen as types.
Thus, the standard interpretation for a term inhabiting a type is maintained,
but a function taking as argument an element of the universe $\Utsl{}_i$ can now
perform a case analysis on the code of the type.
The precise construction of the interpretation of the universe hierarchy
employs a technique presented by \citet{SattlerV20}.
We first define an inductive family $\Elcode : \univ{i} \to
\univ{i+1}$ describing codes for types and then pack it as $\Utsl_i :
\univ{i+1} := \Sigma(A : \univ{i})~\Elcode\,A$, using the first projection as decoding.
We can then define an operation 
\coqe$cast : forall (A B : Utsli), A -> B$
by induction on these codes, following the reduction rules
of~\cref{fig:ccic-reduction}.
\cref{fig:univ-code} presents a simplified version of this construction, to which
codes for the translation of the types $\err_{[]_i}$, $\?_{[]_i}$,
$\prop$, $\listT\,A$ and $[]_j$ (for $j < i$) are added in the actual development.
%
% By induction on the code $c_A : \Elcode\,A$ contained in a type $(A, c_A) :
% \Utsl_i$, we can define 
 % enjoy the expected induction principle:
% \begin{coq}
% forall (P : Utsli -> Typei), P nat -> (forall A B, P A -> (forall a, P (B a)) -> P (pi A B)) -> forall (x : Utsli), P x. 
% \end{coq}
\begin{figure}
\begin{minipage}[t]{0.4\linewidth}
% Let El {F} (X : Sigma (A : Type) F A) : Type := X.1.
\begin{coqsmall}
Let El X : Type := X.1.

Inductive code : Typei -> Typesuci :=
| code_Nat : code Nat
| code_Pi (A : Utsli) (B : El A -> Utsli) :
  code (forall a, El (B a))
| ... 
where Utsli := (Sigma(A : Typei) code A).
\end{coqsmall}
\end{minipage}%
\begin{minipage}[t]{0.6\linewidth}
\begin{coqsmall}
Fixpoint cast (A B : Utsli) : A -> B :=
  match A.2, B.2 with
  | code_Nat, code_Nat => fun n => n
  | code_Nat, code_Pi _ _ => fun _ => err _
  | code_Pi A0 A1, code_Nat => fun _ => err _
  | code_Pi A0 A1, code_Pi B0 B1 =>
    fun (f : forall a, El (A1 a)) (b : El B0) => 
    cast (A1 _) (B1 b) (f (cast B0 A0 b))
  | ... .
\end{coqsmall}
\end{minipage}
% \begin{minipage}[t]{0.38\linewidth}
% \begin{coq}
% Inductive code : Typei -> Typesuci :=
% | code_Nat : code Nat
% | code_Pi (A : Typei) (B : A -> Typei) :
%   code A -> 
%   (forall a : A, code (B a)) ->
%   code (forall a, B a)
% | ... .

% Let Utsli : Typesuci := Sigma(A : Typei) code A.
% \end{coq}
% \end{minipage}%
% \begin{minipage}[t]{0.65\linewidth}
% \begin{coq}
% Fixpoint cast (A B : Utsli) : A -> B :=
%   match A.2, B.2 with
%   | code_Nat, code_Nat -> fun n => n
%   | code_Nat, code_Pi _ _ _ _ => fun _ => err _
%   | code_Pi A0 A1 cA0 cA1, code_Nat => fun _ => err _
%   | code_Pi A0 A1 cA0 cA1, code_Pi B0 B1 cB0 cB1 =>
%     fun (f : forall a : A0, A1 a) (b : B0) => 
%     cast (A1 _, cA1 _) (B1 b, cB1 b) (f (cast B0 A0 b))
%   | ... .
% \end{coq}
% \end{minipage}
  \caption{Simplified code for the universe of codes and of cast.}
  \label{fig:univ-code}
\end{figure}

The exceptional model of \citet{pedrotTabareau:esop2018} leave the
interpretation of exceptions at the universe $[]$ unspecified.
We exploit this underspecification, and define $\err_{[]_i}$
as the unit type $\oneType$ with a single element.
% but give a much more involved
% interpretation for $\?_{[]_i}$, on par with its role in the gradual setting.
% 
% More precisely,
$\?_{[]_{i+1}}$ is interpreted by an inductive type \coqe$unknown$
(\cref{fig:implementation-precision-unknown}) closed by all
type constructors but dependent functions.
% , that is such that:
% \[\?_{i+1} \quad\cong\quad \twoType{} + \Utsl_i + \Sigma (A : \Utsl_i)~\El\,A +
%   \listT\,\?_{i+1}
%  % + \?_{i+1} \times \?_{i+1}
% \]
% \mb{Change with an inductive definition?}\km{done}
Beyond the two constructors \coqe$err_unknown$ and \coqe$unk_unknown$
interpreting respectively $\err_{\?_{[]}}$ and $\?_{\?_{[]}}$,
\coqe$univ_unknown$ allows to embed the preceding universe, \coqe$cum_unknown$
hosts any type from said preceding universe (including product types), and
\coqe$list_unknown$ can be used to embed lists of elements from $\Utsl_{i+1}$.
Additional inductive types would be represented with supplementary constructors.
The interpretation of $\?_{[]_0}$ do not use \coqe$univ_unknown$ and \coqe$cum_unknown$.
\begin{figure}
\begin{minipage}[t]{0.4\linewidth}
\begin{coqsmall}
Inductive unknown :=
| err_unknown 
| univ_unknown (A : Utsli) 
| cum_unknown (A : Utsli) (a : El A) 
| list_unknown (l : list unknown) 
| unk_unknown. 
\end{coqsmall}
\end{minipage}%
\begin{minipage}[t]{0.6\linewidth}
\begin{coqsmall}
Inductive prec_unk : unknown -> unknown -> SProp :=
| err_any : sp x -> prec_unk err_unknown x
| unk_any : sp x -> prec_unk x unk_unknown
| univ_prec : A preciseUniv B ->
  prec_unk (univ_unknown A) (univ_unknown B)
| cum_prec A a B b : sp A -> sp B -> hprec A B a b ->
  prec_unk (cum_unknown A a) (cum_unknown B b)
| list_prec l1 l2 : lift_list prec_unk l1 l2 ->
  prec_unk (list_unknown l1) (list_unknown l2).
\end{coqsmall}
\end{minipage}
  \caption{Translation of $\?_{[]}$ and its precision relation.}
  \label{fig:implementation-precision-unknown}
\end{figure}

\paragraph{Relational layer}
We now endow the translation of every type with a homogeneous relation
\coqe$prec : forall (A : Utsli), A -> A -> SProp$.
Thanks to the characterization of heterogeneous precision in
\cref{fig:ccic-prec-properties}, we can use \coqe$prec$ together with
\coqe$cast$ to obtain an heterogeneous relation on all types at the same
universe level:
\begin{coq}
Let hprec (A B : Utsli) (a : El A) (b : El B) : SProp := prec A a (cast A B b) /\ prec B b b.
\end{coq}
The construction of \coqe$prec$ proceeds first by induction on the universe
level, and then by induction on the code of the type.
The cases for $\err_{[]_i}$, $\prop$, inductive types, dependent functions and
cumulativity injection follow the formulae given for precision in~\cref{fig:ccic-prec}. In particular,
defining homogeneous precision at function types relies on heterogeneous precision
on the codomain.
% type at different points of the domain, which is why we really
% need the above characterization of heterogeneous precision using the
% homogeneous one.
%
On universes, we use precision for the smaller universe, obtained by
induction hypothesis on the universe level.
The precision for \coqe$unknown$ is described on the right
of~\cref{fig:implementation-precision-unknown}.
\coqe$err_unknown$ and \coqe$unk_unknown$ are respectively smaller and larger
than self-precise terms of any summand.
\coqe$univ_prec$ embeds the relation from $\Utsl_i$ and \coqe$cum_prec$ relate
elements of self-precise types using the heterogeneous relation determined by
$\Utsl_i$.
Finally, \coqe$list_prec$ lifts the precision on \coqe$unknown$ to lists.

\paragraph{Property layer}
Once all definitions are in place, we need to show that the relations thus defined
do characterize well-behaved casts. This is summarized by the following definitions.
\begin{definition}[Partial preorder] A partial preorder on a type $X$ is a transitive
  and quasi-reflexive relation $\precision{X}{X}$ on $X$.
\end{definition}
An element $x$ of a partial preorder $X$ is self-precise, notation $\sp{x}{X}$,
when $x \precision{X}{X} x$.  
A pair of functions $f : X \to Y$, $g : Y \to X$ between partial preorders $X,Y$ forms an
\emph{embedding projection pair}
if it satifisfies the condition of~\cref{def:eppair}.
A \emph{type family with casts} consists of a type family $B : A \to \univ{}$ equipped
with two functions $\upcast^B_{a,a'} : B\,a \to B\,a'$ and $\downcast^B_{a,a'} :
B\,a' \to B\,a$.
\begin{definition}[Indexed partial preorder]
  If $A$ is a partial preorder and $B$ a type family with cast such that each $B\,a$ is
  endowed with a relation $\precisionDiag{B\,a}$, then $B$ is an indexed partial
  preorder when
  \begin{itemize}
  \item whenever $\sp{a}{A}$, $\precisionDiag{B\,a}$ is a partial preorder;
  \item if $a \precisionDiag{A} a'$, then $(\upcast^B_{a,a'},
    \downcast^B_{a,a'})$ forms an \eppair;
  \item whenever $\sp{a}{A}$, $\sp{b}{B\,a}$, $\upcast^B_{a,a} b
    \equiprecision{B\,a} b \equiprecision{B\,a}\downcast^B_{a,a} b $;
  \item if $a_0 \precisionDiag{A} a_1 \precisionDiag{A} a_2$, $\sp{b}{B\,a_0}$,
    then $\upcast^{B}_{a_1,a_2}\upcast^B_{a_0,a_1} b \equiprecision{B\,a_2}
    \upcast^B_{a_0,a_2} b$ and \mbox{$\downcast^B_{a_2,a_1}\downcast^{B}_{a_1,a_0} b
    \equiprecision{B\,a_2} \downcast^B_{a_2,a_0} b$}
  \end{itemize}
\end{definition}

Now the model validates the following:

\begin{theorem}[Properties of precision]
  \label{thm:properties-precision}
  % Any type $A$ such that $\spType{A}{i}$, equipped with the precision relation just
  % defined, forms a partial preorder.
  The universes $\Utsl_i$ is a partial preorder for term and type precision and
  the type families $\El : \Utsl_i \to \univ{i}$ equiped with \coqe$cast$ are
  indexed partial preorders. \coqe$unknown$ is a greatest element for term
  precision on the universe.
\end{theorem}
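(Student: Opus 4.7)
The plan is to proceed by strong induction on the universe level $i$, with, at each level, a simultaneous verification of all three claims by case analysis on the codes inhabiting $\Elcode$. The induction hypothesis gives us that $\Utsl_j$ is a partial preorder and $\El : \Utsl_j \to \univ{j}$ is an indexed partial preorder for every $j < i$, plus that \coqe$unknown$ at level $j$ is maximal. Because heterogeneous precision is defined via homogeneous precision composed with casts, and because the precision relation on functions refers to heterogeneous precision on codomains, we must build the relation, the casts, and their good properties in a single mutual construction, closely mirroring the layered construction of the model.

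First I would establish the partial preorder structure on $\Utsl_i$. Quasi-reflexivity and transitivity of type precision are proven by a joint case analysis on the two (resp.\ three) codes involved, using the corresponding closure properties of the relations on subcomponents (obtained by an inner induction for $\listT$, by the indexed preorder hypothesis on $\El$ for $\Pi$, and directly by induction hypothesis for $\cum$ and for embedded lower universes). Term precision on $\Utsl_i$ is then derived via the characterization $a \precision{\Utsl_i}{\Utsl_i} b := a \precisionType{i} b \wedge b \precisionType{i} \?_{\Utsl_i}$, so its preorder structure follows from that of type precision once maximality of $\?_{\Utsl_i}$ is known.

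Next, for the indexed preorder structure of $\El$, I would verify, for each pair of related codes $A \precisionType{i} B$, the four conditions of an indexed partial preorder. Quasi-reflexivity at a self-precise code amounts to checking, constructor by constructor, that the defining clauses of \coqe$prec$ yield a partial preorder; the $\Pi$ case is the delicate one, as it requires combining monotonicity in the domain with the indexed preorder hypothesis on the codomain family, and relies on the characterization of heterogeneous precision (Equation~\ref{eq:heterogeneous-precision-decomposition}). The \eppair property for $(\cas_{A \to B}, \cas_{B \to A})$ is proved by simultaneous case analysis on both codes, matching the reduction rules of \cref{fig:ccic-reduction}: congruence cases reduce to the EP-pair property on the subcomponents by induction hypothesis; failure cases (\nameref{redrule:head-err}, \nameref{redrule:err-dom}, \nameref{redrule:err-codom}, \nameref{redrule:cast-prod-err}) rely on $\err$ being minimal; and cases routing through $\?_{\Utsl_i}$ (\nameref{redrule:up-down}, \nameref{redrule:ind-germ}) rely on maximality of $\?_{\Utsl_i}$ plus the retraction behavior of \coqe$unknown$, which we get from its inductive definition and its precision relation.

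Finally, maximality of \coqe$unknown$ is by case analysis on the code of the type of the term being mapped in, verifying that each constructor (\coqe$err_unknown$, \coqe$univ_unknown$, \coqe$cum_unknown$, \coqe$list_unknown$, \coqe$unk_unknown$) together with the clauses of \coqe$prec_unk$ makes the cast into \coqe$unknown$ monotone and captures every self-precise inhabitant. The main obstacle, and what forces the intricate stratification, is the interaction between levels in the clauses \coqe$univ_prec$ and \coqe$cum_prec$ of \coqe$prec_unk$: these refer to precision and heterogeneous precision at $\Utsl_{i}$, so closing the EP-pair argument for casts factoring through \coqe$unknown$ at level $i+1$ crucially depends on having the full indexed preorder statement for $\El$ at level $i$ already available, which is exactly what the outer induction on universe levels provides.
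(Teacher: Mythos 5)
Your proposal follows essentially the same route as the paper: build the result by induction over the (inductive-recursive) structure of codes, case-split on pairs and triples of codes following the reduction rules of casts, use preservation lemmas for each type constructor, and exploit the universe-level stratification to break the circularity in the \coqe$univ_unknown$/\coqe$cum_unknown$ clauses of \coqe$prec_unk$. The one point where you diverge, and where your sketch is weaker, is the induction measure. You take the outer induction to be on the universe level and describe the inner step as a ``case analysis on the codes,'' but for \coqe$code_Pi A B$ at level $i$ the domain and codomain family are themselves codes at level $i$, so the outer hypothesis gives you nothing for them; what you actually need is a well-founded induction on codes at a fixed level that simultaneously yields the unary statement (partial preorder), the binary statement (\eppair{} for a pair of related codes), and the ternary statement (composition of \eppairs{} for a triple), since the recursive appeals in the $\Pi$ and \nameref{redrule:up-down}/\nameref{redrule:ind-germ} cases mix these arities on subcomponents. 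The paper resolves this by performing a single induction on \emph{multisets} of codes, under which replacing a code by finitely many structurally smaller ones decreases the measure regardless of arity; your sketch leaves this well-foundedness implicit. This is a presentational gap rather than a wrong idea, but it is the one genuinely non-obvious device in the paper's argument, so you should make the measure explicit.
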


The proof of this theorem proceed by induction on multiset of codes, showing
that the relation induced by a code is partial preorder, that pairs of casts
between the partial preorders induced by a pair of codes form an \eppair and
that the eppairs induced by a triple of code compose adequately.
To that end, we prove and use a lemmas asserting that type constructors from
$\Utsl_i$ preserve partial preorders and \eppairs, e.g. the relation on $\forall
a : \El\, A, \El (B\, a)$ induced by a code \coqe$code_Pi A B$ is a partial
preorder whenever $\El\,A$ is a partial preorder and $\El \circ B$ is an indexed partial
preorder.

The properties presented in~\cref{sec:precision-properties} are consequences of
this theorem, using the decomposition of heterogeneous term relation
through any upper bound for the precision relation, the fact that any type at
universe level $i$ is bounded by $\?_{[]_{i+1}}$ and cumulativity preserves and
reflects precision.

\shnew{
\paragraph{Metatheoretical properties induced by the model}
Since $\prop$ is translated to \coqe$SProp$ and $\bot$ to $\bot$ in the model, any closed proof
$\vdash e : \bot$ induces a corresponding closed term of an empty type in the target
type theory.
This proves the relative consistency of \CCICPrec with respect to
 \MLTT equipped with enough universes and extended with a type of
strict proposition as claimed in \cref{thm:logical-soundness}.
This result can be further refined by analyzing the translation of each
reduction steps from \cref{fig:ccic-reduction} and realizing that these can be
simulated by at least one step in the target type theory, reusing a proof
technique found in \citet{lennonAl:toplas2022}.

\begin{theorem}[Normalization of \CCICPrec]
 \CCICPrec is normalizing.
\end{theorem}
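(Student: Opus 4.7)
The plan is to leverage the model constructed earlier in \cref{sec:model} as a reduction-preserving translation into \MLTT extended with strict propositions, and then pull back normalization from the target theory. Concretely, let $\llbracket - \rrbracket$ denote the translation of \CCICPrec contexts, types and terms induced by the universe of codes $\Utsl_i$, the decoding $\El$, and the recursive definition of $\cas$ described in \cref{fig:univ-code}. The preceding section already establishes that this translation sends well-typed derivations $\Gamma \ccicprecty t : A$ to well-typed terms $\llbracket \Gamma \rrbracket \vdash \llbracket t \rrbracket : \llbracket A \rrbracket$ in the target, because each $\CCICPrec$ type former is interpreted as a concrete code whose decoding is a genuine \MLTT type, and because $\prop$ is interpreted as the target's sort of strict propositions.

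The core of the proof is a simulation lemma: whenever $t \redCCIC t'$ in \CCICPrec, there is a reduction $\llbracket t \rrbracket \to^+ \llbracket t' \rrbracket$ of strictly positive length in the target theory, up to the definitional equality of \MLTT with strict propositions. I would prove this by case analysis on the reduction rules of \cref{fig:ccic-reduction,fig:ccic-prop}. The $\beta$-, $\iota$-, and $\eta$-rules, as well as the $\coe/\coeinv$ retraction and section, are immediate since they are translated to the corresponding native constructions of the target. The propagation rules for $\err$ and $\?$ are handled by the fact that these exceptional terms are translated into canonical inhabitants of the appropriate codes and match one branch of the \MLTT case analysis that defines $\cas$ (and its analogues for $\catch$). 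For the cast reduction rules, the translation of $\cast{A}{B}{t}$ unfolds to the recursively defined $\cas$ applied to the codes for $A$ and $B$, so each rule of \cref{fig:ccic-reduction} matches exactly one clause of the defining pattern match, producing at least one step of unfolding plus the translation of the right-hand side. Congruence closure via \nameref{infrule:red-cong} is preserved because the translation is compositional on all term constructors in the grammar of contexts $\Ctx$.

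With simulation in hand, normalization transfers almost mechanically. If there existed an infinite reduction sequence $t_0 \redCCIC t_1 \redCCIC t_2 \redCCIC \cdots$ starting from a well-typed \CCICPrec term, concatenating the simulating reductions would yield an infinite reduction sequence $\llbracket t_0 \rrbracket \to^+ \llbracket t_1 \rrbracket \to^+ \cdots$ in \MLTT with strict propositions, from a well-typed term of type $\llbracket A \rrbracket$. This contradicts the strong normalization of the target theory, which we take as an assumption (as in \citet{lennonAl:toplas2022}, and matching the hypothesis already used for \cref{thm:logical-soundness}). Hence every well-typed term of \CCICPrec is strongly normalizing.

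The main obstacle will be the simulation step for casts involving $\?_{[]_i}$, because the interpretation of $\?_{[]_i}$ is the inductive type $\mathtt{unknown}$ of \cref{fig:implementation-precision-unknown} with several constructors, and the rules \nameref{redrule:up-down}, \nameref{redrule:ind-germ}, and \nameref{redrule:cast-prod-err} all need to be matched against the appropriate clauses of $\cas$ at codes involving $\?$. A careful bookkeeping is required to ensure that each \CCICPrec reduction step, in particular those that perform a non-trivial rewiring of casts through $\?_{[]}$, corresponds to at least one genuine reduction (not merely a convertibility) in the target, so that the simulation is strict and the contradiction with target normalization goes through. Once this case analysis is complete for every cast rule, the remainder of the argument is routine.
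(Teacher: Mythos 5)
Your proposal is correct and follows essentially the same route as the paper: the paper's proof sketch likewise uses the model of \cref{sec:model} as a reduction-preserving translation, observes that each source reduction step is simulated by at least one step in the target, and concludes from the normalization of \MLTT with strict propositions (which the paper attributes to \citet{gilbert:hal-01859964} rather than assuming it outright). The only substantive difference is one of detail: you flesh out the case analysis on the reduction rules and flag the delicate cases around $\?_{[]_i}$, which the paper leaves implicit in its sketch.
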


\begin{proof}[Proof sketch]
  Since each step of reduction in the source is mapped to at least one step of
  reduction in the target, any infinite reduction sequence in the source maps to
  an infinite reduction sequence in the target as well.
  \citet{gilbert:hal-01859964} show that MLTT+\coqe$SProp$ is normalizing, so an
  infinite reduction sequence cannot exist in the target, and so not in the
  source either.
\end{proof}
}

\section{Extensions of \CCICPrec}
\label{sec:impl-deta-agda}

We now  discuss several extensions of \CCICPrec for future work.
% , and the provided Agda implementation.

\subsection{Observational Equality}
\label{sec:equality}

\CCICPrec features two kinds of sorts, $[]$ for (impure) computationally relevant types and $\prop$ for
definitionally proof irrelevant propositions.
The main purpose of $\prop$ is to be able to define precision
internally in \CCICPrec, by induction on types.
In the recent work of \citet{pujet:hal-03367052}, $\prop$ is used in the
same way to define a notion of observational equality by induction on types,
satisfying extensionality principles.
It turns out that internal precision and observational equality can
both be integrated in \CCICPrec.
\shnew{We can add in $\prop$ a notion of equality $x =_A y$ for any
terms $x$ and $y$ of type $A$,
% \footnote{This equality satisfies functional extensionality and uniqueness of
% identity proofs, but this is not crucial in this paper.}
together with a transport operation:
  \begin{mathpar}
    \inferrule[Transport]{\Gamma{} \vdash A : []_i \\ \Gamma{} \vdash B : []_i \\ \Gamma{} \vdash e : A =
      B \\ \Gamma{} \vdash t : A}
    {\Gamma{} \vdash \transport{A}{B}{e}{t} : B}
    \ilabel{infrule:transport}
  \end{mathpar}
%
% Here however, nothing prevent us from defining a notion of equality in
% $\prop$ because there is no casts on proposition so this tension
% disappears.%
%
Intuitively, transport can be seen as the \emph{safe} version
of cast, using a proof of equality between types in the logical layer
as a guard to ensure it never fails.}

%  the transport primitive (defined
% in \nameref{infrule:transport}) being the \emph{safe} version of casts,
% where a proof of equality between the two types 
% ensures the absence of cast errors.

There are two main interests in adding a notion of observational
equality to \CCICPrec.
First, it allows us to state many properties than cannot be only
stated using internal precision. For instance, equality is necessary
to express internally what antisymmetry means for internal precision,
and prove that it holds on types for which all terms are
self-precise.
Second, it provides a canonical way to express (non-gradual) subset
types in \CCICPrec, thus recovering a flavor of indexed inductive
types.

\subsection[Inductive Types]{Inductive Types}
\label{sec:general-inductive-types}

\shepherd{}{
A large class of inductive types can be encoded using well-founded trees
$\mathbb{W}\,A\,B$ with nodes indexed by $A: []$ of arity $B : A \to []$, \aka
$\mathbb{W}$-types~\cite{AltenkirchGHMM15,Hugunin20}.
A mild extension of \CCICPrec could add such types $\mathbb{W}\,A\,B$ with a
constructor $\texttt{sup}_{A,B} : \P(a : A)(k : B\,a \to \mathbb{W}\,A\,B) \to
\mathbb{W}\,A\,B$ and a corresponding eliminator $\catch_{\mathbb{W}\,A\,B}$.
These types would then be self-precise whenever $A$ is a self-precise type and $B$
self-precise as a type family.
In general, it is not reasonable to expect $\mathbb{W}$-types to be below unknown, that
is $\mathbb{W}\,A\,B \precisionDiag{[]_i} \?_{[]_i}$, because the constructor
\texttt{sup} takes a function as argument that cannot be faithfully encoded in
$\?_{[]_i}$.
The more restricted class of \emph{finitary} $\mathbb{W}$-types, meaning that
$B\,a$ is a finite type for any $a : A$, however supports such a
bounding rule so that finitary $\mathbb{W}$-types are also precise as terms in
the universe.
The inductive type of lists $\listT\,X$ is an instance of a finitary
$\mathbb{W}$-type with $A = 1 + X$, $B(\texttt{inl}\,\unitK) = \zeroType$
and $B(\texttt{inr}\,x) = \unit$.
} %shepherd
It is also possible to add general indexed inductive types in
$\prop$, such as less-or-equal in $\nat$.
\citet{gilbert:hal-01859964} describe a general criterion to detect
which inductive types in $\prop$ can be eliminated into $[]$. Basically,
this criterion amounts to detecting when an indexed inductive type can be
encoded with a fixpoint over its indices.
This criterion also works for \CCICPrec, and could be reused
directly.

\subsection{From \CCICPrec to a Gradual Proof Assistant}

\CCICPrec is still quite far from a real-life proof assistant.
As explained at the beginning of \cref{sec:action},
usual gradual systems are separated into two languages:
a source language where types are compared in an optimistic
way using the wildcard $\?$, and a target language with casts to
explicitely flag where those optimistic assumptions are made,
so as to be able to raise errors in case of type incompatibilities
discovered during program evaluation.
Here we concentrated on designing the target language, as our contributions
apply mostly to it, with
the expectation that the source and elaboration layers as presented in
~\citet{lennonAl:toplas2022} could be easily adapted to our extensions.
Consequently, we chose to present our type theory in a standard,
undirected fashion, rather than using the
bidirectional approach of~\citet{lennonAl:toplas2022}.
% \footnote{Such a presentation is recoverable if need be.}.
However, building an actual proof assistant involves tackling
that elaboration layer, and the many subtle points it involves, which were only
partially solved by~\cite{lennonAl:toplas2022}. One example would be the
interaction between unification (the main and crucial feature of elaboration
in \textit{e.g.} Coq) and gradual features of the language, especially consistency.
But even if one considers only the target language, incorporating it in an
actual proof assistant is no small feat. In \CCICPrec, we made a wealth of
technical choices (impredicativity of $\prop$, explicit cumulativity, and so on)
that might need to be reconsidered if one wishes to integrate
gradual features in a proof assistant that takes a different path.
In particular, a proper treatment of universe levels is a challenge.
For instance, a system more flexible (and probably easier to use) than \CCICPrec
would allow casts between types at different levels, but this would cause
an unprecedented dependency between reduction (of casts)
and universe levels, which in turn raises subtle implementation questions.

Similarly, we made some choices in the definition of precision, both in 
the rules of~\cref{fig:ccic-prec} and the properties reflected in \CCICPrec in
\cref{sec:precision-properties}. They were in part guided by the aim to make
the system as ready for use as possible,
but they might need to be reconsidered in a practical implementation.

Finally, an interesting design point pertains to the $\catch$ primitive.
Actual proof assistants usually do not rely on recursors, but instead
provide facilities for pattern-matching in various forms. Implementations
of $\catch$ should be adapted to those. In particular, a mechanism to present
monotone $\catch$ as presented in \cref{sec:prec-monotone-frag} could take
inspiration from the implementation of higher inductive types, with
path-constructors replaced by monotonicity constraints.

\section{Related Work}
\label{sec:related-work}

\emph{Effects in dependent type theory.}
Incorporating effects in type theory, specifically errors as needed for gradual 
systems, is particularly challenging.
Indeed, the presence of effects triggers a strong
tension with the metatheoretic properties of \CIC, putting logical consistency in danger,
as clarified by the Fire Triangle of~\citet{pedrotTabareau:popl2020}.
Several programming languages mix dependent types with effectful
computation, either giving up on metatheoretical properties, such as
Dependent Haskell~\cite{eisenberg2016dependent}, which allows diverging type-level expressions,
or by restricting the
dependent fragment to pure
expressions~\cite{xiPfenning:pldi98,swamyAl:popl2016}.
\shnew{\citet{stumpAl:par2010} study the sound coexistence of a type theory with diverging terms via an effect system and a mechanism of termination casts to recover totality for any term given a proof of its termination. This mechanism is used in Trellys~\cite{kimmellAl:plpv2012} and its successor Zombie~\cite{CasinghinoAl:popl2014}, which are call-by-value dependently-typed languages that separate the pure logical fragment from the impure programming fragment using consistency classifiers in the typing judgment. This integrated approach supports sound reasoning about potentially diverging programs.}
Recently, \citet{PedrotT17,pedrotTabareau:esop2018}
build up from general considerations on effects to specifically
consider exceptions in type theory. \citet{pedrotAl:icfp2019} introduce \RETT, 
exploiting universe hierarchies to introduce a separation between an effectful, inconsistent layer
and a pure, consistent one to reason about the effectful one. \CCICPrec is directly inspired by \RETT to support sound reasoning about gradual programs.

\emph{Strict propositions and observational equality.}
It has long been recognized that equality in standard \MLTT is too syntactic. 
Observational type theory~\cite{altenkirchAl:plpv2007} was proposed to address this issue, 
but only thanks to work on incorporating (definitional) irrelevance in dependent type theory \cite{Abel2012,gilbert:hal-01859964} was it possible to recently turn this proposition into a
concrete system~\cite{pujet:hal-03367052}, by using the definitionally proof-irrelevant sort to host the observational equality. The sort $\prop$ and the
precision relation of \CCICPrec are very much inspired respectively by the
sort of definitionally proof-irrelevant propositions of \citet{gilbert:hal-01859964}
and the observational equality of \citet{pujet:hal-03367052}.

\shepherd{}{
\emph{Directed type theory}.
Segal and Rezk types characterize well-behaved types in directed type theory
\cite{weaverConstructiveModelDirected2020,riehlTypeTheorySynthetic} in a fashion
very similar to self-precise types in \CCICPrec: Segal types have
(up-to-homotopy) unique composition of
morphisms (transitivity), while Rezk types satisfy a local notion of univalence
(antisymmetry).
In these works, any type is equipped with (higher) identities, an important
difference with our setting where we do not globally ensure reflexivity of the precision
relations, that is self-precision of types and terms.
} % shepherd end

\emph{Gradual typing and dependent types.} 
This work continues a line of research in combining dependent types and dynamic type checking, as first explored by~\cite{ouAl:tcs2004}, more specifically following the gradual typing approach~\cite{siekTaha:sfp2006,siekAl:snapl2015}, and extending it to a full-blown dependent type theory.
\citet{ouAl:tcs2004} study a programming language with separate dependently- and simply-typed fragments, using arbitrary runtime checks at the boundary. The blame calculus of \citet{wadlerFindler:esop2009} considers subset
types on base types, where the refinement is an arbitrary term, as in
hybrid type checking \cite{knowlesFlanagan:toplas2010}, but lacks dependent function types. 
\citet{tanterTabareau:dls2015} provide casts for subset types with decidable properties in \Coq, 
and \citet{dagandAl:jfp2018} support dependent interoperability~\cite{oseraAl:plpv2012} in \Coq. All these approaches lack the notion of precision that is central to gradual typing.
Gradual refinement types~\cite{lehmannTanter:popl2017} differ from the gradual subset types presented here in that they are an extension of liquid types~\cite{liquid:popl2008} with imprecise logical formulas, based on an SMT-decidable logic about base types. 
\citet{eremondiAl:icfp2019} study the gradualization of \CC, and propose approximate normalization to ensure decidable typechecking. Approximate normalization satisfies the dynamic gradual guarantee, but not graduality in the sense of \cite{newAhmed:icfp2018}, because casting to an imprecise type and back can yield the unknown term instead of the original term.
The most recent and complete attempt to gradualize \CIC, upon which we build in this work, is the study of \GCIC and its underlying cast calculus \CCIC~\cite{lennonAl:toplas2022}, which comes under three variants. \CCICPrec is an extension of \CCIC that allows for sound reasoning about gradual programs and, thanks to internal precision, can account for the specific form of graduality supported by  \CCICN, the normalizing conservative extension of \CIC, and can embed \CCICs as a subclass of terms that are self-precise. 
\shnew{\citet{eremondiAl:icfp2022} extends \GCIC with gradual propositional equality using runtime witnesses of plausible equality, taking inspiration from evidence tracking in Abstracting Gradual Typing~\cite{garciaAl:popl2016}.}

% \paragraph{Homotopy Type System and 2-level type theory.}

% \mb{KM: fill this!}

% \km{looking at non-terminating setting: perspective}

% \section{Conclusion and Future Work}
% \label{sec:conclusion}

\bibliography{strings,pleiad,bib,local,common}

\end{document}

\appendix

\section{Congruence rules for \CCICPrec reduction}
\label{sec:ccic-red-congruence}

\Cref{fig:ccic-reduction} omits congruence rules for space considerations, they are given
in \cref{fig:ccic-red-congruence}. The term constructors $\listT$, $\nilK[A]$, $\consK[A]$,
$\cum$, $\coe$, $\coeinv$, $\lcatch{A}$, $\Box$ and $\boxcatch{P}$ behave like constants applied
to arguments, \eg $\listT\,A \redCCIC \listT\,A'$ whenever $A \redCCIC A'$.

\begin{figure}[h]
  \begin{mathpar}
    \inferrule{A \redCCIC A'}{\P x : A.\ B \redCCIC \P x : A'.\ B} \and
    \inferrule{B \redCCIC B'}{\P x : A.\ B \redCCIC \P x : A.\ B'} \and
    \inferrule{A \redCCIC A'}{\l x : A.\ t \redCCIC \l x : A'.\ t} \and
    \inferrule{t \redCCIC t'}{\l x : A.\ t \redCCIC \l x : A.\ t'} \and
    \inferrule{t \redCCIC t'}{t\ u \redCCIC t'\ u} \and
    \inferrule{u \redCCIC u'}{t\ u \redCCIC t\ u'} \and
    % \inferrule{A \redCCIC A'}{\listT\,A \redCCIC \listT\,A'} \and
    \inferrule{A \redCCIC A'}{\nilK[A] \redCCIC \nilK[A']} \and
    \inferrule{A \redCCIC A'}{\consK[A] \redCCIC \consK[A']} \and
    % \inferrule{a \redCCIC a'}{\consK\,a\,l \redCCIC \consK\,a'\,l} \and
    % \inferrule{l \redCCIC l'}{\consK,a\,l \redCCIC \consK\,a\,l'} \and
    \inferrule{A \redCCIC A'}{\?_{A} \redCCIC \?_{A'}} \and
    \inferrule{A \redCCIC A'}{\err_{A} \redCCIC \err_{A'}} \\
    \inferrule{A \redCCIC A'}{\cast{A}{B}{t} \redCCIC \cast{A'}{B}{t}} \and
    \inferrule{B \redCCIC B'}{\cast{A}{B}{t} \redCCIC \cast{A}{B'}{t}} \and
    \inferrule{t \redCCIC t'}{\cast{A}{B}{t} \redCCIC \cast{A}{B}{t'}} \\
    % \inferrule{A \redCCIC A'}{\cum\,A \redCCIC \cum\,A'} \and
    % \inferrule{t \redCCIC t'}{\coe\,t \redCCIC \coe\,t'} \and
    % \inferrule{t \redCCIC t'}{\coeinv\,t \redCCIC \coeinv\,t'} \\
    \inferrule{A \redCCIC A'}{\lcatch{A}\,P\,h_{\nilK}\,h_{\consK}\,h_{\err}\,h_\?\,l \redCCIC \lcatch{A'}\,P\,h_{\nilK}\,h_{\consK}\,h_{\err}\,h_\?\,l} \and
    % \inferrule{P \redCCIC P'}{\lcatch{A}\,P\,h_{\nilK}\,h_{\consK}\,h_{\err}\,h_\?\,l \redCCIC \lcatch{A}\,P'\,h_{\nilK}\,h_{\consK}\,h_{\err}\,h_\?\,l} \and
    % \inferrule{h_{\nilK} \redCCIC h_{\nilK}'}{\lcatch{A}\,P\,h_{\nilK}\,h_{\consK}\,h_{\err}\,h_\?\,l \redCCIC \lcatch{A}\,P\,h_{\nilK}'\,h_{\consK}\,h_{\err}\,h_\?\,l} \and
    % \inferrule{h_{\consK} \redCCIC h_{\consK}'}{\lcatch{A}\,P\,h_{\nilK}\,h_{\consK}\,h_{\err}\,h_\?\,l \redCCIC \lcatch{A}\,P\,h_{\nilK}\,h_{\consK}'\,h_{\err}\,h_\?\,l} \and
    % \inferrule{h_{\err} \redCCIC h_{\err}'}{\lcatch{A}\,P\,h_{\nilK}\,h_{\consK}\,h_{\err}\,h_\?\,l \redCCIC \lcatch{A}\,P\,h_{\nilK}\,h_{\consK}\,h_{\err}'\,h_\?\,l} \and
    % \inferrule{h_{\?} \redCCIC h_{\?}'}{\lcatch{A}\,P\,h_{\nilK}\,h_{\consK}\,h_{\err}\,h_\?\,l \redCCIC \lcatch{A}\,P\,h_{\nilK}\,h_{\consK}\,h_{\err}\,h_\?'\,l} \and
    % \inferrule{l \redCCIC l'}{\lcatch{A}\,P\,h_{\nilK}\,h_{\consK}\,h_{\err}\,h_\?\,l \redCCIC \lcatch{A}\,P\,h_{\nilK}\,h_{\consK}\,h_{\err}\,h_\?\,l'} \\
    \inferrule{A \redCCIC A'}{\forall (x : A),P \redCCIC \forall (x : A'),P} \and
    \inferrule{P \redCCIC P'}{\forall (x : A),P \redCCIC \forall (x : A),P'} \and
    % \inferrule{P \redCCIC P'}{\Box\,P \redCCIC \Box\,P'} \and
    \inferrule{P \redCCIC P'}{\boxcatch{P} A\,h\,h_{\err}\,h_\?\,l \redCCIC \boxcatch{P'} A\,h\,h_{\err}\,h_\?\,l} \and
    % \inferrule{A \redCCIC A'}{\boxcatch{P} A\,h\,h_{\err}\,h_\?\,l \redCCIC \boxcatch{P} A'\,h\,h_{\err}\,h_\?\,l} \and
    % \inferrule{h \redCCIC h'}{\boxcatch{P} A\,h\,h_{\err}\,h_\?\,l \redCCIC \boxcatch{P} A\,h'\,h_{\err}\,h_\?\,l} \and
    % \inferrule{h_{\err} \redCCIC h_{\err}'}{\boxcatch{P} A\,h\,h_{\err}\,h_\?\,l \redCCIC \boxcatch{P} A\,h\,h_{\err}'\,h_\?\,l} \and
    % \inferrule{h_{\?} \redCCIC h_{\?}'}{\boxcatch{P} A\,h\,h_{\err}\,h_\?\,l \redCCIC \boxcatch{P} A\,h\,h_{\err}\,h_\?'\,l} \and
    % \inferrule{l \redCCIC l'}{\boxcatch{P} A\,h\,h_{\err}\,h_\?\,l \redCCIC \boxcatch{P} A\,h\,h_{\err}\,h_\?\,l'}
  \end{mathpar}

  \caption{Congruence rules omitted in \cref{fig:ccic-reduction}}
  \label{fig:ccic-red-congruence}
\end{figure}

\end{document}

\appendix

\begin{figure}
\begin{mathpar}
  \inferrule[List-Catch-Mon]
  { \Gamma \vdash P : \listT\,A \to []_i\\
    \Gamma \vdash h_{\nilK} : P\,\nilK[A]\\
    \Gamma \vdash h_{\consK} : \P (a : A)(l : \listT\,A) . P\,l {\to} P(\consK[A]\,a\,l)\\
    \Gamma \vdash h_{\err} : P\,\err_{\listT~A}\\
    \Gamma \vdash h_{\?} : P\,\?_{\listT~A}\\
    \Gamma \vdash e_{\err,\nilK} : h_{\err} \precisionDiag{\listT\,A} h_{\nilK} \\
    \Gamma \vdash e_{\nilK,\?} : h_{\nilK} \precisionDiag{\listT\,A} h_{\?} \\
    \Gamma \vdash e_{\err,\consK} : \forall (a : A)(e_a : \sp{a}{A})(l :
    \listT\,A)(e_l : \sp{l}{\listT\,A})(ih : P\,l) (e_{ih} : \sp{ih}{P\,l}),\hspace{3cm}\\
    \hspace{6cm}\err_{\listT\,A} \precision{P\,\err_{\listT\,A}}{P(\consK[A]\,a\,l)} h_{\consK}\,a\,l\,ih\\
    \Gamma \vdash e_{\consK,\?} : \forall (a : A)(e_a : \sp{a}{A})(l :
    \listT\,A)(e_l : \sp{l}{\listT\,A})(ih : P\,l) (e_{ih} : \sp{ih}{P\,l}),\hspace{3.2cm}\\
    \hspace{6cm}h_{\consK}\,a\,l\,ih \precision{P(\consK[A]\,a\,l)}{P\,\?_{\listT\,A}}\?_{\listT\,A}\\
    \Gamma \vdash e_{\consK} : \forall (a~a' : A)(e_a : a \precisionDiag{A} a')
    (l~l' : \listT\,A)(e_l : l \precisionDiag{\listT\,A} l')\hspace{4cm}\\
    \hspace{1cm}(ih : P\,l) (ih' : P l') (e_{ih} : ih \precisionDiag{P\,l} ih'),~
    h_{\consK}\,a\,l\,ih \precision{P(\consK[A]\,a\,l)}{P(\consK[A]\,a'\,l')}h_{\consK}\,a'\,l'\,ih'\\
  }
  {\Gamma \vdash \wit : \sp{(\lcatch{A}\,P\,h_{\nilK}\,h_{\consK}\,h_{\err}\,h_\?)}{\forall l, P\,l}}
  \ilabel{infrule:ccic-list-catch-mon}
\end{mathpar}
  
  \caption{Monotonicity rule for catch on lists}
  \label{fig:monotonicity-catch}
\end{figure}